\documentclass[11pt, oneside]{article}   	

\usepackage{jheppub}

\usepackage{amsmath}
\usepackage{amssymb}
\usepackage{amsfonts}
\usepackage{amsthm}
\usepackage{amsbsy}
\usepackage{array}

\usepackage{mathtools}
\usepackage[usenames,dvipsnames]{xcolor}

\usepackage{dsdshorthand}

\usepackage[vcentermath]{youngtab}
\newcommand{\myng}[1]{\,{\tiny\yng #1}\,}

\usepackage{tikz}
\usetikzlibrary{trees}
\usetikzlibrary{decorations.pathmorphing}
\usetikzlibrary{decorations.markings}
\usetikzlibrary{shapes.misc}

\tikzset{
  threept/.style={
    circle,
    draw,
    inner sep=2pt,
  },
  twopt/.style={
    circle,
    draw,
    fill=black,
    inner sep=1pt,
    minimum size=1pt
  },
  cross/.style={
    cross out,
    draw=black, 
    minimum size=7pt, 
    inner sep=0pt,
    outer sep=0pt
  },
  scalar/.style={
    thick,
    dashed,
    postaction={
      decorate,
      decoration={
        markings,
        mark=at position 0.5 with {\arrow{>}}
      }
    }
  },
  spinning/.style={
    thick,
    postaction={
      decorate,
      decoration={
        markings,
        mark=at position 0.5 with {\arrow{>}}
      }
    }
  },
  spinning no arrow/.style={
    thick,
  },
  finite with arrow/.style={
    decoration={
      snake,
      amplitude=1pt,
      segment length=6pt,
      post length=2pt
    },
    decorate,
    thick,->
  },
  finite/.style={
    decoration={
      snake,
      amplitude=1pt,
      segment length=6pt,
    },
    decorate,
    thick
  }
}

\def\be#1\ee{\begin{align}#1\end{align}}
\newcommand\cN{\mathcal{N}}
\newcommand\nn\nonumber
\newcommand*{\uniq}{\raisebox{-0.7ex}{\scalebox{1.8}{$\cdot$}}}

\newcommand{\diagramEnvelope}[1]{#1}

\newcommand{\thalf}{\tfrac{1}{2}}

\newcommand{\conj}[1]{{\overline #1}}

\newcommand{\normalD}[2]{\nabla_{#1}[{#2}]}
\newcommand{\hyperD}[2]{D_{#1}^{(#2)}}

\newcommand{\aParam}[2]{a^{#1}_{#2}}
\newcommand{\bParam}[2]{b^{#1}_{#2}}
\newcommand{\cParam}[2]{c^{#1}_{#2}}

\newcommand{\Dmzp}[1]{\cD_{#1,-0+}}

\newcommand{\Dppz}[1]{\cD_{#1,++0}}
\newcommand{\Dpmz}[1]{\cD_{#1,+-0}}

\newcommand{\Dmpz}[1]{\overline{\cD}^{-+0}_{#1}}
\newcommand{\Dmmz}[1]{\overline{\cD}^{--0}_{#1}}
\newcommand{\Dpzp}[1]{\overline{\cD}^{+0+}_{#1}}
\newcommand{\Dpzm}[1]{\overline{\cD}^{+0-}_{#1}}

\newcommand{\II}{\mathbb{I}}


\makeatletter
\def\@fpheader{\ }
\makeatother

\title{Weight Shifting Operators and Conformal Blocks}
\author{Denis Karateev$^{+}$, Petr Kravchuk$^{-}$, and David Simmons-Duffin$^{-,0}$}
\affiliation{
${}^+$SISSA and INFN, Via Bonomea 265, I-34136 Trieste, Italy \\
${}^-$Walter Burke Institute for Theoretical Physics, Caltech, Pasadena, California 91125, USA \\
${}^{\,0}$School of Natural Sciences, Institute for Advanced Study, Princeton, New Jersey 08540, USA
}
\date{}
\abstract{We introduce a large class of conformally-covariant differential operators and a crossing equation that they obey. Together, these tools dramatically simplify calculations involving operators with spin in conformal field theories. As an application, we derive a formula for a general conformal block (with arbitrary internal and external representations) in terms of derivatives of blocks for external scalars. In particular, our formula gives new expressions for ``seed conformal blocks" in 3d and 4d CFTs. We also find simple derivations of identities between external-scalar blocks with different dimensions and internal spins. We comment on additional applications, including deriving recursion relations for general conformal blocks, reducing inversion formulae for spinning operators to inversion formulae for scalars, and deriving identities between general $6j$ symbols (Racah-Wigner coefficients/``crossing kernels") of the conformal group.}

\preprint{CALT-TH 2017-031}

\begin{document}

\maketitle

\section{Introduction}

Concrete results in conformal representation theory have played a crucial role in the recent resurgence of the conformal bootstrap \cite{Rattazzi:2008pe,Rychkov:2009ij,Caracciolo:2009bx,Rattazzi:2010gj,Poland:2010wg,Rattazzi:2010yc,Vichi:2011ux,Poland:2011ey,Rychkov:2011et,ElShowk:2012ht,Liendo:2012hy,ElShowk:2012hu,Gliozzi:2013ysa,Kos:2013tga,Alday:2013opa,Gaiotto:2013nva,Berkooz:2014yda,El-Showk:2014dwa,Nakayama:2014lva,Nakayama:2014yia,Chester:2014fya,Kos:2014bka,Caracciolo:2014cxa,Nakayama:2014sba,Paulos:2014vya,Bae:2014hia,Beem:2014zpa,Chester:2014gqa,Alday:2015eya,Simmons-Duffin:2015qma,Bobev:2015jxa,Kos:2015mba,Chester:2015qca,Beem:2015aoa,Iliesiu:2015qra,Rejon-Barrera:2015bpa,Poland:2015mta,Lemos:2015awa,Kim:2015oca,Lin:2015wcg,Chester:2015lej,Alday:2015ota,Alday:2015ewa,Chester:2016wrc,Behan:2016dtz,Dey:2016zbg,Nakayama:2016knq,El-Showk:2016mxr,Li:2016wdp,Pang:2016xno,Lin:2016gcl,Alday:2016mxe,Alday:2016njk,Alday:2016jfr,Lemos:2016xke,Beem:2016wfs,Simmons-Duffin:2016wlq,Li:2017ddj,Collier:2017shs,Cornagliotto:2017dup,Gliozzi:2017hni,Gopakumar:2016cpb,Rychkov:2017tpc,Nakayama:2017vdd,Chang:2017xmr,Dymarsky:2017xzb}. Compact expressions for conformal blocks with external scalars \cite{DO1,DO2} were crucial for the development of modern numerical bootstrap techniques \cite{Rattazzi:2008pe}. Subsequently, techniques for computing blocks of operators with spin \cite{Costa:2011mg,Costa:2011dw,SimmonsDuffin:2012uy,Iliesiu:2015qra,Iliesiu:2015akf,Costa:2016xah,Costa:2016hju,Cuomo:2017wme,Echeverri:2015rwa,Echeverri:2016dun} have led to universal numerical bounds on wide classes of CFTs \cite{Iliesiu:2015qra,Dymarsky:2017xzb,TTTT}, in addition to analytical results like proofs of the conformal collider bounds \cite{Hartman:2015lfa,Hartman:2016dxc,Li:2015itl,Hofman:2016awc} and the average null energy condition \cite{Hartman:2016lgu}, and new results on the Regge limit in CFTs \cite{Afkhami-Jeddi:2016ntf,Li:2017lmh,Kulaxizi:2017ixa}. In parallel developments, harmonic analysis on the conformal group \cite{Dobrev:1977qv} has played an important role in several recent works \cite{Cornalba:2007fs,Costa:2012cb,Caron-Huot:2017vep,Gadde:2017sjg,Hogervorst:2017kbj,Hogervorst:2017sfd}, including the large-$N$ solution of the SYK model \cite{Kitaev,Maldacena:2016hyu,Polchinski:2016xgd,Murugan:2017eto}. Relationships between Witten diagrams and conformal blocks have also received recent attention \cite{Hijano:2015zsa,Nishida:2016vds,Dyer:2017zef,Chen:2017yia,Sleight:2017fpc,Castro:2017hpx}.

More sophisticated analyses will require new results for operators with spin. Several efficient techniques for dealing with spinning operators have been developed over the last decade, including index-free/embedding-space methods \cite{Giombi:2011rz,Costa:2011mg, Costa:2011dw,SimmonsDuffin:2012uy,Iliesiu:2015qra,Costa:2016hju}, the shadow formalism in the embedding space \cite{SimmonsDuffin:2012uy}, ``differential bases" for three-point functions \cite{Costa:2011dw,Iliesiu:2015qra}, and recursion relations \cite{Penedones:2015aga,Iliesiu:2015akf,Dymarsky:2017xzb}. While these methods are superior to naive approaches, they still aren't enough to solve some difficult problems. For example, the shadow formalism lets one write integral expressions for general blocks, but the integrals are difficult to evaluate in practice in all but the simplest cases. The differential basis approach lets one compute spinning blocks in terms of simpler ``seed blocks," but doesn't explain how to compute the seed blocks.\footnote{A recursion relation for seed blocks in 3d was guessed in \cite{Iliesiu:2015akf} by solving the Casimir equation order-by-order in an OPE expansion. Expressions for seed blocks in 4d were derived in \cite{Echeverri:2016dun} by solving the Casimir equation using a suitable ansatz.}

In this work, we introduce new tools that dramatically simplify computations in conformal representation theory, particularly involving operators with spin. The first key idea is to consider a (fictitious) operator $w(x)$ that transforms in a {\it finite-dimensional}\ representation $W$ of the conformal group. By studying the OPE of this highly degenerate operator with a non-degenerate operator $\cO(x)$, we find (in section~\ref{sec:differentialoperators}) a large class of conformally-covariant differential operators $\cD^{v}_{A}$ that can be used for computations.  Here, $A=1,\dots,\dim W$ is an index for $W$, and $v$ is a weight vector of $W$ (i.e.\ a common eigenvector of the Cartan subalgebra).\footnote{Some examples of such operators appear in the conformal tractor calculus, which originally deals with the case of tensor $W$~\cite{Thomas1926,bailey1994}. The theory of local twistors~\cite{Dighton1974,PENROSE1973241,Friedrich1977} deals with the case of spinor $W$. The primary interest of these theories is in curved conformal manifolds. Part of our results can be viewed as a classification of differential operators involving tractor or local twistor bundles in the conformally flat setting. It is an interesting question whether our results generalize to the curved setting.}

The action of $\cD^{v}_{A}$ on $\cO(x)$ shifts the weights of $\cO$ by the weights of $v$, in addition to introducing a free $A$ index. For this reason, we call $\cD^{v}_A$ a {\it weight-shifting operator}. For example, weight-shifting operators can increase or decrease the spin of $\cO$.\footnote{When $\cD^v_A$ lowers the spin of $\cO$, its missing spin degrees of freedom are (roughly speaking) transferred to the index $A$ for $W$.} Weight-shifting operators can be written explicitly using the embedding space formalism \cite{Dirac:1936fq,Mack:1969rr,Boulware:1970ty,Ferrara:1973eg,Ferrara:1973yt,Cornalba:2009ax,Weinberg:2010fx,Costa:2011mg,SimmonsDuffin:2012uy,Iliesiu:2015qra,Costa:2016hju}, e.g.\ (\ref{eq:vectoroperators}) in general spacetime dimensions, (\ref{eq:3Doperators}) in 3d, and (\ref{eq:4Doperators}) in 4d. However, our construction applies independently of the embedding space formalism, and in fact works for generalized Verma modules of any Lie (super-)algebra.\footnote{Our construction is based on the ``translation functor" of Zuckerman and Jantzen \cite{10.2307/1971097,jantzen}.}

A second key observation is that weight-shifting operators obey a type of crossing equation,
\be
\cD^v_{A,x_1} \<\cO_1'(x_1)\cO_2(x_2)\cO_3(x_3)\>^{(a)} &= \sum_{\cO_2',v',b} \left\{\cdots \right\}\cD^{v'}_{A,x_2} \<\cO_1(x_1) \cO_2'(x_2) \cO_3(x_3)\>^{(b)},
\label{eq:crossingfordifferentialops}
\ee
which we derive in section~\ref{sec:crossingfordifferentialoperators}.
Here, $a$ and $b$ label conformally-invariant three-point structures that can appear in a correlator of the given operators. The coefficients $\{\cdots\}$ are examples of $6j$ symbols (or Racah-Wigner coefficients) for the conformal group, which in this case are computable with simple algebra. Equation~(\ref{eq:crossingfordifferentialops}) lets us move a covariant differential operator acting on $x_1$ to an operator acting on $x_2$. As we will see, this provides enough flexibility to perform a variety of computations involving weight-shifting operators. We also introduce a diagrammatic language that makes these computations easy to understand.

As an application, in section~\ref{sec:conformalblocks} we focus on computing conformal blocks and understanding some of their properties. In section~\ref{sec:generalblockexpression}, we derive an expression for a general conformal block involving operators (both external and internal) in arbitrary representations of $\SO(d)$ in terms of derivatives of blocks with external scalars.\footnote{The rough idea is that weight-shifting operators allow us to exchange a tensor product $W\otimes V_{\De,\ell}$, where $W$ is finite-dimensional, and $V_{\De,\ell}$ is the generalized Verma module of a symmetric traceless tensor (STT) operator. This tensor product then contains many new types of generalized Verma modules that can include operators in non-STT representations of $\SO(d)$.} This generalizes the beautiful result of \cite{Costa:2011dw} for conformal blocks of symmetric traceless tensors (STTs). Weight-shifting operators also explain where the differential operators of \cite{Costa:2011dw} come from (as we discuss in section~\ref{sec:differentialbases}). Our formula can be simplified in special cases. For example, in section~\ref{sec:seedalgo} we give new expressions for so-called ``seed blocks" in 3d and 4d CFTs in terms of derivatives of scalar blocks.

Our techniques also give a new way to understand many identities and recursion relations satisfied by conformal blocks. In section~\ref{sec:DolanOsbornShifts}, we rederive and explain diagrammatically several identities relating scalar conformal blocks with different dimensions and spins.\footnote{These identities can also be understood using techniques from integrability \cite{Isachenkov:2016gim,Chen:2016bxc,Schomerus:2016epl}.} In section~\ref{sec:recursionrelationsection}, we discuss how to use derivative-based expressions for blocks to find recursion relations of the type introduced by Zamolodchikov \cite{Zamolodchikov:1985ie,Zamolodchikov:1987} and used in numerical bootstrap computations \cite{Kos:2013tga,Kos:2014bka,Dymarsky:2017xzb,Lin:2015wcg,Cho:2017oxl,Lin:2016gcl}.

In section~\ref{sec:furtherapps}, we comment on some additional applications beyond computing conformal blocks. Weight-shifting operators are helpful for studying inner products between conformal blocks that appear in inversion formulae \cite{Caron-Huot:2017vep,Gadde:2017sjg,Hogervorst:2017kbj,Hogervorst:2017sfd}. By integrating weight-shifting operators by parts, one can reduce inversion formulae for spinning operators to inversion formulae for scalars. In particular, one can express $6j$ symbols for arbitrary generalized Verma modules of the conformal group in terms of $6j$ symbols for four scalar (and two STT) representations. We pursue this idea in more detail in \cite{ShadowFuture}.

A related idea is ``spinning-down" a crossing equation: by applying spin-lowering operators to both sides of a crossing equation, we can express it in terms of a crossing equations for scalar operators. Spinning-down may be useful in the numerical bootstrap --- it could perhaps obviate the need to explicitly compute spinning blocks.

Finally, in section~\ref{sec:discussion}, we discuss further applications and future directions. We give several details and examples in the appendices.

\section{Weight-shifting operators}
\label{sec:differentialoperators}

\subsection{Finite-dimensional conformal representations}
\label{sec:finitedim}

Let $W$ be a finite-dimensional irreducible representation of $\SO(d+1,1)$.  We
can think of $W$ in two different ways. Firstly, $W$ is a vector space
with basis $e^A$ ($A=1,\dots,\dim W$), in which the action of the
conformal group is given by
\be\label{eq:eAtransformation}
g \. e^A &= D_B{}^A(g) e^B,
\ee
where $D_B{}^A(g)$ are representation matrices.

Secondly, $W$ is the conformal representation of a (very)
degenerate primary operator $w^a(x)$. Under the subgroup $\SO(1,1) \x \SO(d)
\subset \SO(d+1,1)$ generated by dilatations and and rotations, $W$
decomposes into a direct sum\footnote{$j$ is equal to the sum of all Dynkin labels of $W$, with spinor labels counted with multiplicity $\frac 1 2$, which is the same as the length of the first row of the $\SO(d+1,1)$ Young diagram for $W$.}
\be
\label{eq:Wdecomposition}
W &\to \bigoplus_{i=-j}^{j} (W_i)_{i},\qquad j\in \tfrac{1}{2}\mathbb{N}.
\ee
Here, $(\rho)_\De$ denotes a representation of $\SO(1,1)\x \SO(d)$ with dimension $\De$ and $\SO(d)$ representation $\rho$. The dimensions
in the decomposition (\ref{eq:Wdecomposition}) are integer-spaced and must be invariant under the Weyl
reflection $\De\to -\De$, which implies that they are integers or
half-integers.\footnote{In general this Weyl reflection also acts non-trivially on the $\SO(d)$ representations.}

The lowest-dimension summand in~\eqref{eq:Wdecomposition} is spanned by the multiplet $w^a(0)$ which has scaling dimension $-j$ and carries an index $a$ for the $\SO(d)$ representation $W_{-j}$ (which is always irreducible). Because it has the lowest dimension in $W$, it is annihilated by $K_\mu$ and thus is a primary. 
The
position-dependent operator $w^a(x) = e^{x\.P} w^a(0)$ is a polynomial
in $x$ of degree $2j$ because the representation $W$ contains only $2j+1$
levels of descendants. In other words, almost all descendants of $w^a(x)$ are null and this is reflected in the fact that $w^a(x)$ satisfies a particular generalization of the conformal Killing equation that admits only polynomial solutions.

We can relate these two pictures by expanding $w^a(x)$ in our basis
\be
w^a(x) &= w^a_A(x) e^A.
\ee
The coefficients in this expansion $w^a_A(x)$ are conformal Killing
(spin-)tensors. As an example, consider the adjoint representation
$\myng{(1,1)}$ of the
conformal group. Under $\SO(1,1)\x \SO(d)$, it decomposes
as (here and throughout, ``$\bullet$" denotes the trivial representation)
\be
\myng{(1,1)} &\to (\myng{(1)})_{-1} 
\oplus (\bullet \oplus
\myng{(1,1)})_0 \oplus (\myng{(1)})_{1}
\ee
The operator $w^\mu(x)$ is thus a vector with dimension $-1$. A basis for $W=\myng{(1,1)}$ is
given by $e^A \in \{K^\mu, D, M^{\mu\nu}, P^\mu\}$, and the coefficients $w^\mu_A(x)$ in this basis are the usual conformal Killing vectors on $\R^d$,
\be
\label{eq:conformalkillingvectors}
w^\mu(x) &= K^\mu -2x^\mu D + (x_\rho \de^{\mu}_{\nu} - x_\nu
\de^{\mu}_{\rho}) M^{\nu\rho} + (2 x^\mu x_\nu-x^2 \de^{\mu}_\nu) P^\nu.
\ee
In this case the differential equation satisfied by $w^\mu(x)$ is the usual conformal Killing equation,
\be
\partial^\mu w^\nu(x)+\partial^\nu w^\mu(x)-\textrm{trace}=0.
\ee

\subsection{Tensor products with finite-dimensional representations}
\label{sec:tensorproducts}

Consider a primary operator $\cO$ with $\SO(1,1)\x\SO(d)$ representation $(\rho)_{\De}$ for generic $\De$. The conformal multiplet of $\cO$ is a generalized Verma module which we denote $V_{\De,\rho}$.\footnote{Recall that a generalized Verma module (also called a parabolic Verma module) is roughly-speaking obtained by  starting with a finite-dimensional representation of a subgroup (in this case $\SO(1,1)\x\SO(d)$) and acting with arbitrary products of lowering operators (in this case the momentum generators $P_\mu$). See, e.g.\ \cite{Yamazaki:2016vqi}. This is the usual construction of long multiplets in conformal field theory.} Under a conformal transformation $x'=g(x)$, $\cO$ transforms in the usual way\footnote{When we think of $\cO^a(x)$ as an operator on a Hilbert space, then $g\.\cO^a(x)$ means $U_g \cO^a(x) U_g^{-1}$, where $U_g$ is the unitary operator implementing $g$. Equation~\eqref{eq:primarytransformation} should thus be understood as defining the action of $g$ on the \emph{value} $\cO(x)$ rather than the function $\cO$.}
\be\label{eq:primarytransformation}
g\cdot \cO^a(x) &= \Omega(x')^\De \rho^a{}_b(R(x')^{-1})\cO^b(x'),\nn\\
\Omega(x')R^\mu{}_\nu(x') &= \pdr{x'^\mu}{x^\nu},
\ee
where $R^\mu{}_\nu\in \SO(d)$ and $\rho^a{}_b(R^{-1})$ is the action of $R^{-1}$ in the representation $\rho$.

We would like to understand the decomposition of the tensor product
\be
	W \otimes V_{\De,\rho},
\ee
when $W$ is finite-dimensional. This is equivalent to finding primary operators built out of $w^a(x)$ and $\cO^b(x)$. Formally, we must take an OPE between $w^a(x)$ and $\cO^b(x)$, treating them as operators in decoupled theories.\footnote{We are not assuming that $w^a(x)$ is an operator in a physical theory --- it is simply a mathematical object that serves as a useful tool for understanding consequences of conformal symmetry.} The simplest primary in the OPE is
\be\label{eq:simplestensortprimary}
	w^a(0)\otimes \cO^b(0),
\ee
which is primary because it vanishes under the action of the special conformal generator $1\otimes K_\mu+K_\mu\otimes 1$. This particular state is not generally in an irreducible representation of $\SO(d)$. Decomposing it further, we obtain primary states in irreducible representations $\lambda\in W_{-j}\otimes\rho$ of $\SO(d)$ and with scaling dimensions $\De-j$.

To find the other primaries in the OPE, we can use the following trick. Define $M=W\otimes V_{\De,\rho}$ and consider the factor space $M'=M/(\oplus_\mu P_\mu M)$, i.e.\ treat all total derivatives in $M'$ as zero. Then any two states in $M$ differing by a descendant will be equal in $M'$. As we show in appendix~\ref{app:verma}, for generic $\De$ the tensor product $M$ decomposes into a direct sum of simple generalized Verma modules, and in this case it is easy to see that the non-zero states in $M'$ are in one-to-one correspondence with the primary states in $M$. 

We can easily find a basis for $M'$: given any expression of the form $\partial\cdots\partial w^a(0)\otimes \partial\cdots \partial \cO(0)$, we can ``integrate by parts'' and move all the derivatives to act on $w$. Thus a basis for $M'$ is given by the non-trivial states of the form\footnote{If $\cO^b(0)$ had null descendants (for example, if it itself were the primary of a finite-dimensional representation), it would be possible that some of these states are total derivatives and thus vanish in $M'$. Since we assume that $\De$ is generic, this does not happen.}
\be
	\partial_{\mu_1}\cdots \partial_{\mu_m} w^a(0)\otimes \cO^b(0).
\ee
Note that because $w$ has a finite number of non-zero descendants, $M'$ is finite-dimensional.

To find the primaries in $M$ corresponding to this basis, we need to add total derivatives with the same scaling dimension to the above basis elements. This leads to the following ansatz with some undetermined coefficients $c_k$,
\be
\label{eq:primaryansatz}
	c_1\partial_{\mu_1}\cdots\partial_{\mu_m} w^a(0) \otimes \cO^b(0)+c_2\partial_{\mu_1}\cdots\partial_{\mu_{m-1}} w^a(0) \otimes \partial_{\mu_m} \cO^b(0)+\ldots.
\ee
After projecting onto an irreducible $\SO(d)$ representation $\l\in W_{-j+m}\otimes \rho$, we obtain an ansatz for a primary in representation $(\l)_{\De-j+m}$. We can fix the coefficients $c_k$ up to an overall normalization by requiring that the state (\ref{eq:primaryansatz}) is annihilated by $1\otimes K_\mu+K_\mu\otimes 1$. In this way, we find a primary operator of scaling dimension $\De+i$ for each of the irreducible components in $W_i\otimes \rho$ and every $i=-j,\ldots,j$.

It is not hard to confirm that these primaries account for all the states in $W\otimes V_{\De,\rho}$ by checking that the $\SO(1,1)\times \SO(d)$ characters agree. We thus conclude
\be\label{eq:tensorproductdecomposition}
	W\otimes V_{\De,\rho} = \bigoplus_{i=-j}^{j}\bigoplus_{\l \in W_i\otimes \rho} V_{\De+i,\l},\qquad(\textrm{generic $\De$}).
\ee

As a simple example, consider the case where $W=\myng{(1)}$ is the vector representation of $\SO(d+1,1)$ and $\rho$ is the trivial representation of $\SO(d)$. We have the decomposition
\be
\label{eq:vectordecomposition}
	\myng{(1)}\to (\bullet)_{-1}\oplus (\myng{(1)})_0\oplus (\bullet)_{+1},
\ee
so the primary state of $W$ is the scalar $w(0)$ of scaling dimension $-1$.
We thus find
\be
	\myng{(1)}\otimes V_{\De,\bullet}=V_{\De-1,\bullet}\oplus V_{\De,\myng{(1)}}\oplus V_{\De+1,\bullet}.
\ee
According to the above discussion, we have the following ansatz for the primaries in this decomposition
\begin{align}\label{eq:exansatz}
	V_{\De-1,\bullet}:&\quad \phi_{-}(0)=w(0)\otimes \cO(0),\nn\\
	V_{\De,\myng{(1)}}:&\quad V_\mu(0)=t_1\partial_\mu w(0)\otimes \cO(0)+t_2 w(0)\otimes \partial_\mu\cO(0),\nn\\
	V_{\De+1,\bullet}:&\quad \phi_{+}(0)=b_1\partial^2 w(0)\otimes \cO(0) + b_2 \partial_\mu w(0)\otimes \partial^\mu\cO(0) + b_3 w(0)\otimes \partial^2\cO(0).
\end{align}
Recalling that $\partial_\mu$ is the same as the action of $P_\mu$ and using the conformal algebra in appendix~\ref{app:conformalalgebra}, we find
\begin{align}
(1\otimes K_\mu+K_\mu\otimes 1)\cdot \phi_-(0)&=0,\nn\\
	(1\otimes K_\mu+K_\mu\otimes 1)\cdot V_\nu(0)&=2\delta_{\mu\nu}(\De t_2-t_1)w(0)\otimes \cO(0),\nn\\
	(1\otimes K_\mu+K_\mu\otimes 1)\cdot \phi_+(0)&=2(\De b_2-d b_1)\partial_\mu w(0)\otimes \cO(0)\nonumber\\
	&\quad
	+2\left( b_3 \p{2\De-d+2}-b_2\right)w(0)\otimes \partial_\mu\cO(0).
\end{align}
It follows that these states are primary if
\be\label{eq:exsolution}
	t_1&=\Delta t_2,\nn\\
	b_1&=\frac{\De b_3}d (2\De-d+2),\quad b_2=b_3(2\De-d+2).
\ee
We must assume that $\De$ is generic because e.g.\ for $\De=1$, $V_\mu$ becomes a primary descendant of $\phi_-$, $V_\mu = \partial_\mu\phi_-$. In this special case, there are not sufficiently many primaries to account for all states of dimension $\Delta$. In particular there is no combination of descendants which gives $\partial_\mu w(0)\otimes \cO(0)$, and consequently $\myng{(1)}\otimes V_{1,\bullet}$ does not decompose into generalized Verma modules of primary operators. These subtleties will not be important in this work, and we will always assume $\Delta$ to be generic.

\subsection{Covariant differential operators from tensor products}
\label{sec:diffopsfromtensor}

Consider now the primary state (\ref{eq:primaryansatz}), and let us write it in the form
\be\label{eq:tensorprimary}
\cO'^c(x) &= e^A\otimes (\cD_A)^c{}_b \cO^b(x),
\ee
where the differential operators $\cD_A$ are defined by\footnote{Note that $\cD_A$ depends explicitly on $x$. This is because $P_\mu$ acts non-trivially on $W$ and thus these operators are translation-covariant rather than translation-invariant.}
\be\label{eq:solvedansatz}
(\cD_A)^c{}_b \cO^b(x) &\equiv \pi^c_{ab\mu_1\cdots\mu_m}\left(
c_1\partial^{\mu_1}\cdots\partial^{\mu_m} w_A^a(x) \cO^b(x)\right.\nn\\
&\left.\qquad\qquad\qquad+c_2\partial^{\mu_1}\cdots\partial^{\mu_{m-1}} w_A^a(x) \partial^{\mu_m} \cO^b(x)+\ldots\right).
\ee
Again, the $c_i$ are chosen so that $\cO'^c(0)$ is a primary transforming in the representation $(\l)_{\De'}$. Here, $\pi^c_{ab\mu_1\cdots\mu_m}$ is a projector onto the $\SO(d)$ representation $\l \in W_{-j +m}\otimes \rho$.

By construction, $\cO'$ transforms under a conformal transformation as
\be
g\.\cO'^c(x) &= \Omega(x')^{\De'} \l^c{}_d(R^{-1}(x')) \cO'^d(x').
\ee
On the other hand, we also have
\be
g\.\cO'^c(x) &= g\.e^A \otimes g\.(\cD_A\cO)^c(x)\nn\\
&= D_B{}^A(g) e^B \otimes g\.(\cD_A\cO)^c(x).
\ee
It follows that
\be
g\.(\cD_A\cO)^c(x) &=\Omega(x')^{\De'} \l^c{}_d(R^{-1}(x'))  D_A{}^B(g^{-1})(\cD_B\cO)^d(x').
\ee
In other words, $\cD_A$ takes a primary operator that transforms in $(\rho)_\De$ to a primary operator that transforms in $(\l)_{\De'}$, up to the additional action of the finite-dimensional matrix $D_B{}^A(g^{-1})$.  We summarize this situation by writing
\be
\label{eq:summary}
\cD_A : [\De,\rho] &\to [\De',\l].
\ee
Here, for all practical purposes $[\De,\rho]$ is just a convenient notation. We give it a precise meaning in appendix~\ref{app:verma}.

Notice that $\cD_A\cO$ has a lowered index for $W$, so it transforms in the same way as the basis elements of the dual representation $W^*$. For this reason, we will say that $\cD_A$ is \emph{associated} with $W^*$. Similarly, exchanging $W$ and $W^*$, $\cD^A$ is associated with $W$. This convention will be useful when we discuss the action of differential operators on tensor structures in section~\ref{sec:tensorstructures}.

This general construction shows that there exists a huge variety of conformally covariant differential operators, corresponding to tensor products with different finite-dimensional representations. In fact, as explained in appendix~\ref{app:verma}, all conformally-covariant differential operators acting on generic Verma modules arise in this way. For reference, let us summarize this result in the following 
\begin{theorem}\label{thm:operatorclassification}
The conformally-covariant operators $\cD^A:[\De,\rho]\to [\De-i,\lambda]$  associated with $W$ are (for generic $\De$) in one-to-one correspondence with the irreducible components in the tensor product decomposition
\be
\label{eq:differentialops}
	W^*\otimes V_{\De,\rho}=\bigoplus_{i=-j}^j \bigoplus_{\lambda\in (W_i)^*\otimes\rho} V_{\De-i,\lambda}.
\ee
\end{theorem}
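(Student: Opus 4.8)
The plan is to show two things: first, that every irreducible component of the tensor product decomposition \eqref{eq:differentialops} gives rise to a covariant differential operator $\cD^A$ of the stated type; and second, that these exhaust all such operators. The first direction is essentially the construction already carried out in section~\ref{sec:diffopsfromtensor}, now applied with $W$ replaced by $W^*$. Given a summand $V_{\De-i,\lambda}$ with $\lambda\in (W_i)^*\otimes\rho$ on the right-hand side of \eqref{eq:differentialops}, the associated primary state in $W^*\otimes V_{\De,\rho}$ has the form of the ansatz \eqref{eq:primaryansatz} (with $m = j - i$ derivatives on the $w$-factor, now a section of $W^*$), and rewriting it as in \eqref{eq:tensorprimary}--\eqref{eq:solvedansatz} produces the operator $\cD^A:[\De,\rho]\to[\De-i,\lambda]$, which is associated with $W$ by the $W\leftrightarrow W^*$ convention of the previous subsection. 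So the content to be proved is really the converse: completeness.

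For completeness, I would argue as follows. Suppose $\cD^A$ is any conformally-covariant differential operator associated with $W$, taking $[\De,\rho]\to[\De',\l]$; I want to produce a primary in $W^*\otimes V_{\De,\rho}$ from it. The key step is to run the construction of section~\ref{sec:diffopsfromtensor} in reverse: contract $\cD^A\cO$ with the degenerate operator $w_A(x)$ (the section of $W^*$ whose existence is guaranteed by section~\ref{sec:finitedim}) to form $e^A \otimes (\cD_A \cO)(x)$ viewed as an element of $W^* \otimes V_{\De,\rho}$, and check that covariance of $\cD^A$ forces this combination to be a primary state of the tensor product. Concretely, the transformation law derived for $\cD_A$ in section~\ref{sec:diffopsfromtensor} --- the extra factor $D_A{}^B(g^{-1})$ --- is exactly what is needed for the contraction against the $W^*$-index of $w_A$ to transform as an honest primary operator of dimension $\De'$ in representation $\l$, hence to define a highest-weight vector of $W^*\otimes V_{\De,\rho}$ annihilated by $K_\mu\otimes 1 + 1\otimes K_\mu$. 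This sets up a map from covariant operators to primaries that is inverse (up to normalization) to the construction of the forward direction, so the two sets are in bijection, and the bijection matches the grading by $i$ and the $\SO(d)$ content $\l$.

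The one remaining gap is to justify that a covariant differential operator is \emph{determined} by where it sends the primary $\cO$ (plus the finite data of which summand it lands in), i.e.\ that the correspondence is genuinely one-to-one and not just surjective. This follows because a covariant differential operator is by definition built from $\partial_\mu$, and its action on the primary $\cO(0)$, together with covariance under the full conformal group, fixes its action on every descendant $\partial_{\mu_1}\cdots\partial_{\mu_n}\cO(0)$ --- equivalently, a conformally equivariant map out of the generalized Verma module $V_{\De,\rho}$ is determined by its restriction to the generating subspace $(\rho)_\De$, by Frobenius reciprocity for generalized Verma modules (or directly: $V_{\De,\rho}$ is generated by $(\rho)_\De$ under $P_\mu$, and equivariance propagates the definition). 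I expect this bookkeeping --- making precise the claim that ``covariant differential operator'' $=$ ``equivariant map of generalized Verma modules'' and invoking the genericity of $\De$ so that $V_{\De,\rho}$ and all summands in \eqref{eq:differentialops} are irreducible (as established in appendix~\ref{app:verma}) --- to be the main obstacle, since it is where the precise definitions of $[\De,\rho]$ and of ``differential operator'' have to be pinned down; the representation-theoretic input itself is the decomposition \eqref{eq:tensorproductdecomposition}, which is already in hand.
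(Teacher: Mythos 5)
Your proposal follows the paper's own route: the forward direction is the construction of section~\ref{sec:diffopsfromtensor}, and the completeness/injectivity step you flag as the "remaining gap" is precisely the paper's lemma~\ref{lem:tautology} identifying covariant differential operators with homomorphisms $V_{\De',\lambda}\to W\otimes V_{\De,\rho}$, after which Schur's lemma and genericity of $\De$ finish the argument exactly as you say. The one imprecision in your sketch of that lemma is the phrase "equivariant map out of $V_{\De,\rho}$ determined by its restriction to the generating subspace": the differential operator is a map of bundle sections, not of the module itself, and the paper makes the dictionary precise by using translation covariance to reduce the operator to a $\mathfrak{p}$-equivariant map of jets at the origin and then invoking $V_{\De,\rho}\simeq\bigl(J^\infty_0[\De,\rho]\bigr)^*$, so that dualizing produces a homomorphism out of $V_{\De',\lambda}$ which is fixed, by the universal property of generalized Verma modules, by the image of its highest-weight vector.
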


When the Dynkin indices of $\rho$ are sufficiently large, Brauer's formula (also known as Klimyk's rule)~\cite{Brauer,MR0237712} implies that the tensor products simplify, giving
\be
	W^*\otimes V_{\De,\rho}= \bigoplus_{(\de,\pi)\in \Pi(W^*)} V_{\De+\de,\rho+\pi}.
	\label{eq:usingBrauer}
\ee
Here, $\Pi(W^*)$ denotes the weights of $W^*$ (with multiplicity). A consequence of (\ref{eq:usingBrauer}) is that for generic $\De,\rho$, the number of differential operators acting on $[\De,\rho]$ and transforming in $W$ is equal to $\dim(W^*)$. Further, each operator is labeled by a weight vector of $W^*$ (i.e.\ an element of $W^*$ which is an eigenvector of the Cartan subalgebra) and shifts $(\De,\rho)$ by that weight. For this reason, we call the $\cD^A$ {\it weight-shifting operators}.

One of the most important weight-shifting operators comes from the adjoint representation of the conformal group, $W=\myng{(1,1)}$. The tensor product $\myng{(1,1)}\otimes V_{\De,\rho}$ always contains $V_{\De,\rho}$ itself as a factor. The corresponding $\cD^A:[\De,\rho]\to [\De,\rho]$ are the usual differential operators generating the action of the conformal algebra (see e.g.\ \cite{Simmons-Duffin:2016gjk}),
\be
\cD^A &= w^A\.\ptl + \frac{\De}{d}(\ptl\.w^A) - \frac 1 2 (\ptl^\mu w^{A\nu})\mathcal{S}_{\mu\nu},
\ee
where $w^{A\mu}$ are conformal Killing vectors (\ref{eq:conformalkillingvectors}), and $\mathcal{S}_{\mu\nu}$ are the generators of $\SO(d)$ rotations in the representation $\rho$.

\subsection{Algebra of weight-shifting operators}
\label{sec:algebraofweightshifting}

What is the algebra of weight-shifting operators?\footnote{The results of this section are not used in the rest of this work. The reader should feel free to skip this section on first reading} Before answering this question, let us rephrase our construction in a slightly different language. Recall from~\eqref{eq:tensorprimary} and~\eqref{eq:solvedansatz} that we identify primaries in $W\otimes V_{\De,\rho}$ of the form  
\be
\label{eq:exampletensorstate}
\cO'^c(0) &= e^A \otimes (\cD_A)^c{}_b \cO^b(0).
\ee
Note that $\cO'^c(0)\in W\otimes V_{\De,\rho}$ but it transforms in the same way as the primary of $V_{\De',\lambda}$. This means that \eqref{eq:exampletensorstate} gives a homomorphism 
\be\label{eq:Vermahomomorphism}
	\Phi : V_{\De',\lambda} \to W\otimes V_{\De,\rho}, 
\ee
defined by mapping the primary of $V_{\De',\lambda}$ to the right hand side of \eqref{eq:exampletensorstate}. The action of $\Phi$ on descendants follows by acting with $P_\mu\otimes 1+1\otimes P_\mu$ on \eqref{eq:exampletensorstate}.

Composition of differential operators is equivalent to composition of the corresponding homomorphisms in the opposite order. Specifically, suppose
\be
\Phi_1 : V_{\De',\rho'} &\to W_1 \otimes V_{\De,\rho},\nn\\
\Phi_2 : V_{\De'',\rho''} &\to W_2 \otimes V_{\De',\rho'}.
\ee
Then
\be
(1\otimes \Phi_1)\circ\Phi_2 : V_{\De'',\rho''} &\to W_2 \otimes W_1 \otimes V_{\De,\rho}\nn\\
(1\otimes \Phi_1)(\Phi_2(\cO''(x))) &= e_2^B \otimes e_1^A \otimes \cD_{2B}\cD_{1A} \cO(x).
\label{eq:compositionofhomomorphisms}
\ee
Thus, to find the algebra of weight-shifting operators, we must express the right-hand side of (\ref{eq:compositionofhomomorphisms}) in terms of homomorphisms associated to the irreducible factors of $W_2\otimes W_1$. 

As we will see in the next section, the embedding formalism lets us define weight-shifting operators that make sense even when $\rho$ is a generic (i.e.\ not necessarily dominant) weight. For example, the spin $\ell$ of a symmetric traceless tensor operator can be written as $Z\.\pdr{}{Z}$, where $Z$ is a polarization vector. The operator $Z\.\pdr{}{Z}$ is then well-defined when acting on functions of non-integer homogeneity in $Z$.

The correct way to understand differential operators with generic weights is to consider homomorphisms between Verma modules as opposed to {\it generalized}\ Verma modules. Consider the triangular decomposition
\be
\mathfrak{g} &= \mathfrak{g}_- \oplus \mathfrak{h} \oplus \mathfrak{g}_+,
\ee
where $\mathfrak{h}$ is the Cartan subalgebra, and $\mathfrak{g}_\pm$ are generated by positive/negative roots of $\mathfrak{g}$.  Let $M_\l$ be the Verma module of $\mathfrak{g}$ with highest-weight $\l$, and denote the corresponding highest-weight vector by $x_\l$.\footnote{When $\l=(\De,\rho)$ with $\rho$ a dominant weight of $\mathfrak{so}(d)$, then $M_\l$ is reducible and contains the generalized Verma module $V_{\De,\rho}$ as a subfactor.}

Let $W$ be a finite-dimensional representation of $\mathfrak{g}$. For each weight-vector\footnote{Not to be confused with the conformal Killing tensors $w^A$ from the previous section.} $w\in W$, we can construct a $\mathfrak{g}$-homomorphism
\be
\Phi_\l^w : M_\l &\to W\otimes M_\mu,\qquad
\mu = \l - \mathrm{wt}\,w,
\ee
such that
\be
\Phi_\l^w(x_\l) &= w \otimes x_\mu + \dots.
\label{eq:highestweightverma}
\ee
Here, ``$\dots$" is a sum of terms of the form
\be
e_{\a_1}\cdots e_{\a_k} w \otimes e_{-\a_{k+1}} \cdots e_{-\a_m} x_\mu,
\ee
where $e_{\pm\a}\in \mathfrak{g}_\pm$ are raising/lowering operators.   Their coefficients are fixed by demanding that $\Phi_\l^w(x_\l)$ is $\mathfrak{g}_+$-primary, i.e.\ that it is killed by $1\otimes e_\a+e_\a\otimes 1$ for all positive roots $\a$. Finally, the action of $\Phi_\l^w$ on $\mathfrak{g}_-$-descendants of $x_\l$ is fixed by $\mathfrak{g}$-invariance. The construction of $\Phi_\l^w$ is completely analogous to the construction of $\Phi$ in (\ref{eq:Vermahomomorphism}) above. The vector (\ref{eq:highestweightverma}) is the analog of the primary state (\ref{eq:primaryansatz}).

Weight-shifting operators in the embedding space are in one-to-one correspondence with the homomorphisms $\Phi_\l^w$. In particular, they are labeled by weight-vectors of $W$. This is consistent with our argument based on Brauer's formula in the previous section.

The homomorphisms (\ref{eq:highestweightverma}) have been studied in \cite{YangBaxter}. Given two finite-dimensional representations $V,W$ with weight-vectors $v\in V$, $w\in W$, they satisfy the algebra
\be
(1\otimes \Phi_{\l-\mathrm{wt}\,v}^w)\circ  \Phi^v_\l = \Phi^{J(\l)(v\otimes w)}_\l,
\ee
where
\be
J(\l) \in \textrm{Aut}(V\otimes W)
\ee
is an invertible operator called the {\it fusion operator}. The fusion operator thus completely encodes the algebra of weight-shifting operators. It satisfies a number of interesting properties, and is closely related to solutions of the Yang-Baxter equations and integrability \cite{YangBaxter}. Most importantly for our discussion, the Arnaudon-Buffenoir-Ragoucy-Roche equation gives an explicit expression for $J(\l)$ in terms of generators of $\mathfrak{g}$ \cite{Arnaudon1998}. In principle, this answers the question posed at the beginning of this section. In practice, we will not need such a general answer in this work. We leave further exploration of the fusion operator and its applications to the future.

Another point of view on the algebra of weight-shifting operators is given by a special kind of $6j$ symbols, as we explain in appendix~\ref{app:sixjalgebra}.

\subsection{Weight-shifting operators in the embedding space}

Our construction of weight-shifting operators is extremely general, but it is inconvenient for computations because it is cumbersome to find the primary states $\cO'$. For practical computations, we can use the embedding  formalism \cite{Dirac:1936fq,Mack:1969rr,Boulware:1970ty,Ferrara:1973eg,Ferrara:1973yt,Cornalba:2009ax,Weinberg:2010fx,Costa:2011mg,SimmonsDuffin:2012uy,Iliesiu:2015qra,Costa:2016hju}, where the conformal group acts linearly. The tradeoff is that coordinates in the embedding space satisfy constraints and gauge redundancies, and we must take care to find differential operators respecting these conditions. The above construction tells us precisely when this should be possible.

The formalism described in \cite{Costa:2011mg} makes it easy to study operators in tensor representations of $\SO(d)$. Symmetric traceless tensors (STTs) of $\SO(d)$ are particularly simple. We will describe this case first in order to make contact with the examples above. However, our primary interest is in general representations, and for these it will be useful to use specialized formalisms for different spacetime dimensions.

\subsubsection{General dimensions}
\label{sec:embeddinggeneral}

In the embedding formalism, the conformal compactification of $\R^d$ is realized as the projective null cone in $\R^{d+1,1}$. We take the metric on $\R^{d+1,1}$ to be
\be
X^2 &= \eta_{mn}X^m X^n = -X^+ X^- + \sum_{\mu=1}^d X_\mu X^\mu.
\ee
A primary scalar $\cO(x)$ lifts to a function on the null cone $\cO(X)$ with homogeneity
\be
\label{eq:operatorhomogeneity}
\cO(\l X) &= \l^{-\De_\cO} \cO(X).
\ee
It is convenient to arbitrarily extend $\cO(X)$ outside the null cone, introducing the gauge redundancy
\be
	\cO(X)\sim \cO(X)+X^2\L(X).
\ee
A tensor operator $\cO^{\mu_1\cdots\mu_\ell}(X)$ lifts to a tensor $\cO^{m_1\cdots m_\ell}(X)$ in the embedding space, subject to gauge redundancies and transverseness
\be
\cO^{m_1\cdots m_\ell}(X) &\sim \cO^{m_1\cdots m_\ell}(X) + X^{m_i} \L^{m_1\cdots \widehat{m}_i \cdots m_\ell}(X),
\label{eq:gaugeinvsO}
\\
X_{m_i}\cO^{m_1\cdots m_i\cdots m_\ell}(X) &= 0,
\label{eq:transversenessO}
\ee
in addition to the homogeneity condition (\ref{eq:operatorhomogeneity}). For symmetric tensors, it is useful to introduce a polarization vector $Z^m$ and define
\be
\cO(X,Z) &\equiv \cO^{m_1\cdots m_\ell}(X)  Z_{m_1}\cdots Z_{m_\ell}.
\ee
Because of (\ref{eq:gaugeinvsO}), we must take $Z\.X=0$, and because of (\ref{eq:transversenessO}), we must identify $Z\sim Z+\l X$. Finally, when $\cO^{m_1\cdots m_\ell}$ is traceless, we can impose $Z^2=0$.

We can summarize these constraints as follows. Let $I$ be the ideal generated by $\{X^2,X\.Z,Z^2\}$, and let $R$ be the ring of functions of $(X,Z)$ invariant under $Z\to Z+\l X$.  Symmetric tensor operators are elements of $R/(R\cap I)$ which are homogeneous in both $X$ and $Z$.  For a differential operator in $X,Z$ to be well-defined on this space, it must take $R\to R$ and also preserve the ideal $R\cap I$.

The construction in section~\ref{sec:diffopsfromtensor} tells us when such operators should exist. For example, consider the case where $W=\myng{(1)}$ is the vector representation of $\SO(d+1,1)$ and $\cO(X,Z)$ has spin $\ell$ and dimension $\De$. Given the decomposition (\ref{eq:vectordecomposition}), we should be able to find differential operators with a vector index in the embedding space, taking\footnote{There will also exist differential operators producing other representations in the tensor product of the vector and spin-$\ell$ representations of $\SO(d)$ (generically there is also the hook Young diagram). According to~\eqref{eq:usingBrauer}, when acting on general (non-STT) representations generically there are $d+2$ operators corresponding to the vector representation. However, to describe these we would need a formalism with more polarization vectors as in \cite{Costa:2014rya}.}
\be\label{eq:vectoroperators_action}
\cD_m^{-0}: [\De,\ell] &\to [\De-1,\ell],\nn\\
\cD_m^{0-}: [\De,\ell] &\to [\De,\ell-1],\nn\\
\cD_m^{0+}: [\De,\ell] &\to [\De,\ell+1],\nn\\
\cD_m^{+0}: [\De,\ell] &\to [\De+1,\ell].
\ee
Our strategy for finding them is to start with a suitable ansatz and fix the coefficients by requiring that $\cD_m$ preserve $R$ and $R\cap I$. (We give more details in appendix~\ref{app:D4coefficients}.) We find
\be\label{eq:vectoroperators}
\cD_m^{-0} &= X_m, \nn\\
\cD_m^{0-} &=
\p{(\De-d + 2 - \ell)\de_m^n + X_m \pdr{}{X_n}} \p{(d - 4 + 2 \ell)  \pdr{}{Z^n}
-Z_n \frac{\ptl^2}{\ptl Z^2}},
\nn\\
\cD_m^{0+} &= (\ell+\De)Z_m + X_m Z\.\pdr{}{X},\nn\\
\cD_m^{+0} &= c_1\pdr{}{X^m} + c_2 X_m \frac{\ptl^2}{\ptl X^2} + c_3 Z_m \frac{\ptl^2}{\ptl Z\.\ptl X} + c_4 Z\.\pdr{}{X} \pdr{}{Z^m}\nn\\
&\quad + c_5 X_m Z\.\pdr{}{X}\frac{\ptl^2}{\ptl Z\.\ptl X} + c_6 Z_m Z\.\pdr{}{X} \frac{\ptl^2}{\ptl Z^2} + c_7 X_m \p{Z\.\pdr{}{X}}^2 \frac{\ptl^2}{\ptl Z^2},
\ee
where the coefficients $c_i$ are given in appendix~\ref{app:D4coefficients}. For now, we simply quote
\be
\frac{c_1}{c_2} &= -2\p{\frac d 2-1-\De}.
\ee
Thus, when acting on scalar operators $\cO(X)$, $\cD_m^{+0}$ is proportional to the familiar Todorov operator \cite{Dobrev:1975ru}
\be\label{eq:todorovop}
\cD_m^{+0} &\propto \p{\frac d 2 + X\.\pdr{}{X}}\pdr{}{X^m} - \frac 1 2 X_m \frac{\ptl^2}{\ptl X^2} + O\p{\pdr{}{Z}}.
\ee
This simplified version of $\cD^{+0}_m$ (together with $\cD^{-0}_m$) appears in tractor calculus, where it is known as Thomas operator~\cite{Thomas1926,bailey1994}.

The overall normalization of our differential operators is a convention. It is useful to choose conventions where the coefficients $c_i$ are polynomials in $\De,\ell$ of the smallest possible degree. If we like, factors of $\De,\ell$ can then be replaced with
\be
\De = -X\.\pdr{}{X},\qquad \ell = Z\.\pdr{}{Z},
\ee
so that $\cD$ can be expressed without reference to the operator it acts on.

Note that when acting on scalar $\cO$ there a unique non-vanishing operator of the lowest scaling dimension, $\cD^{-0}_m$. According to theorem~\ref{thm:operatorclassification}, this is true in general. From the discussion in section~\ref{sec:tensorproducts} it follows that this operator should correspond to multiplication by the conformal Killing tensor $w^a_A(x)$ as in~\eqref{eq:simplestensortprimary}. This gives a general way of finding $w^a_A(x)$ from the embedding space formalism.

For example, one can check that the primary operator $w(x)$ corresponding to the vector representation of the conformal group is given by
\be
	w(x)=w_m(x)e^m=e^m\cD^{-0}_m=X^m e_m=e_+ + x^\mu e_\mu + x^2 e_-,
\ee
where $e_+,e_-$ and $e_\mu$ form the light cone coordinate basis of the vector representation. It solves the equation
\be
	\ptl_\mu\ptl_\nu w(x)-\mathrm{trace}=0.
\ee
Let us now revisit the example from section~\ref{sec:tensorproducts}. 
Let $\cO(x)$ be a scalar primary of dimension $\De$, as in section~\ref{sec:tensorproducts}. We then compute\footnote{Recall that on the Poincare section we have $X=(1,x^2,x^\mu)$ and $Z=(0,2(x\cdot z),z^\mu)=z^\mu\ptl_\mu X$ where the coordinates are ordered as $(X^+,X^-,X^\mu)$.}
\be
	e^m\otimes \cD^{-0}_m\cO(x)&=w(x)\otimes \cO(x),\nn\\
	e^m\otimes \cD^{0+}_m\cO(x)&=z^\mu\p{\De\ptl_\mu w(x)\otimes \cO(x)+w(x)\otimes\ptl_\mu\cO(x)},\nn\\
	e^m\otimes \cD^{+0}_m\cO(x)&=\frac{c_1\De}{d}\ptl^2 w(x)\otimes \cO(x)+c_1\ptl_\mu w(x)\otimes \ptl^\mu\cO(x)+c_2 w(x)\otimes \ptl^2\cO(x),
\ee
where $c_i$ are as in~\eqref{eq:vectoroperators}. It is easy to see that this is consistent with~\eqref{eq:exansatz} and~\eqref{eq:exsolution}. Naturally, $e^m\otimes \cD^{0-}_m\cO(x)$ vanishes when $\cO(x)$ is a scalar.

\subsubsection{1 dimension}
\label{sec:embedding1d}

To find the most general conformally-covariant differential operators, it is useful to employ a formalism specialized to the given spacetime dimension. The simplest case is $1$-dimension, where the conformal group is $Spin(2,1)$.\footnote{We use the conventions of~\cite{Iliesiu:2015qra} for 2+1 dimensions.} The Lorentz group is $Spin(1)=\mathbb{Z}_2$ (see below) and the primary operators are labeled by a scaling dimension $\De$ and a spin $s=\pm$. We will denote the corresponding Verma modules by~$V_{\De,s}$. Because the global 2-dimensional conformal group is a product of 1-dimensional groups, the results of this section can also be applied in 2-dimensions.

Note that the simply-connected conformal group is $Spin(2,1)\simeq SL(2,\mathbb{R})$. It acts by M\"obius transformations, 
\be
	\begin{pmatrix}
	a & b \\ c & d
	\end{pmatrix}:\;
	x\to \frac{ax+b}{cx+d},\qquad  ad-bc=1.
\ee
The subgroup which fixes the origin is given by $b=0$. We can exclude special conformal transformations by setting $c=0$. The remaining subgroup is a product of dilatations $\mathbb{R}_+$ parametrized by $|a|$ and the Lorentz group $\mathbb{Z}_2$ parametrized by the sign of $a$. This is why we say that $Spin(1)=\mathbb{Z}_2$.\footnote{The fields which have spin $s=+$ are the usual scalars on the circle. The fields which have $s=-$ are anti-periodic fermions.}

The vector representation of $Spin(2,1)$ is equivalent to the symmetric square of the spinor represenation, and in the embedding formalism we can define
\be
	X_{(\alpha\beta)} = \gamma^m_{(\alpha\beta)} X_m,\qquad \gamma^m_{(\alpha\beta)}=\Omega_{\alpha\alpha'}(\gamma^m)^{\alpha'}{}_\beta.
\ee
In this notation the constraint $X^2=0$ can be solved as 
\be
	X_{(\alpha\beta)} = \chi_\alpha\chi_\beta,
\ee
where $\chi_\alpha$ is a real spinor in the fundamental representation of $SL(2,\mathbb{R})$. Note that $\chi$ is odd under the center of $SL(2,\mathbb{R})$. This parametrization has the advantage that now the embedding-space operators can be taken to depend on $\chi_\alpha$,
\be\label{eq:chiscaling}
	\cO(\lambda\chi)=\lambda^{-2\De_\cO}\cO(\chi),\quad \lambda>0.
\ee
Notice that both $\chi$ and $-\chi$ correspond to the same $X$. The correct transformation property of $\cO(\chi)$ under this transformation comes from the $\mathbb{Z}_2$-spin,
\be\label{eq:onedimcenterproperty}
\cO(-\chi)=s\cO(\chi).
\ee
This property will be important for the construction of tensor structures in section~\ref{sec:sixJ1d}.

The embedding formalism in terms of $\chi$ is useful because the conformal group still acts linearly, but now there is no analogue of the ideal $I$ which needs to be preserved by the embedding space differential operators. We have the following relation between $\chi$ and $X$ derivatives,
\be\label{eq:chiXderrelation}
	\pdr{}{\chi^\alpha} = (\gamma^m)^{\beta}{}_\alpha\chi_\beta \pdr{}{X^m}.
\ee
Using this relation in an arbitrary differential operator written in terms of $\chi$ will automatically produce the terms necessary to preserve the ideal $I$ in $X$-space. For example, we can recover the $1$-dimensional version of the operator $\cD^{+0}_m$ (c.f. \eqref{eq:todorovop}),
\be
	(\gamma^m)^{(\alpha\beta)}\pdr{}{\chi^\alpha}\pdr{}{\chi^\beta} \propto \left(X\cdot \pdr{}{X}-\frac{1}{2}\right)\pdr{}{X_m}-\frac{1}{2}X^m\pdr{{}^2}{X^2}.
\ee

A general embedding space differential operator is an arbitrary combination of $\chi_\alpha$ and $\ptl_\alpha=\pdr{}{\chi^\alpha}$. The combinations irreducible under $Spin(2,1)$ are 
\be
	\cD^{j,i}_{\alpha_1\ldots \alpha_j} = \chi_{(\alpha_1}\cdots\chi_{\alpha_{j-i}}\ptl_{\alpha_{j-i+1}}\cdots\ptl_{\alpha_{2j})},\qquad i=-j,\ldots j.
	\label{eq:general1dop}
\ee
Of course, we can also add combinations of $\chi_\alpha \ptl^\alpha$, but these simply act as scalars due to \eqref{eq:chiscaling}, so we can ignore this possibility.
By construction, this differential operator transforms in the spin-$j$ representation of $Spin(2,1)$, changes the scaling dimension by $i$, and exchanges bosons with fermions if $j$ is half-integer,
\be\label{eq:diffmap1d}
\cD^{j,i}: [\De,s] &\to [\De+i,(-1)^{2j}s].
\ee

It is easy to find the group-theoretic interpretation for $\cD^{j,i}$. Indeed, the spin-$j$ representation decomposes as
\be
	j\to \bigoplus_{i=-j}^{j} ((-1)^{2j})_i,  
\ee
which means that for a generic $\De$ we have the tensor product decomposition
\be
	j\otimes V_{\De,s} = \bigoplus_{i=-j}^{j} V_{\De+i,(-1)^{2j}s}.
\ee
Thus, we find explicitly the expected one-to-one correspondence between the differential operators $\cD^{j,i}$ and the terms in this tensor product. We also see explicitly that the differential operators are labeled by the weights of the spin-$j$ representation, in accordance with~\eqref{eq:usingBrauer}.

Let us see what our operators look like in $x$-coordinate space. It is easy to check that the usual Poincare section $X^+=1$ corresponds to $\chi^1=x$, $\chi^2=1$.\footnote{And also to minus these values, since there is a redundancy $\chi\sim -\chi$.} We can therefore write the embedding space operator in terms of the $x$-space operator as (multiplying also by $\mathrm{sign}\,\chi^2$ for $s=-$)
\be
	\cO(\chi)=\frac{1}{|\chi^2|^{2\De}}\cO\p{\frac{\chi^1}{\chi^2}}.
\ee
We therefore see that $\chi_1$ and $\chi_2$ derivatives act as
\be
	\ptl_1&=\pdr{}{\chi^1}=\pdr{}{x},\\
	\ptl_2&=\pdr{}{\chi^2}=-x\pdr{}{x}-2\De.
\ee
These formulas are valid for higher order derivatives if we follow the convention that $\De$ in the last formula is increased by $\half$ by every $\ptl_\alpha$.

\subsubsection{3 dimensions}

In 3-dimensions, we use the formalism and conventions of \cite{Iliesiu:2015qra}.\footnote{In particular, we use Lorentzian signature in this section.} The conformal group is $\SO(3,2)$, which has $\Sp(4,\R)$ as a double cover. The most general Lorentz representation is the $2\ell$-th symmetric power of the spinor representation of $\SO(2,1)$, where $\ell\in \frac 1 2\mathbb{N}$. An operator $\cO^{\a_1\cdots \a_{2\ell}}(x)$ lifts to an embedding space operator
$\cO^{a_1\cdots a_{2\ell}}(X)$
with $2\ell$ indices for the fundamental of $\Sp(4,\R)$, satisfying the homogeneity property
\be
\cO^{a_1\cdots a_{2\ell}}(\l X) &= \l^{-\De_\cO - \ell} \cO^{a_1\cdots a_{2\ell}}(X).
\ee
 It is useful to introduce a polarization spinor $S_a$, and define
\be
\cO(X,S) &\equiv S_{a_1} \cdots S_{a_{2\ell}} \cO^{a_1\cdots a_{2\ell}}(X).
\ee
The polarization spinors are constrained to satisfy
\be
S_a X^a{}_b &= 0,\quad\textrm{where}\quad X^a{}_b \equiv X^m (\G_m)^a{}_b,
\ee
where $(\G_m)^a{}_b$ are generators of the Clifford algebra of $\SO(3,2)$.  For convenience, we also introduce the notation
\be
X_{ab} = \Omega_{ac}X^c{}_b,\qquad X^{ab}=X^a{}_c \Omega^{cb},
\ee
where $\Omega_{ac}=\Omega^{ac}$ is the symplectic form for $\Sp(4,\R)$.

Arbitrary finite-dimensional representations of $\SO(3,2)$ can be obtained from tensors of the spinor representation $\mathcal{S}$. Thus, all the weight-shifting operators in 3d can be obtained from products of weight-shifting operators for $\mathcal{S}$. Under $\SO(3,2)\to \SO(1,1)\x \SO(2,1)$, we have the decomposition
\be
\mathcal{S} &\to (\mathcal{S})_{-\frac 1 2} \oplus (\mathcal{S})_{\frac 1 2}.
\ee
Thus, we should be able to find differential operators with a fundamental index for $\Sp(4,\R)$ that take
\be\label{eq:3Doperators_action}
\cD^{\pm\pm}_a:[\De,\ell] &\to \left[\De\pm \tfrac 1 2, \ell \pm \tfrac 1 2\right].
\ee
Note that again the differential operators are labeled by weights of $\mathcal{S}$, consistently with~\eqref{eq:usingBrauer}. They are given by 
\be\label{eq:3Doperators}
\cD^{-+}_a &= S_a\nn\\
\cD^{--}_a &= X_{ab} \pdr{}{S_b}\nn\\
\cD^{++}_a &= 2(\De-1)(\ptl_X)_{ab} \Omega^{bc} S_c + S_a \p{S_b \Omega^{bc}(\ptl_X)_{cd}\pdr{}{S_d}}\nn\\
\cD^{+-}_a &=
4(\De-1)(1+\ell-\De)\Omega_{ab}\pdr{}{S_b} - 2(1+\ell-\De) X_{ab} (\ptl_X)^{bc}\Omega_{cd} \pdr{}{S_d}\nn\\
&\qquad + S_a\p{\pdr{}{S_c} X_{cd} ( \ptl_X)^{de}\Omega_{ef} \pdr{}{S_f}}.
\ee
We have determined the coefficients by demanding that these operators preserve the ideal generated by $X^2$ and $S_a X^a{}_b$. The differential operators (\ref{eq:3Doperators}) are analogous to $\chi_\a$ and $\pdr{}{\chi^\a}$ in the 1-dimensional case. By taking products of them, we can build weight-shifting operators in arbitrary representations of $\SO(3,2)$, analogous to the 1d operators (\ref{eq:general1dop}). See also appendix~\ref{app:sixjalgebra}.

\subsubsection{4 dimensions}
\label{sec:basic_differential_operators_4D}

In 4d, we can use the embedding space formalism of \cite{SimmonsDuffin:2012uy,Siegel:2012di,Elkhidir:2014woa,Echeverri:2015rwa,Echeverri:2016dun,Cuomo:2017wme}. Our conventions are those of \cite{Cuomo:2017wme}. A general Lorentz representation is now labeled by two weights $(\ell,\bar \ell)$, where $\ell,\bar \ell \in \Z_{\geq 0}$.  (Spin-$\ell$ symmetric traceless tensor representations correspond to the case $\ell=\bar \ell$.) An operator $\cO^{\a_1\cdots \a_{\ell}\dot\a_1\cdots\dot\a_{\bar\ell}}(x)$ lifts to an embedding space operator
\be
\cO(X,S,\bar S) &= S_{a_1}\cdots S_{a_{\ell}} \bar S^{b_1}\cdots \bar S^{b_{\bar\ell}} \cO^{a_1\cdots a_{\ell}}_{b_1\cdots b_{\bar \ell}}(X),
\ee
where we have introduced polarization spinors $S_a,\bar S^a$ transforming as left- and right-handed spinors of $\SO(4,2)$, or equivalently fundamentals and anti-fundamentals of $\SU(2,2)$.  The polarization spinors satisfy
\be
\label{eq:idealin4d}
S_a \bar{\mathbf{X}}^{ab} =0,\qquad \bar S^a \mathbf{X}_{ab} = 0, \qquad\bar S^a S_a = 0,
\ee
where\footnote{Our conventions for the conformal algebra and embedding space in 4d are those of \cite{Cuomo:2017wme}.}
\be
\mathbf{X}_{ab} \equiv \Sigma_{ab}^m X_m,\qquad \bar{\mathbf{X}}^{ab} \equiv \bar \Sigma^{ab}_m X^m.
\ee
Let us also introduce the shorthand notation
\be
\ptl_{\bar S,a}&\equiv \frac{\partial}{\partial \overline S^a},
&
\ptl_S^a&\equiv \frac{\partial}{\partial S_{a}},
\nn\\
\partial_{ab}&\equiv\Sigma^m_{ab}\frac{\partial}{\partial X^m},
&
\overline\partial^{ab}&\equiv\overline\Sigma^{m\,ab}\frac{\partial}{\partial X^m}.
\ee

General representations of the conformal group $\SO(4,2)$ can be obtained by tensoring with the left and right-handed spinors. Thus, our algebra of differential operators is generated by those associated with the spinor representations. To label these operators, it is convenient to use~\eqref{eq:usingBrauer}. Let us denote the weights so that the highest weight of the Verma module for $\cO$ is $(2\Delta,\ell,\bar\ell)$. Then the representations $\mathcal{S}$ and $\overline{\mathcal{S}}$ consist of the following weights,
\begin{align}\label{eq:Sweights}
	\Pi(\mathcal{S})&=\{(-,+,0),(-,-,0),(+,0,+),(+,0,-)\},\\\label{eq:Sbarweights}
	\Pi(\overline{\mathcal{S}})&=\{(-,0,+),(-,0,-),(+,+,0),(+,-,0)\}.
\end{align}
Note that basis vectors for $\mathcal{S}$ are $e^a$ (so that we can contract them with $S_a$) and for $\overline{\mathcal{S}}$ the basis vectors are $e_a$.

According to~\eqref{eq:usingBrauer}, the operators $\cD^a$ associated with $\mathcal{S}$ are then labeled by the weights~\eqref{eq:Sbarweights} of $\mathcal{S}^*=\overline{\mathcal{S}}$, and the operators $\overline{\cD}_a$ associated with $\overline{\mathcal{S}}$ are labeled by the weights~\eqref{eq:Sweights} of $\overline{\mathcal{S}}^*=\mathcal{S}$. These operators have the following explicit expressions, 
\begin{align}\label{eq:4Doperators}
\cD^a_{-0+} &\equiv \overline S^a,\nn\\
\cD^a_{-0-} &\equiv \overline{\mathbf{X}}^{ab}\ptl_{\bar S,b},\nn\\
\cD^a_{++0} &\equiv {\bar a}\overline\partial^{ab}S_{b} +\overline S^a(S\overline\partial\ptl_{\bar S}),\nn\\
\cD^a_{+-0} &\equiv {\bar b} c\ptl_S^{a}+
{\bar b}\overline S^a(\ptl_S \ptl_{\bar S})+
c\mathbf{X}_{bc}\overline\partial^{ab}\ptl_S^{c} -
\overline S^{a}(\mathbf{X}_{bc}\overline\partial^{bd} \ptl_S^{c}\ptl_{\bar S,d}),\nn\\
\overline{\cD}_a^{-+0} &\equiv S_{a},\nn\\
\overline{\cD}_a^{--0} &\equiv \mathbf{X}_{ab}\ptl_S^b,\nn\\
\overline{\cD}_a^{+0+} &\equiv  a\partial_{ab}\overline S^b +S_{a}(\overline S\partial \ptl_S),\nn\\
\overline{\cD}_a^{+0-} &\equiv b c\ptl_{\bar S,a} +
 bS_{a}(\ptl_{\bar S}\ptl_S)+
 c\overline{\mathbf{X}}^{bc}\partial_{ab}\ptl_{\bar S,c}-
S_{a}(\overline{\mathbf{X}}^{bc}\partial_{bd}   \ptl_{\bar S,c}\ptl_S^d),
\end{align}
where
\be
a &= 1-\De+\tfrac \ell 2 -\tfrac {\bar \ell} 2,
& 
\bar a &= 1-\De - \tfrac \ell 2 + \tfrac {\bar \ell} 2,\nn\\
b &= 2(\bar \ell+1),
&
\bar b &= 2(\ell+1),\nn\\
c &= -2 + \De - \tfrac{\ell+\bar\ell}{2}.
\ee
The coefficients above come from requiring that the operators preserve the ideal generated by the relations (\ref{eq:idealin4d}), together with $X^2=0$. We have added these operators to the \texttt{CFTs4D} Mathematica package described in~\cite{Cuomo:2017wme}.

\section{Crossing for differential operators}
\label{sec:crossingfordifferentialoperators}

The results of section~\ref{sec:differentialoperators} give us a large variety of conformally-covariant differential operators. In the present section we consider their action on conformally-invariant\footnote{We are making a distinction between conformally-covariant and conformally-invariant objects. For us, the former carry finite-dimensional $\SO(d+1,1)$ labels, whereas the latter do not.} correlation functions of local operators. The result of such an action is a conformally-covariant $n$-point function, which can also be interpreted as a conformally-invariant $(n+1)$-point function that includes the degenerate field $w^a(x)$. We will first describe the structure of such correlation functions and then establish a convenient graphical notation for the action of the differential operators. This will help us elucidate a rich structure of such actions at the end of this section.

\subsection{Conformally-covariant tensor structures}
\label{sec:tensorstructures}

Consider an $n$-point correlation function with an additional formal insertion of an element $e^A$ of the finite-dimensional representation $W$ of the conformal group $\SO(d+1,1)$,
\be\label{eq:covariantnpoint}
	\<\cO_1^{a_1}(x_1)\cdots \cO_n^{a_n}(x_n)\>^A\equiv \<\cO_1^{a_1}(x_1)\cdots \cO_n^{a_n}(x_n)e^A\>.
\ee
Note that this is a purely formal construct, i.e.\ this expression is simply a shorthand for a function of $n$ points which carries indices $a_i$, $A$, and has transformation properties identical to those satisfied by a correlation function under the assumption that
\be
	U_g e^A U^{-1}_g=g\cdot e^A,
\ee
and $g\cdot e^A$ is defined by~\eqref{eq:eAtransformation}.

As discussed in section~\ref{sec:finitedim}, we can also view~\eqref{eq:covariantnpoint} as a $(n+1)$-point conformally-invariant correlation function with the primary $w^b(y)$ of $W$,
\be\label{eq:covarianttonplus1}
	\<\cO_1^{a_1}(x_1)\cdots \cO_n^{a_n}(x_n) w^b(y)\>\equiv \<\cO_1^{a_1}(x_1)\cdots \cO_n^{a_n}(x_n) e^A\> w^{b}_A(y),
\ee
subject to the conformal Killing differential equation satisfied by $w^b(y)$. This interpretation will be useful to us later on. In this section we stick with~\eqref{eq:covariantnpoint}.

Similarly to the usual conformally-invariant correlation functions, we have an expansion in tensor structures,
\be
	\<\cO_1^{a_1}(x_1)\cdots \cO_n^{a_n}(x_n)e^A\>=\mathbb{T}^{a_1\ldots a_n,A}_I(x_i) g^I(\bu),
\ee
which now carry the $\SO(d+1,1)$ index $A$. Here $\bu$ are the conformal cross-ratios of points $x_i$. The structures $\mathbb{T}^{a_1\ldots a_n,A}$ can be constructed using embedding space methods, since there one explicitly works with objects which transform in fundamental representations of $\SO(d+1,1)$. In this subsection we are going to classify such tensor structures by extending the conformal frame approach of~\cite{Mack:1976pa,Kravchuk:2016qvl}.

The basic idea is to maximally use conformal symmetry to bring as many $x_i$ as possible to some standard positions $x'_i$. The resulting configuration $x'_i$ will be invariant under the subgroup $\mathcal{G}_n\subset \SO(d+1,1)$ of the conformal group that stabilizes $n$ points. In particular
\be
\mathcal{G}_n &= \begin{cases}
\SO(1,1)\times \SO(d) & n=2,\\
\SO(d+2-m) & n\geq 3,
\end{cases}
\ee
where $m=\min(n,d+2)$. The tensor $\mathbb{T}(x'_i)$ transforms as an element in\footnote{In writing a tensor product of representations of different groups, we assume that each representation is restricted to the largest common subgroup. In (\ref{eq:Ttensorspace}), we implicitly restrict $W$ to $\SO(1,1)\x \SO(d)\subset \SO(d+1,1)$.}
\be
\label{eq:Ttensorspace}
	W\otimes \bigotimes_{k=1}^n(\rho_k)_{\De_k},
\ee
and by construction is invariant under $\mathcal{G}_n$. It is easy to check~\cite{Kravchuk:2016qvl} that this is the only restriction for the tensor $\mathbb{T}^{a_1\ldots a_n,A}(x'_i)$ and the conformally-covariant tensor structures are then in one-to-one correspondence with the invariants of $\mathcal{G}_n$,\footnote{The notation $(\rho)^H$ denotes the $H$-invariant subspace of $\rho$, where $\rho$ is a representation of $G$ and $H\subseteq G$.}
\be
\left(W\otimes \bigotimes_{k=1}^n(\rho_k)_{\De_k}\right)^{\mathcal{G}_n}.
\ee
In practice we always use the decomposition~\eqref{eq:Wdecomposition} in this formula and identify the tensor structures with
\be\label{eq:cfrule}
\bigoplus_{i=-j}^{j}\left((W_i)_i\otimes \bigotimes_{k=1}^n(\rho_k)_{\De_k}\right)^{\mathcal{G}_n}.
\ee

\subsection{Tensor structures and diagrams}

Let us work through some examples of covariant $n$-point functions and the counting rule (\ref{eq:cfrule}). At the same time, we will introduce a useful diagrammatic language for describing tensor structures and differential operators.

\subsubsection{Invariant two-point functions}

Let us denote a conformally-invariant two-point structure by 
\be
\<\cO_1\cO_2\> \quad=\quad
\diagramEnvelope{\begin{tikzpicture}[anchor=base,baseline]
	\node (opO) at (-1,0) [left] {$\cO_1$};
	\node (opOprime) at (1,0) [right] {$\cO_2$};
	\node (vert) at (0,0) [twopt] {};
	\draw [spinning] (vert) -- (opO);
	\draw [spinning] (vert) -- (opOprime);
\end{tikzpicture}}.
\ee
It is well-known that there is at most one such structure, but let us re-derive this fact in the language of section~\ref{sec:tensorstructures}, where it corresponds to the case $n=2$ and $W=\bullet$.

Given $x_1$ and $x_2$, we can apply a conformal transformation to set $x_1=0$ and $x_2=\infty$. Then the group $\mathcal{G}_2=\SO(1,1)\times \SO(d)$ which fixes the two points consists of dilatations and rotations around $0$. Sending the second operator to infinity has the effect that $\cO_2$ effectively changes the sign of its scaling dimension, and transforms in the reflected representation\footnote{Given a representation $\rho$ with generators $\rho_{\mu\nu}$ the reflected representation is defined as $\rho^P_{\mu\nu}=P^{\mu'}_\mu P^{\nu'}_\nu \rho_{\mu'\nu'}$, where $P$ is a spatial reflection matrix. Formally, conjugating by reflection is an outer automorphism of $\SO(d)$, and hence permutes the representations of $\SO(d)$.} $\rho_2^P$ under $\SO(d)$. Thus, two-point structures correspond to the $\mathcal{G}_2$-invariants in
\be
	(\rho_1)_{\Delta_1}\otimes (\rho_2^P)_{-\Delta_2}.
\ee
There is at most one such invariant, which exists iff $\rho_1=(\rho_2^P)^*$ and $\Delta_1=\Delta_2$. The dual-reflected representation, which we denote by $(\rho_2^P)^*\equiv \rho_2^\dag$ is the same as the complex conjugate representation in Lorentzian signature.

\subsubsection{Differential operators}

A differential operator $\cD^A:\cO\to \cO'$ takes a conformally-invariant structure for $\cO$ to a conformally-covariant structure for $\cO'$, or equivalently an invariant structure for $\cO'$ and $W$:
\be
\cD^A\<\cO \cdots\> &\sim \<e^A \cO' \cdots\>.
\ee
We denote such a differential operator by 
\begin{equation}
\cD^{(a)A} \quad=\quad
\diagramEnvelope{\begin{tikzpicture}[anchor=base,baseline]
	\node (vert) at (0,0) [threept] {$a$};
	\node (opO) at (-0.5,-1) [below] {$\cO$};
	\node (opOprime) at (-0.5,1) [above] {$\cO'$};
	\node (opFinite) at (1,0) [right] {$W$};	
	\draw [spinning] (vert)-- (opOprime);
	\draw [spinning] (opO) -- (vert);
	\draw [finite with arrow] (vert) -- (opFinite);
\end{tikzpicture}}.
\label{eq:diffoppicture}
\end{equation}
The label $a$ runs over the possible operators classified by theorem~\ref{thm:operatorclassification}. We use a wavy line to indicate a finite-dimensional representation. 

\subsubsection{Covariant two-point functions}
\label{sec:convarianttwopt}
Consider acting with a differential operator $\cD^{(m)A}:[\De_1,\rho_1]\to [\De'_1,\lambda_1]$ on an invariant two-point function. In diagrammatic language, this is denoted by connecting an outgoing arrow from the two-point function with an incoming arrow for the differential operator,
\be
(\cD^{(m)A})^c{}_a\<\cO_1^a(x_1)\cO_2^b(x_2)\> \quad = \quad
\diagramEnvelope{\begin{tikzpicture}[anchor=base,baseline]
	\node (vert) at (0,0) [threept] {$m$};
	\node (opO1mid) at (-0.35,-0.8) [twopt] {};
	\node (opO1) at (-0.8,-1.6) [below] {$\cO_2$};
	\node (opO2) at (-0.5,1) [above] {$\cO_1'$};
	\node (opW) at (1,0) [right] {$W$};		
	\draw [spinning] (opO1mid) -- (vert);
	\draw [spinning] (opO1mid) -- (opO1);
	\draw [spinning] (vert)-- (opO2);
	\draw [finite with arrow] (vert)-- (opW);
\end{tikzpicture}}.
\label{eq:diagramforcovarianttwopt}
\ee
The result can be interpreted as a covariant two-point structure for $\cO_1'$, $\cO_2$, and $W$.
Such structures are counted by $\SO(1,1)\x\SO(d)$-invariants in
\be
	\bigoplus_{i=-j/2}^{j/2} (W_i)_{i}\otimes (\lambda_1)_{\De'_1}\otimes (\rho_2^P)_{-\Delta_2}.
\ee
Invariants exist whenever $\Delta'_1=\Delta_2-i=\Delta_1-i$ and $\lambda_1\in (W_i)^*\otimes (\rho_2^P)^*=(W_i)^*\otimes \rho_1$.\footnote{We have assumed that $\De_1=\De_2$ and $\rho_1=(\rho_2^P)^*$ so that $\<\cO_1\cO_2\>$ is nonvanishing.}
Note that these are exactly the conditions for the existence of $\cD^A$ in theorem~\ref{thm:operatorclassification}. Thus, the number of non-vanishing diagrams (\ref{eq:diagramforcovarianttwopt}) is precisely equal to the number of tensor structures for $\<\cO_2\cO_1'e^A\>$. In other words, all covariant two-point structures can be obtained by acting with differential operators on an invariant two-point structure.

\subsubsection{Invariant three-point functions}

We denote conformally-invariant three-point structures by
\begin{equation}
\<\cO_1\cO_2\cO_3\>^{(a)} \quad=\quad
\diagramEnvelope{\begin{tikzpicture}[anchor=base,baseline]
	\node (vert) at (0,0) [threept] {$a$};
	\node (opO1) at (-0.5,-1) [below] {$\cO_1$};
	\node (opO2) at (-0.5,1) [above] {$\cO_2$};
	\node (opO3) at (1,0) [right] {$\cO_3$};	
	
	\draw [spinning] (vert)-- (opO1);
	\draw [spinning] (vert)-- (opO2);
	\draw [spinning] (vert)-- (opO3);
\end{tikzpicture}}.
\end{equation}
The label $a$ runs over possible tensor structures, which are classified by $\mathcal{G}_3=\SO(d-1)$ singlets
\be\label{eq:cfthreeptcount}
(\rho_1\otimes \rho_2\otimes \rho_3)^{\SO(d-1)}.
\ee
A physical three-point function is a sum over tensor structures with different OPE coefficients $\l_m$,
\be\label{eq:physicalthreept}
\<\cO_1\cO_2\cO_3\>=\sum_{m=1}^{N_3}\lambda_m\,\<\cO_1\cO_2\cO_3\>^{(m)},
\ee
where $N_3=\dim (\rho_1\otimes \rho_2\otimes \rho_3)^{\SO(d-1)}$. When there is a unique three-point structure ($N_3=1$), we often omit the index $m$.\footnote{Since we never work with physical three-point functions~\eqref{eq:physicalthreept}, there is no danger of confusion.}

\subsubsection{Covariant three-point functions}
\label{sec:convarinantthreept}
Consider now acting on an invariant three-point structure with a differential operator. Let us begin with a three-point structure $\<\cO_1\cO_2\cO_3'\>^{(a)}$, and suppose that $\cO_3'$ transforms in the representation $[\De_3+i,\l]$.  The label $a$ runs over singlets in 
\be
(\rho_1\otimes\rho_2\otimes \l)^{\SO(d-1)}.
\ee
By theorem~\ref{thm:operatorclassification}, we have a differential operator $\cD^{(b)A}:[\De_3+i,\l]\to [\De_3,\rho_3]$ whenever
\be\label{eq:cbselectionrule}
\rho_3\in (W_i)^*\otimes \lambda \quad\Leftrightarrow\quad\lambda\in W_i\otimes\rho_3.
\ee
By acting with $\cD^{(b)A}$ on $\<\cO_1\cO_2\cO_3'\>^{(a)}$, we can form a covariant three-point structure for $\<\cO_1\cO_2\cO_3e^A\>$,
\be
\label{eq:covariantthreeptdiagrams}
(\cD^{(b)A})^{a_3}{}_c\<\cO^{a_1}_1(x_1)\cO_2^{a_2}(x_2)\cO_3'^{c}(x_3)\>^{(a)} \quad =\quad
\diagramEnvelope{\begin{tikzpicture}[anchor=base,baseline]
	\node (vertL) at (0,0) [threept] {$a$};
	\node (vertR) at (2,-0.05) [threept] {$b$};
	\node (opO1) at (-0.5,-1) [below] {$\cO_1$};
	\node (opO2) at (-0.5,1) [above] {$\cO_2$};
	\node (opO3) at (2.5,1) [above] {$\cO_3$};
	\node (opW) at (2.5,-1) [below] {$W$};	
	\node at (1,0.1) [above] {$\cO_3'$};	
	\draw [spinning] (vertL)-- (opO1);
	\draw [spinning] (vertL)-- (opO2);
	\draw [spinning] (vertL)-- (vertR);
	\draw [spinning] (vertR)-- (opO3);
	\draw [finite with arrow] (vertR)-- (opW);
\end{tikzpicture}}.
\ee

Let us count the number of diagrams (\ref{eq:covariantthreeptdiagrams}) by summing over the allowed $\cO'_3$, $a$ and $b$. Taking into account the selection rule~\eqref{eq:cbselectionrule}, we have
\begin{multline}
\label{eq:dimofcovariantthreept}
	\sum_{i=-j}^j\sum_{\lambda\in W_i\otimes \rho_3}\dim\p{\rho_1\otimes\rho_2\otimes \l}^{\SO(d-1)}\\=\dim\p{\bigoplus_{i=-j}^j\bigoplus_{\lambda\in W_i\otimes \rho_3}\rho_1\otimes\rho_2\otimes \l}^{\SO(d-1)}=
\dim\p{\bigoplus_{i=-j}^j W_i\otimes \rho_1\otimes\rho_2\otimes\rho_3}^{\SO(d-1)}.	
\end{multline}
According to~(\ref{eq:cfrule}), this is precisely the total number of covariant three-point structures for $\<\cO_1\cO_2\cO_3e^A\>$. In other words, generically, every conformally-covariant three-point structure can be obtained by acting with differential operators on conformally-invariant three-point structures.

Note that according to the discussion in section~\ref{sec:finitedim} we can interpret the conformally-covariant three-point functions as conformally-invariant four-point functions involving a degenerate primary $w^a(x)$. Analogously, we can interpret~\eqref{eq:covariantthreeptdiagrams} as conformal blocks for these four-point functions. We have just proven a highly degenerate case of the folklore theorem which states that that the number of such conformal blocks is equal the dimension of the space of degenerate four-point functions.\footnote{In the non-degenerate case we have the number of families of conformal blocks and the number of ``functional degrees of freedom.''} Importantly, in our case this number is finite. This brings us to a very powerful observation.

\subsection{Crossing and $6j$ symbols}

The diagrams (\ref{eq:covariantthreeptdiagrams}) give a basis for the finite-dimensional space of covariant three-point structures $\<\cO_1\cO_2\cO_3e^A\>$. However, this is not the only interesting basis. The distinguishing feature of  (\ref{eq:covariantthreeptdiagrams}) is that it selects a particular operator $\cO_3'$ appearing in the $\cO_1\x\cO_2$ OPE\@. In other words, it diagonalizes the action of the Casimir $(L_1+L_2)^2$ acting simultaneously on $\cO_1,\cO_2$ (equivalently $\cO_3,w$). However, we may wish to select an operator in a different channel, e.g.\ $\cO_1'\in\cO_2\x\cO_3$. This would correspond to starting with a three-point structure $\<\cO_1'\cO_2\cO_3\>^{(m)}$ and acting with a differential operator $\cD^{(n)A}:\cO_1'\to \cO_1$.

These two bases are related by a linear transformation, which gives a type of crossing equation for differential operators,
\be
\diagramEnvelope{\begin{tikzpicture}[anchor=base,baseline]
	\node (vertL) at (0,0) [threept] {$a$};
	\node (vertR) at (2,-0.05) [threept] {$b$};
	\node (opO1) at (-0.5,-1) [below] {$\cO_1$};
	\node (opO2) at (-0.5,1) [above] {$\cO_2$};
	\node (opO3) at (2.5,1) [above] {$\cO_3$};
	\node (opW) at (2.5,-1) [below] {$W$};	
	\node at (1,0.1) [above] {$\cO_3'$};	
	\draw [spinning] (vertL)-- (opO1);
	\draw [spinning] (vertL)-- (opO2);
	\draw [spinning] (vertL)-- (vertR);
	\draw [spinning] (vertR)-- (opO3);
	\draw [finite with arrow] (vertR)-- (opW);
\end{tikzpicture}}
	\quad=\quad
	\sum_{\cO_1',m,n}
	\left\{
		\begin{matrix}
		\cO_1 & \cO_2 & \cO_1' \\
		\cO_3 & W & \cO_3'
		\end{matrix}
	\right\}^{ab}_{mn}
\diagramEnvelope{\begin{tikzpicture}[anchor=base,baseline]
	\node (vertU) at (0,0.7) [threept] {$m$};
	\node (vertD) at (0,-0.7) [threept] {$n$};
	\node (opO1) at (-1,-1.5) [below] {$\cO_1$};
	\node (opO2) at (-1,1.5) [above] {$\cO_2$};
	\node (opO3) at (1,1.5) [above] {$\cO_3$};
	\node (opW) at (1,-1.5) [below] {$W$};	
	\node at (0.1,0) [right] {$\cO_1'$};	
	\draw [spinning] (vertD)-- (opO1);
	\draw [spinning] (vertU)-- (opO2);
	\draw [spinning] (vertU)-- (vertD);
	\draw [spinning] (vertU)-- (opO3);
	\draw [finite with arrow] (vertD)-- (opW);
\end{tikzpicture}}.
\label{eq:6jdefinition}
\ee
In equations, (\ref{eq:6jdefinition}) reads 
\be\label{eq:definition_6j_symbols_algrbraic}
\cD^{(b)A}_{x_3}\<\cO_1(x_1)\cO_2(x_2)\cO_3'(x_3)\>^{(a)} &= \sum_{\cO_1',m,n}
	\left\{
		\begin{matrix}
		\cO_1 & \cO_2 & \cO_1' \\
		\cO_3 & W & \cO_3'
		\end{matrix}
	\right\}^{ab}_{mn} 
\cD^{(n)A}_{x_1}\<\cO_1'(x_1)\cO_2(x_2)\cO_3(x_3)\>^{(m)}.
\ee
Note that the sum over $\cO_1'$ is finite with $\cO_1'$ taking values in the tensor product $\cO_1\otimes W$. The coefficients in this transformation are called Racah coefficients, or $6j$ symbols.\footnote{Technically, Racah coefficients and $6j$ symbols are sometimes defined to differ by various normalization factors. We will not distinguish between them and use both terms to refer to the coefficients in (\ref{eq:6jdefinition}).}${}^,$\footnote{$6j$ symbols depend only on a set of representations and three-point structures. However, for brevity, we often label them with operators $\cO_i$ transforming in those representations, as in (\ref{eq:6jdefinition}).} The $6j$ symbols for operator representations (generalized Verma modules) of the conformal group have seen some recent interest for their role in the crossing equations for CFT four-point functions \cite{Gadde:2017sjg,Hogervorst:2017sfd,Hogervorst:2017kbj}. Here, we have a degenerate form of these objects, where one of the representations appearing is finite-dimensional. These degenerate $6j$ symbols enter in a degenerate crossing equation  (\ref{eq:6jdefinition}) where the objects on both sides live in a finite dimensional space. One can ask what happens if we consider $6j$ symbols with more finite-dimensional representations. As we show in appendix~\ref{app:sixjalgebra}, such $6j$ symbols are related to the algebra of  conformally-covariant differential operators.

An analogy for understanding (\ref{eq:6jdefinition}) is to consider a four-point function containing at least one degenerate Virasoro primary in a 2d CFT. The shortening condition on the degenerate primary implies that its four-point function lives in a finite-dimensional space spanned by a finite number of conformal blocks. The crossing transformation for these blocks is a finite-dimensional matrix. Similarly in (\ref{eq:6jdefinition}), the left-hand side can be interpreted as the conformal block for $\cO_3'$ exchange in a four-point function $\<\cO_1\cO_2\cO_3 w\>$. Because $w$ satisfies a highly-constraining differential equation, the crossing transformation for this block is a finite-dimensional matrix.

\subsection{Examples}

Because the space of covariant three-point structures is finite dimensional (its dimension is given by (\ref{eq:dimofcovariantthreept})), it is straightforward to find the degenerate $6j$ symbols by direct computation: we apply differential operators on both sides and invert a finite-dimensional matrix. Let us work through some examples.

\subsubsection{$6j$ symbols in 1 dimension}
\label{sec:sixJ1d}
\paragraph{3-point functions}
Before computing the $6j$ symbols, we need to choose a basis of three-point structures. The three-point functions in 1-dimension are not completely trivial, and it is important to get them right in order to have well-defined $6j$ symbols. 

According to the discussion of section~\ref{sec:embedding1d}, there are two types of fields with different ``spins'' $s=\pm$. The fields with $s=+$ are the usual scalars. The simplest three-point function for the scalars is
\be
	\<\Phi_1^+(\chi_1)\Phi_2^+(\chi_2)\Phi_3^+(\chi_3)\>^{(+)}=\frac{1}{|\chi_1\chi_2|^{\Delta_1+\Delta_2-\Delta_3}|\chi_2\chi_3|^{\Delta_2+\Delta_3-\Delta_1}|\chi_3\chi_1|^{\Delta_3+\Delta_1-\Delta_2}}.
\ee
Here we have added the label ${(+)}$ to indicate that this is a parity-even three-point structure. We need this because there in fact exists a parity-odd three-point structure,
\be
\<\Phi_1^+(\chi_1)\Phi_2^+(\chi_2)\Phi_3^+(\chi_3)\>^{(-)}=\frac{(\chi_1\chi_2)(\chi_2\chi_3)(\chi_3\chi_1)}{|\chi_1\chi_2|^{\Delta_1+\Delta_2-\Delta_3+1}|\chi_2\chi_3|^{\Delta_2+\Delta_3-\Delta_1+1}|\chi_3\chi_1|^{\Delta_3+\Delta_1-\Delta_2+1}}.
\ee
This is related to the fact that unless we allow reflections, all conformal transformations preserve the cyclic ordering of three points on the circle $S^1$. One can see that this structure is parity-odd from the parity transformation $\chi\to\gamma^2\chi$. 

We will compute the $6j$ symbols for differential operators in the fundamental representation which, according to~\eqref{eq:diffmap1d}, change the spin $s$. Therefore, we will also need the parity even and parity odd structures for the three point function with two $s=-$ operators,
\begin{align}
\<\Phi_1^-(\chi_1)\Phi_2^+(\chi_2)\Phi_3^-(\chi_3)\>^{(-)}&=\frac{(\chi_3\chi_1)}{|\chi_1\chi_2|^{\Delta_1+\Delta_2-\Delta_3}|\chi_2\chi_3|^{\Delta_2+\Delta_3-\Delta_1}|\chi_3\chi_1|^{\Delta_3+\Delta_1-\Delta_2+1}},\\
\<\Phi_1^-(\chi_1)\Phi_2^+(\chi_2)\Phi_3^-(\chi_3)\>^{(+)}&=\frac{(\chi_1\chi_2)(\chi_2\chi_3)}{|\chi_1\chi_2|^{\Delta_1+\Delta_2-\Delta_3+1}|\chi_2\chi_3|^{\Delta_2+\Delta_3-\Delta_1+1}|\chi_3\chi_1|^{\Delta_3+\Delta_1-\Delta_2}}.
\end{align}
The difference between $s=+$ and $s=-$ tensor structures is in their transformation properties under~\eqref{eq:onedimcenterproperty}.

\paragraph{$6j$ symbols}
As noted above, we will specialize to $W=F$ being the fundamental representation of $\SL(2,\R)$, which has weights $\De=\pm \tfrac 1 2$. The corresponding differential operators are
\be
\cD^{+}_\a = \ptl_\a,\qquad \cD^{-}_\a = \chi_\a.
\ee
It will be convenient to contract each differential with a polarization spinor $\chi_4$, giving $\chi_4^\a \cD^\pm_\a$. This spinor may be interpreted as the coordinate of the fourth operator in representation~$[-\tfrac{1}{2},-]$. The operator $\chi_4\cD^+$ is even under space parity, while the operator $\chi_4\cD^-$ is odd under space parity.

The definition of $6j$ symbols in this case is
\be\label{eq:sixJ1ddefn}
\diagramEnvelope{\begin{tikzpicture}[anchor=base,baseline]
	\node (vertL) at (0,-.08) [threept] {$a$};
	\node (vertR) at (2.2,0) [threept] {};
	\node (opO1) at (-0.5,-1) [below] {$[\Delta_1,s_1]$};
	\node (opO2) at (-0.5,1) [above] {$[\Delta_2,s_2]$};
	\node (opO3) at (2.7,1) [above] {$[\Delta_3,s_3]$};
	\node (opW) at (2.7,-1) [below] {$F$};	
	\node at (1.1,0.1) [above] {$[\Delta_3\pm\tfrac{1}{2},-s_3]$};	
	\draw [spinning] (vertL)-- (opO1);
	\draw [spinning] (vertL)-- (opO2);
	\draw [spinning] (vertL)-- (vertR);
	\draw [spinning] (vertR)-- (opO3);
	\draw [finite with arrow] (vertR)-- (opW);
\end{tikzpicture}}
	=
	\sum_{\Delta= \Delta_1\pm\frac{1}{2}\atop m}
	\left\{
		\begin{matrix}
		[\Delta_1,s_1] & [\Delta_2,s_2] & [\Delta,-s_1] \\
		[\Delta_3,s_3] & F & [\Delta_3\pm\tfrac{1}{2},-s_3]
		\end{matrix}
	\right\}^{a\uniq}_{m\uniq}
\diagramEnvelope{\begin{tikzpicture}[anchor=base,baseline]
	\node (vertU) at (0,0.7) [threept] {$m$};
	\node (vertD) at (0,-0.7) [threept] {};
	\node (opO1) at (-1,-1.5) [below] {$[\Delta_1,s_1]$};
	\node (opO2) at (-1,1.5) [above] {$[\Delta_2,s_2]$};
	\node (opO3) at (1,1.5) [above] {$[\Delta_3,s_3]$};
	\node (opW) at (1,-1.5) [below] {$F$};	
	\node at (0.1,0) [right] {$[\Delta,-s_1]$};	
	\draw [spinning] (vertD)-- (opO1);
	\draw [spinning] (vertU)-- (opO2);
	\draw [spinning] (vertU)-- (vertD);
	\draw [spinning] (vertU)-- (opO3);
	\draw [finite with arrow] (vertD)-- (opW);
\end{tikzpicture}}.
\ee
We don't need to label the vertices for differential operators, since there is always a unique choice of differential operator for the given dimensions. For example, on the left-hand side, when the internal line has dimension $\De_3\pm\frac 1 2$, the $F$-differential operator must be $\cD^\mp$. The notation ``$\uniq$" on the $6j$ symbols means there is a unique corresponding structure or differential operator.

It is now straightforward to compute the objects above. Let us take for example $s_1=s_2=+$, $s_3=-$ and specialize to the case when both sides of~\eqref{eq:sixJ1ddefn} are parity-odd. For the left-hand side we then have,

\be
\diagramEnvelope{\begin{tikzpicture}[anchor=base,baseline]
	\node (vertL) at (0,-.1) [threept,inner sep=0pt] {$+$};
	\node (vertR) at (2,0) [threept] {};
	\node (opO1) at (-0.5,-1) [below] {$[\Delta_1,+]$};
	\node (opO2) at (-0.5,1) [above] {$[\Delta_2,+]$};
	\node (opO3) at (2.5,1) [above] {$[\Delta_3,-]$};
	\node (opW) at (2.5,-1) [below] {$F$};	
	\node at (1,0.1) [above] {$[\Delta_3+\frac{1}{2},+]$};	
	\draw [spinning] (vertL)-- (opO1);
	\draw [spinning] (vertL)-- (opO2);
	\draw [spinning] (vertL)-- (vertR);
	\draw [spinning] (vertR)-- (opO3);
	\draw [finite with arrow] (vertR)-- (opW);
\end{tikzpicture}}
&=
\frac{(\chi_4\chi_3)}{|\chi_1\chi_2|^{\Delta_1+\Delta_2-\Delta_3-1/2}|\chi_2\chi_3|^{\Delta_2+\Delta_3-\Delta_1+1/2}|\chi_3\chi_1|^{\Delta_3+\Delta_1-\Delta_2+1/2}},\nn\\
\diagramEnvelope{\begin{tikzpicture}[anchor=base,baseline]
	\node (vertL) at (0,-.09) [threept,inner sep=0pt] {$-$};
	\node (vertR) at (2,0) [threept] {};
	\node (opO1) at (-0.5,-1) [below] {$[\Delta_1,+]$};
	\node (opO2) at (-0.5,1) [above] {$[\Delta_2,+]$};
	\node (opO3) at (2.5,1) [above] {$[\Delta_3,-]$};
	\node (opW) at (2.5,-1) [below] {$F$};	
	\node at (1,0.1) [above] {$[\Delta_3-\frac{1}{2},+]$};	
	\draw [spinning] (vertL)-- (opO1);
	\draw [spinning] (vertL)-- (opO2);
	\draw [spinning] (vertL)-- (vertR);
	\draw [spinning] (vertR)-- (opO3);
	\draw [finite with arrow] (vertR)-- (opW);
\end{tikzpicture}}
&=
\frac{-(\Delta_1+\Delta_3-\Delta_2-1/2)(\chi_4\chi_1)(\chi_1\chi_2)(\chi_2\chi_3)}{|\chi_1\chi_2|^{\Delta_1+\Delta_2-\Delta_3+3/2}|\chi_2\chi_3|^{\Delta_2+\Delta_3-\Delta_1+1/2}|\chi_3\chi_1|^{\Delta_3+\Delta_1-\Delta_2+1/2}}\nn\\
&\ \ +
\frac{(\Delta_2+\Delta_3-\Delta_1-1/2)(\chi_4\chi_2)(\chi_1\chi_2)(\chi_3\chi_1)}{|\chi_1\chi_2|^{\Delta_1+\Delta_2-\Delta_3+3/2}|\chi_2\chi_3|^{\Delta_2+\Delta_3-\Delta_1+1/2}|\chi_3\chi_1|^{\Delta_3+\Delta_1-\Delta_2+1/2}}.
\ee
For the right-hand side,
\be
\diagramEnvelope{\begin{tikzpicture}[anchor=base,baseline]
	\node (vertU) at (0,0.7) [threept,inner sep=0pt] {$+$};
	\node (vertD) at (0,-0.7) [threept] {};
	\node (opO1) at (-1,-1.5) [below] {$[\Delta_1,+]$};
	\node (opO2) at (-1,1.5) [above] {$[\Delta_2,+]$};
	\node (opO3) at (1,1.5) [above] {$[\Delta_3,-]$};
	\node (opW) at (1,-1.5) [below] {$F$};	
	\node at (0.1,0) [right] {$[\Delta_1+\frac{1}{2},-]$};	
	\draw [spinning] (vertD)-- (opO1);
	\draw [spinning] (vertU)-- (opO2);
	\draw [spinning] (vertU)-- (vertD);
	\draw [spinning] (vertU)-- (opO3);
	\draw [finite with arrow] (vertD)-- (opW);
\end{tikzpicture}}
&=\frac{(\chi_4\chi_1)(\chi_1\chi_2)(\chi_2\chi_3)}{|\chi_1\chi_2|^{\Delta_1+\Delta_2-\Delta_3+3/2}|\chi_2\chi_3|^{\Delta_2+\Delta_3-\Delta_1+1/2}|\chi_3\chi_1|^{\Delta_3+\Delta_1-\Delta_2+1/2}}\nn\\
\diagramEnvelope{\begin{tikzpicture}[anchor=base,baseline]
	\node (vertU) at (0,0.7) [threept,inner sep=0pt] {$-$};
	\node (vertD) at (0,-0.7) [threept] {};
	\node (opO1) at (-1,-1.5) [below] {$[\Delta_1,+]$};
	\node (opO2) at (-1,1.5) [above] {$[\Delta_2,+]$};
	\node (opO3) at (1,1.5) [above] {$[\Delta_3,-]$};
	\node (opW) at (1,-1.5) [below] {$F$};	
	\node at (0.1,0) [right] {$[\Delta_1-\frac{1}{2},-]$};	
	\draw [spinning] (vertD)-- (opO1);
	\draw [spinning] (vertU)-- (opO2);
	\draw [spinning] (vertU)-- (vertD);
	\draw [spinning] (vertU)-- (opO3);
	\draw [finite with arrow] (vertD)-- (opW);
\end{tikzpicture}}
&=
\frac{(\Delta_1+\Delta_3-\Delta_2-1/2)(\chi_4\chi_3)}{|\chi_1\chi_2|^{\Delta_1+\Delta_2-\Delta_3-1/2}|\chi_2\chi_3|^{\Delta_2+\Delta_3-\Delta_1+1/2}|\chi_3\chi_1|^{\Delta_3+\Delta_1-\Delta_2+1/2}}
\nn\\
&\ \ 
+
\frac{-(\Delta_1+\Delta_2-\Delta_3-1/2)(\chi_4\chi_2)(\chi_1\chi_2)(\chi_3\chi_1)}{|\chi_1\chi_2|^{\Delta_1+\Delta_2-\Delta_3+3/2}|\chi_2\chi_3|^{\Delta_2+\Delta_3-\Delta_1+1/2}|\chi_3\chi_1|^{\Delta_3+\Delta_1-\Delta_2+1/2}}.
\ee

After using the Schouten identity
\be
	(\chi_4\chi_1)(\chi_2\chi_3)+(\chi_4\chi_2)(\chi_3\chi_1)+(\chi_4\chi_3)(\chi_1\chi_2)=0,
\ee
we can solve for the $6j$ symbols
\be
	\left\{
		\begin{matrix}
		[\Delta_1,+] & [\Delta_2,+] & [\Delta_1+\tfrac{1}{2},-] \\
		[\Delta_3,-] & F & [\Delta_3+\tfrac{1}{2},+]
		\end{matrix}
	\right\}^{+\uniq}_{+\uniq} &= -\frac{\Delta _1+\Delta_2-\Delta _3-1/2}{2 \Delta _1-1},\\
	\left\{
		\begin{matrix}
		[\Delta_1,+] & [\Delta_2,+] & [\Delta_1-\tfrac{1}{2},-] \\
		[\Delta_3,-] & F & [\Delta_3+\tfrac{1}{2},+]
		\end{matrix}
	\right\}^{+\uniq}_{-\uniq} &= \frac{1}{2 \Delta _1-1},\\
	\left\{
		\begin{matrix}
		[\Delta_1,+] & [\Delta_2,+] & [\Delta_1+\tfrac{1}{2},-] \\
		[\Delta_3,-] & F & [\Delta_3-\tfrac{1}{2},+]
		\end{matrix}
	\right\}^{-\uniq}_{+\uniq} &= -\frac{\left(\Delta _1+\Delta _3-\Delta _2-1/2\right) \left(\Delta _1+ \Delta _2+ \Delta _3-3/2\right)}{2 \Delta _1-1},\\
	\left\{
		\begin{matrix}
		[\Delta_1,+] & [\Delta_2,+] & [\Delta_1-\tfrac{1}{2},-] \\
		[\Delta_3,-] & F & [\Delta_3-\tfrac{1}{2},+]
		\end{matrix}
	\right\}^{-\uniq}_{-\uniq} &= -\frac{\Delta _2+\Delta _3-\Delta _1-1/2}{2 \Delta _1-1}.
\ee

\subsubsection{$6j$ symbols in 3 dimensions}
\paragraph{3-point functions}
\label{sec:3D6j_threepoint}
It is also possible to find the general $6j$ symbols for the spinor representation $\mathcal{S}$ of the 3d conformal group. To do that, it is convenient to use the conformal frame basis of three-point structures from~\cite{Kravchuk:2016qvl}.\footnote{Our conventions in this section are those of~\cite{Iliesiu:2015qra,Iliesiu:2015akf,Kravchuk:2016qvl}.} To construct this basis, one contracts the 3d primary operators with polarization spinors $s_\alpha$,
\be
	\cO(s,x)=s_{\alpha_1}\cdots s_{\alpha_{2\ell}}\cO^{\alpha_1\ldots \alpha_{2\ell}}(x).
\ee
The three point-functions are then evaluated in the configuration
\be
	f_3(s_1,s_2,s_3)=\<\cO_1(s_1,0)\cO_2(s_2,e)\cO_3(s_3,\infty)\>,
\ee
where $e=(0,0,1)$ and $\cO(s_3,\infty)=\lim_{L\to\infty}L^{2\Delta_3}\cO(s_3,Le)$. The polynomial $f_3$ should be invariant under boosts in the 0-1 plane. A basis for such polynomials is given by the monomials
\be\label{eq:3DmonomialBasis}
	[q_1q_2q_3] &=\prod_{i=1}^{3}\xi_i^{\ell_i+q_i}\bar\xi_i^{\ell-q_i},
\ee
where $s_i=(\xi_i,\bar \xi_i)$ and $q_i=-\ell_i\ldots \ell_i$, subject to the constraint $\sum_i q_i=0$.

It will also be convenient to think about the covariant three-point functions as four-point functions with the degenerate spinor primary $w^\alpha(x)$ of dimension $-\frac{1}{2}$. We construct an analogous basis for four-point tensor structures by evaluating
\be\label{eq:fourpointboundary}
	\<\cO_1(s_1,0)s_\alpha w^\alpha(ze) \cO_2(s_2,e)\cO_3(s_3,\infty)\>,
\ee
leading to a monomial basis $[q_1,q,q_2,q_3]$, where $q=\pm\frac{1}{2}$.\footnote{Notice that we used a configuration different from the one used for four-point functions in~\cite{Kravchuk:2016qvl}.} The configuration \eqref{eq:fourpointboundary} is still invariant under boosts in the 0-1 plane, so we again have the condition $q+\sum q_i=0$. We have introduced only one cross-ratio $z$ because $w^\alpha(x)$ is a degenerate field. In fact, the general solution to its Killing equation is given by
\be
	w(x)=w_0+x^\mu\gamma_\mu w_1,
\ee
and thus it is sufficient to know its values for $x=ze$ to determine it completely. Note also that this equation implies that a general four-point function of such form is linear in $z$.

To obtain these degenerate four-point functions, we think about the three-point functions as four-point functions with an identity operator at coordinate $x$ and act with the operators\footnote{Note that the components of $\cD^{-+}_a$ are essentially the conformal Killing spinors $s_\alpha w^\alpha_a(x)$.}
\be
	D^{\pm\pm}_i=\Omega^{ab}\cD^{-+}_{a,x}\cD^{\pm\pm}_{b,x_i}\Sigma^{\mp\mp}_i,
\ee
where $x=ze$, and $\Sigma^{\mp\mp}_i$ formally shifts the scaling dimension and spin of the operator $i$, so that $D^{\pm\pm}_i$ doesn't change the dimensions and spins.\footnote{This notation has the somewhat unfortunate consequence that $D_i^{\pm\pm}[q_1q_2q_3]$ is a four-point function with external spins $\ell_1,\ell_2,\ell_3$ but the structure labels $[q_1q_2q_3]$ relate to a shifted set of spins, see the diagrams below.} In this notation we have\footnote{As in the 1d case, we omit the labels for the differential operators in the diagrams (\ref{eq:sixjthreeddiagrams}) because the differential operator is always fixed by the given representations.}
\be
	D^{\pm\pm}_3[q_1q_2q_3]&\equiv
	\diagramEnvelope{\begin{tikzpicture}[anchor=base,baseline]
		\node (vertL) at (0,-.08) [threept,inner sep=1pt] {$q_i$};
		\node (vertR) at (3,-.0335) [threept] {};
		\node (opO1) at (-0.5,-1) [below] {$[\Delta_1,\ell_1]$};
		\node (opO2) at (-0.5,1) [above] {$[\Delta_2,\ell_2]$};
		\node (opO3) at (3.5,1) [above] {$[\Delta_3,\ell_3]$};
		\node (opW) at (3.5,-1) [below] {$s_\alpha w^\alpha$};	
		\node at (1.5,0.1) [above] {$[\Delta_3\mp\tfrac{1}{2},\ell_3\mp \tfrac{1}{2}]$};	
		\draw [spinning] (vertL)-- (opO1);
		\draw [spinning] (vertL)-- (opO2);
		\draw [spinning] (vertL)-- (vertR);
		\draw [spinning] (vertR)-- (opO3);
		\draw [finite with arrow] (vertR)-- (opW);
	\end{tikzpicture}},\nn\\
	D^{\pm\pm}_1[q_1q_2q_3]&\equiv
	\diagramEnvelope{\begin{tikzpicture}[anchor=base,baseline]
		\node (vertU) at (0,0.7) [threept,inner sep=1pt] {$q_i$};
		\node (vertD) at (0,-0.7) [threept] {};
		\node (opO1) at (-1,-1.5) [below] {$[\Delta_1,\ell_1]$};
		\node (opO2) at (-1,1.5) [above] {$[\Delta_2,\ell_2]$};
		\node (opO3) at (1,1.5) [above] {$[\Delta_3,\ell_3]$};
		\node (opW) at (1,-1.5) [below] {$s_\alpha w^\alpha$};	
		\node at (0.1,0) [right] {$[\Delta_1\mp\tfrac{1}{2},\ell_1\mp\tfrac{1}{2}]$};	
		\draw [spinning] (vertD)-- (opO1);
		\draw [spinning] (vertU)-- (opO2);
		\draw [spinning] (vertU)-- (vertD);
		\draw [spinning] (vertU)-- (opO3);
		\draw [finite with arrow] (vertD)-- (opW);
	\end{tikzpicture}}.
	\label{eq:sixjthreeddiagrams}
\ee
Our goal is therefore to find the transformation between the bases $D_3^{\pm\pm}[q_1q_2q_3]$ and $D_1^{\pm\pm}[q_1q_2q_3]$

\paragraph{$6j$ symbols}

It is obvious that since the operators $D_i^{\pm\pm}$ contain a finite number of derivatives in the polarization spinors, they take a three-point structure $[q_1q_2q_3]$ to four-point structures $[q'_1,q,q'_2,q'_3]$ for~\eqref{eq:fourpointboundary} with $q'_i$ differing from $q_i$ by only finite shifts. We can say that $D_i^{\pm\pm}$ are local in $q$-space. It turns out that the inverse operation, which expresses an arbitrary four-point function~\eqref{eq:fourpointboundary} in terms of $D_i^{\pm\pm}[q_1q_2q_3]$, is also local in $q$-space. In this language the $6j$ symbols essentially give the composition of the inverse to $D_1^{\pm\pm}$ with $D_3^{\pm\pm}$ and are thus also local in $q$-space. This allows us to write down a general expression for these $6j$ symbols.

The number of shifts in $q$ for which the $6j$ symbols are generically non-zero is however rather large. We therefore take an indirect approach in this section, describing how the $6j$ symbols can be straightforwardly generated from relatively simple expressions. Our strategy will be to write the action of $D_1^{\pm\pm}$ and $D_3^{\pm\pm}$ on $[q_1q_2q_3]$ in a form from which both the direct action and the inverse can be easily obtained. One can then simply substitute the inverse of $D_1^{\pm\pm}$ into the expressions for $D_3^{\pm\pm}[q_1q_2q_3]$ to generate the general $6j$ symbols.

First, we evaluate the expressions for $D_3^{\pm\pm}[q_1q_2q_3]$ and $D_1^{\pm\pm}[q_1q_2q_3]$. This can be done relatively easily in a computer algebra system. The result can be expressed in terms of the four-point tensor structures $[q_1,q,q_2,q_3]$, for instance,
\be
	D_1^{--}[q_1q_2q_3]=z(\ell_1+q_1+\tfrac{1}{2})[q_1-\tfrac{1}{2},+\tfrac{1}{2},q_2,q_3]-
		z(\ell_1-q_1+\tfrac{1}{2})[q_1+\tfrac{1}{2},-\tfrac{1}{2},q_2,q_3].
\ee
We will now describe these actions in a compact form. We first define
\be\label{eq:Adefinition}
	\cA^\pm_1[q_1q_2q_3]=\p{-D_1^{--}\mp(\ell_1\mp q_1+\tfrac{1}{2})D_1^{-+}}[q_1q_2q_3],
\ee
These operators satisfy
\be\label{eq:Aaction}
	\cA^\pm_1[q_1q_2q_3]=\mp z(2\ell_1+1)[q_1\mp\tfrac{1}{2},\pm\tfrac{1}{2},q_2,q_3].
\ee
Note that this solves the inversion problem for the linear terms $z[q_1,q,q_2,q_3]$ and is also sufficient to find the action $D_1^{-\pm}[q_1q_2q_3]$. We then define the analogous operators
\be\label{eq:Bdefinition}
	\cB^\pm_1[q_1q_2q_3]=\p{-D_1^{+-}\mp(\ell_1\mp q_1+\tfrac{1}{2})D_1^{++}}[q_1q_2q_3]+\cC^\pm_1[q_1q_2q_3],
\ee
where the correction term $\cC^\pm_1$ is a linear combination of $\cA^\pm_1$ given below. The operators $\cB^\pm_1$ act on $[q_1q_2q_3]$ as follows,
\begin{multline}\label{eq:Caction}
\p{(\De_1\pm q_1-\tfrac{3}{2})\cB^\pm_1+(\ell_1\mp q_1+\tfrac{1}{2})\cB^\mp_1}[q_1q_2q_3]=\\ =4(2\ell_1+1)(\De_1-\tfrac{3}{2})(\ell_1+\De_1-1)(\ell_1-\De_1+2)[q_1\mp \tfrac{1}{2},\pm\tfrac{1}{2},q_2,q_3].
\end{multline}
This solves the inversion problem for the constant terms $[q_1,q,q_2,q_3]$ and is also sufficient to write down the action of $\cB_1^\pm$ and thus also of $D^{+\pm}_1$.

We can describe the action of $D_3^{\pm\pm}$ and its inverse in a similar fashion. In particular, we define
\be
\cA^\pm_3[q_1q_2q_3]&=\p{-D_3^{--}\mp(\ell_3\mp q_3+\tfrac{1}{2})D_3^{-+}}[q_1q_2q_3],\\\label{eq:B3defn}
\cB^\pm_3[q_1q_2q_3]&=\p{-D_3^{+-}\mp(\ell_3\mp q_3+\tfrac{1}{2})D_3^{++}}[q_1q_2q_3]-\cC^\pm_3[q_1q_2q_3].
\ee
The correction term $\cC^\pm_3$ is defined below. For these operators we have the analogue of~\eqref{eq:Aaction}
\be
\cA^\pm_3[q_1q_2q_3]=\pm (2\ell_3+1)[q_1,\pm\tfrac{1}{2},q_2,q_3\mp\tfrac{1}{2}],
\ee
and the analogue of~\eqref{eq:Caction},
\be
	\cB_3^\pm[q_1q_2q_3]=&- 4z(2\ell_3+1)(\De_3-\tfrac{3}{2})(\De_3\mp q_3-\tfrac{3}{2})[q_1,\pm\tfrac{1}{2},q_2,q_3\mp \tfrac{1}{2}]\nn\\
	&+4z(2\ell_3+1)(\De_3-\tfrac{3}{2})(\ell_3\mp q_3+\tfrac{1}{2})[q_1,\mp\tfrac{1}{2},q_2,q_3\pm \tfrac{1}{2}].
\ee
We can use these expressions to find the action of $D_3^{\pm\pm}$ and then substitute the expressions~\eqref{eq:Aaction} and \eqref{eq:Caction} for the four-point functions $z[q_1,q,q_2,q_3]$ and $[q_1,q,q_2,q_3]$ in terms of $D_1^{\pm\pm}$ to find the $6j$ symbols.
As a simple example, we find for $(\ell_1,\ell_2,\ell_3)=(\tfrac 1 2, 0, 0)$,
\be
D_3^{--}[\tfrac{1}{2},0,-\tfrac{1}{2}]&=\frac{1}{2 (2 \Delta _1-3) (2 \Delta _1-1)}D_1^{++}[000]+\frac{1}{2 (2 \Delta _1-5) (2 \Delta _1-3)}D_1^{+-}[000]
\nn\\
&\quad +\frac{2 \Delta _1+2 \Delta _2-2 \Delta _3-5 }{4 (2 \Delta _1-5)}D_1^{-+}[000]-\frac{2 \Delta _1+2 \Delta _2-2 \Delta _3-1}{4 (2 \Delta _1-1)} D_1^{--}[000],
\ee
from which we can read off the for example the following $6j$ symbol,
\be
\left\{\begin{matrix}
[\Delta _1,\frac{1}{2}] & [\Delta _2,0] & [\Delta _1+\frac{1}{2},1] \\
[\Delta _3,0]           & \Psi          & [\Delta _3+\frac{1}{2},\frac{1}{2}]
\end{matrix}\right\}^{[\frac{1}{2},0,-\frac{1}{2}](--)}_{[000](--)}
&=
-\frac{2 \Delta _1+2 \Delta _2-2 \Delta _3-1}{4 (2 \Delta _1-1)}.
\ee

The correction term $\cC_1^\pm$ is given by
\be\label{eq:Cdefinition}
	\cC^\pm_1[q_1q_2q_3]=&{(\ell_1+q_1\mp \tfrac{1}{2})(\ell_3-q_3)}\cA_1^\pm[q_1-1,q_2,q_3+1]\nn\\
	&-(\ell_1-q_1\pm \tfrac{1}{2})(\ell_3+q_3)\cA_1^\pm[q_1+1,q_2,q_3-1]\nn\\
	&-(\ell_1+q_1\mp \tfrac{1}{2})(\ell_2-q_2)\cA_1^\pm[q_1-1,q_2+1,q_3]\nn\\
	&+(\ell_1-q_1\pm \tfrac{1}{2})(\ell_2+q_2)\cA_1^\pm[q_1+1,q_2-1,q_3]\nn\\
	&\mp2(\ell_3\mp q_3)(\De_1-2)\cA_1^{\mp}[q_1\mp 1,q_2,q_3\pm 1]\nn\\
	&\pm2(\ell_2\mp q_2)(\De_1-2)\cA_1^{\mp}[q_1\mp 1,q_2\pm 1,q_3]\nn\\
	&\pm2(\De_1\mp q_1- \tfrac{3}{2})(\De_1+\De_2-\De_3-\tfrac{3}{2})\cA_1^{\pm}[q_1,q_2,q_3]\nn\\
	&\pm2(\ell_1\mp q_1+\tfrac{1}{2})(\De_1-\tfrac{3}{2})\cA_1^{\mp}[q_1,q_2,q_3].
\ee
The correction term $\cC_3^\pm$ is obtained from the above expression by replacing $1\leftrightarrow 3$ in the coefficients, replacing $\cA_1$ by $\cA_3$, and exchanging the shifts applied to $q_1$ and $q_3$ in the three-point structures. Note that $\cC_3^\pm$ enters~\eqref{eq:B3defn} with a minus sign.

\subsection{Differential bases from $6j$ symbols}
\label{sec:differentialbases}

The crossing equation (\ref{eq:6jdefinition}) will be our key computational tool in this work. Using it, we can perform a variety of calculations with differential operators. As a brief example, consider contracting both sides of  (\ref{eq:6jdefinition}) with a differential operator $\cD^{(c)}_A:\cO_1\to \cO_1''$, which we denote
\be
\diagramEnvelope{\begin{tikzpicture}[anchor=base,baseline]
\node (tl) at (-1,0.8) [above] {$\cO_1$};
\node (tr) at (1,0.8) [above] {$W$};
\node (vert) at (0,0) [threept] {$c$};
\node (bot) at (0,-1) [below] {$\cO_1''$};
\draw [spinning] (tl) -- (vert);
\draw [spinning] (vert) -- (bot);
\draw [finite with arrow] (tr) -- (vert);
\end{tikzpicture}}.
\label{eq:incomingdiffop}
\ee
Here, the incoming arrow for $W$ indicates that this operator is associated to the dual representation $W^*$. Let us connect the incoming $W$ line in (\ref{eq:incomingdiffop}) with the outgoing $W$ line in (\ref{eq:6jdefinition}), i.e.\ contract the $A$ indices. In equations, we find
\be
& \cD^{(c)}_{A,x_1}\cD^{(b)A}_{x_3}\<\cO_1(x_1)\cO_2(x_2)\cO_3'(x_3)\>^{(a)} \nn\\
&= \sum_{\cO',m,n}
	\left\{
		\begin{matrix}
		\cO_1 & \cO_2 & \cO_1' \\
		\cO_3 & W & \cO_3'
		\end{matrix}
	\right\}^{ab}_{mn} 
\cD^{(c)}_{A,x_1}\cD^{(n)A}_{x_1}\<\cO_1'(x_1)\cO_2(x_2)\cO_3(x_3)\>^{(m)},
\label{eq:actinequations}
\ee
where we have given the differential operators subscripts $x_i$ to indicate which leg they act on.

The composition of differential operators $\cD^{(c)}_{A,x_1}\cD^{(n)A}_{x_1}$ on a single leg corresponds to a bubble diagram
\be
\cD^{(c)}_{A}\cD^{(n)A}\ =\ 
\diagramEnvelope{\begin{tikzpicture}[anchor=base,baseline]
\node (top) at (0,1.8) [above] {$\cO_1'$};
\node (topmid) at (0,0.9) [threept] {$n$};
\node (botmid) at (0,-0.9) [threept] {$c$};
\node (bot) at (0,-1.8) [below] {$\cO_1''$};
\node () at (0.7,0) [right] {$W$};
\node () at (-0.7,0) [left] {$\cO_1$};
\draw [spinning] (top) -- (topmid);
\draw [spinning] (botmid) -- (bot);
\draw [finite with arrow] (topmid) to[out=-20,in=20] (botmid);
\draw [spinning] (topmid) to[out=-160,in=160] (botmid);
\end{tikzpicture}}
&=
\begin{pmatrix}
\cO_1'\\
\cO_1\ W
\end{pmatrix}^{cn} \de_{\cO_1'\cO_1''}.
\label{eq:bubble}
\ee
This vanishes unless the representations for $\cO_1'$ and $\cO_1''$ are the same, in which case it is proportional to the identity (at least for generic scaling dimensions $\De_1',\De'$). The reason is that (\ref{eq:bubble}) represents a homomorphism between generalized Verma modules, which are irreducible when the scaling dimensions are generic.
The constant of proportionality, given by the symbol in parentheses on the right-hand side of (\ref{eq:bubble}), is actually related to another type of $6j$ symbol, as we explain in appendix~\ref{app:sixjalgebra}. For now, we take (\ref{eq:bubble}) as a definition of these symbols.

Using (\ref{eq:bubble}) with $\cO_1'=\cO_1''$, we can simplify the right-hand side of (\ref{eq:actinequations}) to obtain
\be
\diagramEnvelope{\begin{tikzpicture}[anchor=base,baseline]
	\node (vertL) at (0.3,0) [threept] {$a$};
	\node (vertR) at (1.7,-0.05) [threept] {$b$};
	\node (opO1) at (0.2,-0.3) [below] {$\cO_1$};
	\node (opO2) at (-0.7,0.8) [above] {$\cO_2$};
	\node (opO3) at (2.7,0.8) [above] {$\cO_3$};
	\node (opW) at (1.8,-0.3) [below] {$W$};	
	\node (vertB) at (1,-1.2) [threept] {$c$};
	\node (bot) at (1,-2) [below] {$\cO_1'$};
	\node at (1,0.2) [above] {$\cO_3'$};	
	\draw [spinning] (vertL)-- (vertB);
	\draw [spinning] (vertL)-- (opO2);
	\draw [spinning] (vertL)-- (vertR);
	\draw [spinning] (vertR)-- (opO3);
	\draw [finite with arrow] (vertR)-- (vertB);
	\draw [spinning] (vertB) -- (bot);
\end{tikzpicture}}
	\quad&=\quad
	\sum_{m,n}
	\left\{
		\begin{matrix}
		\cO_1 & \cO_2 & \cO_1' \\
		\cO_3 & W & \cO_3'
		\end{matrix}
	\right\}^{ab}_{mn}
	\begin{pmatrix}
\cO_1'\\
\cO_1\ W
\end{pmatrix}^{cn}
\diagramEnvelope{\begin{tikzpicture}[anchor=base,baseline]
	\node (vertU) at (0,0) [threept] {$m$};
	\node (vertD) at (0,-1) [below] {$\cO_1'$};
	\node (opO2) at (-1,0.8) [above] {$\cO_2$};
	\node (opO3) at (1,0.8) [above] {$\cO_3$};
	\draw [spinning] (vertU)-- (opO2);
	\draw [spinning] (vertU)-- (vertD);
	\draw [spinning] (vertU)-- (opO3);
\end{tikzpicture}}.
\label{eq:spinningcbtrick}
\ee
The left-hand side of (\ref{eq:spinningcbtrick}) is a conformally-invariant differential operator $\cD^{(c)}_{A,x_1}\cD^{(b)A}_{x_3}$ acting on a three-point structure at two different points. The right-hand side is a sum of structures where the representations at those points have been modified. The existence of such invariant two-point differential operators was a key observation of \cite{Costa:2011dw}. Here, we see that they factorize into a product of covariant differential operators, each acting on a single point. Indeed, it is easy to verify that all ``basic'' differential operators in \cite{Costa:2011dw} are of this form, with $W$ being either the vector or the adjoint representations of the conformal group. Furthermore, from the discussion in section~\ref{sec:algebraofweightshifting} and appendix~\ref{app:sixjalgebra} it follows that arbitrary compositions of the basic differential operators of~\cite{Costa:2011dw} are also of the form~\eqref{eq:spinningcbtrick} with more complicated representations $W$. In this sense,~\eqref{eq:spinningcbtrick} gives a more fundamental point of view on such operators.

The main purpose of the differential operators in \cite{Costa:2011dw} was to raise the spins of the operators they act on. Here, we see that it is also possible to {\it lower\/} spins, an idea that we discuss briefly in section~\ref{sec:furtherapps}.

Another observation of \cite{Costa:2011dw} is that (\ref{eq:spinningcbtrick}) can sometimes be inverted to express a basis of tensor structures in terms of differential operators acting on simpler structures. For example, when one of the operators $\cO_\ell$ is a traceless-symmetric tensor, one can write three-point structures involving $\cO_\ell$ in terms of derivatives of three-pt structures involving scalars. In our notation, this reads
\be
\label{eq:differentialbasistrick}
\diagramEnvelope{\begin{tikzpicture}[anchor=base,baseline]
\node (tl) at (-0.8,0.8) [above] {$\cO_1$};
\node (vert) at (0,0) [threept] {$a$};
\node (bl) at (-0.8,-0.8) [below] {$\cO_2$};
\node (r) at (0.8,0) [right] {$\cO_\ell$};
\draw [spinning] (vert) -- (tl);
\draw [spinning] (vert) -- (bl);
\draw [spinning] (vert) -- (r);
\end{tikzpicture}}
&=
\sum_{W,b,c}
(\dots)
\diagramEnvelope{\begin{tikzpicture}[anchor=base,baseline]
\node (tl) at (-1.6,1.6) [above] {$\cO_1$};
\node (vertr) at (0.6,0) [twopt] {};
\node (verttl) at (-0.7,0.7) [threept] {$b$};
\node (vertbl) at (-0.7,-0.7) [threept] {$c$};
\node (bl) at (-1.6,-1.6) [below] {$\cO_2$};
\node (r) at (1.6,0) [right] {$\cO_\ell$};
\node (w) at (-0.7,0) [left] {$W$};
\node (f1) at (0.3,0.35) [above] {$\phi_1$};
\node (f2) at (0.3,-0.3) [below] {$\phi_2$};
\draw [spinning] (verttl) -- (tl);
\draw [spinning] (vertbl) -- (bl);
\draw [spinning] (vertr) -- (r);
\draw [scalar] (vertr) -- (verttl);
\draw [scalar] (vertr) -- (vertbl);
\draw [finite with arrow] (verttl) -- (vertbl);
\end{tikzpicture}}.
\ee
Here, the dashed lines denote scalar operators $\f_1,\f_2$. Note that the labels $b,c$ determine the dimensions of $\f_1,\f_2$ in terms of $\De_{\cO_1},\De_{\cO_2}$, respectively. Thus, the right-hand side will involve derivatives of scalar structures with dimensions shifted by half-/integers from those of $\cO_{1},\cO_2$. In equations, we write
\be
\<\cO_1\cO_2\cO_\ell\>^{(a)} &= \mathscr{D}^{(a)\cO_1\cO_2}_{\f_1\f_2} \<\f_1\f_2\cO_\ell\>,
\label{eq:diffbasisexplicit}
\ee
where $\mathscr{D}$ is a combination of derivatives $\ptl_{x_1},\ptl_{x_2}$ and formal operators $\Sigma_{i,j}:\De_i\to \De_i+j$ that shift the dimensions $\De_1,\De_2$. We have suppressed $\SO(d)$ indices in (\ref{eq:diffbasisexplicit}) for simplicity.

The coefficients $(\dots)$ expressing $\mathscr{D}^{(a)\cO_1\cO_2}_{\f_1\f_2}$ in terms of products of weight-shifting operators $\cD^{(b)A}\cD^{(c)}_A$ are determined by inverting (\ref{eq:spinningcbtrick}). In writing (\ref{eq:differentialbasistrick}), there are infinitely many possible choices of representation $W$ and labels $b,c$. Generically, we expect that it should always be possible to choose enough $W,b,c$'s to solve (\ref{eq:differentialbasistrick}). This was shown explicitly in \cite{Costa:2011dw} when $\cO_1,\cO_2$ are traceless-symmetric tensors.\footnote{It would be interesting to characterize the minimal set of $W$'s needed to build all possible structures.}

For simplicity, we will sometimes write (\ref{eq:differentialbasistrick}) as
\be
\label{eq:differentialbasistrickshorthand}
\diagramEnvelope{\begin{tikzpicture}[anchor=base,baseline]
\node (tl) at (-0.8,0.8) [above] {$\cO_1$};
\node (vert) at (0,0) [threept] {$a$};
\node (bl) at (-0.8,-0.8) [below] {$\cO_2$};
\node (r) at (0.8,0) [right] {$\cO_\ell$};
\draw [spinning] (vert) -- (tl);
\draw [spinning] (vert) -- (bl);
\draw [spinning] (vert) -- (r);
\end{tikzpicture}}
&=
\diagramEnvelope{\begin{tikzpicture}[anchor=base,baseline]
\node (tl) at (-1.6,1.6) [above] {$\cO_1$};
\node (vertr) at (0.6,0) [twopt] {};
\node (verttl) at (-0.7,0.7) [threept] {};
\node (vertbl) at (-0.7,-0.7) [threept] {};
\node (bl) at (-1.6,-1.6) [below] {$\cO_2$};
\node (r) at (1.6,0) [right] {$\cO_\ell$};
\node (w) at (-0.8,-0.05) [threept] {$a$};
\draw [spinning] (verttl) -- (tl);
\draw [spinning] (vertbl) -- (bl);
\draw [spinning] (vertr) -- (r);
\draw [scalar] (vertr) -- (verttl);
\draw [scalar] (vertr) -- (vertbl);
\draw [finite] (verttl) -- (w);
\draw [finite] (w) -- (vertbl);
\end{tikzpicture}}.
\ee

\section{Conformal blocks}
\label{sec:conformalblocks}

\subsection{Gluing three-point functions}
\label{sec:conformalblocksreview}

A general conformal block can be expressed as the integral of a product of three-point functions.  For simplicity, consider the case where the external and internal operators are scalars. Given three-point functions $\<\f_1(x_1)\f_2(x_3)\f(x)\>$ and $\<\f(y)\f_3(x_3)\f_4(x_4)\>$, the following object is a solution to the conformal Casimir equation with the correct  transformation properties to be a conformal block,
\be
\label{eq:shadowintegral}
\frac{1}{\cN_\De}\int d^d x\, d^d y \<\f_1(x_1)\f_2(x_3)\f(x)\> \frac{1}{(x-y)^{2(d-\De)}} \<\f(y)\f_3(x_3)\f_4(x_4)\>,
\ee
where $\De=\De_\f$.
This can be understood, for example, by writing the integral in a manifestly conformally-invariant way \cite{SimmonsDuffin:2012uy}.\footnote{In Euclidean signature, we take the range of integration of $x,y$ to be all of $\R^d$. In this case (\ref{eq:shadowintegral}) produces a solution to the conformal Casimir equation with the wrong boundary conditions to be a conformal block. However, the conformal block can be extracted by taking a suitable linear combination of analytic continuations of the integral \cite{SimmonsDuffin:2012uy}. One can alternatively isolate the conformal block by performing the integral in Lorentzian signature over a domain defined by the lightcones of the four points $x_1,x_2,x_3,x_4$ \cite{Czech:2016xec}. Calculations involving differential operators are insensitive to these issues because the differential operators always transform trivially under monodromy. Thus, our methods allow us to study spinning versions of any of the solutions to the Casimir equation.}${}^,$\footnote{We expect that (\ref{eq:shadowintegral}) only converges when $\De$ lies on the principal series $\De\in \frac d 2 + i \R$. We obtain a general conformal block by analytically continuing in $\De$.}

Let us denote the operation which glues two $\f$-correlators by\footnote{Instead of thinking of the gluing operation (\ref{eq:gluing}) in terms of shadow integrals, we can alternatively think of it as simply a sum over normalized descendants of $\f$. The only properties of the gluing procedure that we use in this work are that it is bilinear, conformally-invariant, and satisfies the normalization condition (\ref{eq:shadownormalization}).}
\be
\label{eq:gluing}
|\f\rangle\bowtie\langle \f|\equiv
\frac{1}{\cN_\De} \int d^d x\, d^dy |\f(x)\> \frac{1}{(x-y)^{2(d-\De)}} \<\f(y)|
&\quad=\quad
\diagramEnvelope{\begin{tikzpicture}[anchor=base,baseline]
	\node (opO) at (-1,0) [left] {$\f$};
	\node (opOprime) at (1,0) [right] {$\f$};
	\node (vert) at (0,0) [cross] {};
	\draw [scalar] (opO) -- (vert);
	\draw [scalar] (opOprime) -- (vert);
\end{tikzpicture}}.
\ee
We should choose the normalization $\cN_\De$ by demanding that
\be
\label{eq:shadownormalization}
\diagramEnvelope{\begin{tikzpicture}[anchor=base,baseline]
	\node (opO) at (-1,0) [left] {$\f$};
	\node (vert) at (0,0) [twopt] {};
	\node (shad) at (1,0) [cross] {};
	\node (opOprime) at (2,0) [right] {$\f$};
	\draw [scalar] (vert) -- (opO);
	\draw [scalar] (vert) -- (shad);
	\draw [scalar] (opOprime) -- (shad);
\end{tikzpicture}}
&\quad=\quad
\diagramEnvelope{\begin{tikzpicture}[anchor=base,baseline]
	\node (opO) at (-1,0) [left] {$\f$};
	\node (opOprime) at (0,0) [right] {$\f$};
	\draw [scalar] (opOprime) -- (opO);
\end{tikzpicture}}.
\ee
That is, we demand that the shadow integral acting on a two-point function $\<\f\f\>$ gives the identity transformation. In the case of scalars, this fixes the normalization factor to be \cite{Dobrev:1977qv,DO1,SimmonsDuffin:2012uy}
\be
\cN_{\De} &= \frac{\pi^d \G(\De-\frac{d}{2})\G(\frac d 2 - \De)}{\G(\De)\G(d-\De)}.
\ee

For spinning operators, $\cO$ glues to its dual-reflected representation $\cO^\dag$ --- i.e.\ the representation with which $\cO$ has a nonzero two-point function,
\be
\label{eq:spinningshadow}
|\cO_{\Delta,\rho}\rangle\bowtie\langle \cO^\dag_{\Delta,\rho^\dag}|&\equiv
\diagramEnvelope{\begin{tikzpicture}[anchor=base,baseline]
	\node (opO) at (-1,0) [left] {$\cO$};
	\node (opOprime) at (1,0) [right] {$\cO^\dag$};
	\node (vert) at (0,0) [cross] {};
	\draw [spinning] (opO) -- (vert);
	\draw [spinning] (opOprime) -- (vert);
\end{tikzpicture}}\nn\\
&\equiv
\frac{1}{\cN_{\De,\rho}} \int d^d x\, d^dy |\cO_{\Delta,\alpha}(x)\> \frac{t^{\alpha\bar\alpha}(x-y)}{(x-y)^{2(d-\De)}} \<\cO^\dagger_{\Delta,\bar \alpha}(y)|.
\ee
Here, $t^{\a\bar\a}(x-y)$ is the tensor structure appearing in the two point function of the shadow operators $\<\tl\cO\tl\cO^\dag\>$.
We will not need the explicit expression, but simply the normalization condition
\be
\diagramEnvelope{\begin{tikzpicture}[anchor=base,baseline]
	\node (opO) at (-1,0) [left] {$\cO$};
	\node (vert) at (0,0) [twopt] {};
	\node (shad) at (1,0) [cross] {};
	\node (opOprime) at (2,0) [right] {$\cO$};
	\draw [spinning] (vert) -- (opO);
	\draw [spinning] (vert) -- (shad);
	\draw [spinning] (opOprime) -- (shad);
\end{tikzpicture}}
&\quad=\quad
\diagramEnvelope{\begin{tikzpicture}[anchor=base,baseline]
	\node (opO) at (-1,0) [left] {$\cO$};
	\node (opOprime) at (0,0) [right] {$\cO$};
	\draw [spinning] (opOprime) -- (opO);
\end{tikzpicture}}.
\label{eq:shadownormalizationtwo}
\ee

A general conformal block is given by
\be
\label{eq:generalblock}
W^{ab}\equiv
\langle \cO_1\cO_2\cO\rangle^{(a)}
\bowtie
{}^{(b)}\langle  \cO^\dag \cO_3\cO_4 \rangle=
\diagramEnvelope{\begin{tikzpicture}[anchor=base,baseline]
	\node (vertL) at (-2,0) [threept] {$a$};
	\node (vertR) at (2,-0.05) [threept] {$b$};
	\node (opO1) at (-2.5,-1) [below] {$\cO_1$};
	\node (opO2) at (-2.5,1) [above] {$\cO_2$};
	\node (opO3) at (2.5,1) [above] {$\cO_3$};
	\node (opW) at (2.5,-1) [below] {$\cO_4$};
	\node (shad) at (0,.085) [cross] {};
	\node at (1,0.1) [above] {$\cO^\dag$};	
	\node at (-1,0.1) [above] {$\cO$};	
	\draw [spinning] (vertL)-- (opO1);
	\draw [spinning] (vertL)-- (opO2);
	\draw [spinning] (vertL)-- (shad);
	\draw [spinning] (vertR) -- (shad);
	\draw [spinning] (vertR)-- (opO3);
	\draw [spinning] (vertR)-- (opW);
\end{tikzpicture}}.
\ee
To perform computations with differential operators and shadow integrals, we must understand how to move differential operators from one side of a shadow integration to another --- i.e.\ how to integrate by parts. This can be done purely diagrammatically, just from the definition (\ref{eq:shadownormalizationtwo}).

First, consider a two-point function. Moving a differential operator past a two-point vertex is a special case of the definition of a $6j$ symbol,
\be
\label{eq:twoptcrossing}
\diagramEnvelope{\begin{tikzpicture}[anchor=base,baseline]
	\node (vertL) at (0,0) [twopt] {};
	\node (vertR) at (2,-0.08) [threept] {$c$};
	\node (opO1) at (-0.5,-1) [below] {$\cO^\dag$};
	\node (opO2) at (-0.5,1) [above] {$\mathbf{1}$};
	\node (opO3) at (2.5,1) [above] {$\cO'$};
	\node (opW) at (2.5,-1) [below] {$W$};	
	\node at (1,0.1) [above] {$\cO$};	
	\draw [spinning] (vertL)-- (opO1);
	\draw [finite with arrow] (vertL)-- (opO2);
	\draw [spinning] (vertL)-- (vertR);
	\draw [spinning] (vertR)-- (opO3);
	\draw [finite with arrow] (vertR)-- (opW);
\end{tikzpicture}}
	\quad=\quad
	\sum_{m}
	\left\{
		\begin{matrix}
		\cO^\dag & \mathbf{1} & \cO'^\dag \\
		\cO' & W & \cO
		\end{matrix}
	\right\}^{\uniq c}_{\uniq m}
\diagramEnvelope{\begin{tikzpicture}[anchor=base,baseline]
	\node (vertU) at (0,0.7) [twopt] {};
	\node (vertD) at (0,-0.7) [threept] {$m$};
	\node (opO1) at (-1,-1.5) [below] {$\cO^\dag$};
	\node (opO2) at (-1,1.5) [above] {$\mathbf{1}$};
	\node (opO3) at (1,1.5) [above] {$\cO'$};
	\node (opW) at (1,-1.5) [below] {$W$};	
	\node at (0.1,0) [right] {$\cO'^\dag$};	
	\draw [spinning] (vertD)-- (opO1);
	\draw [finite with arrow] (vertU)-- (opO2);
	\draw [spinning] (vertU)-- (vertD);
	\draw [spinning] (vertU)-- (opO3);
	\draw [finite with arrow] (vertD)-- (opW);
\end{tikzpicture}}.
\ee
A three-point vertex where one of the legs is the unit operator $\mathbf{1}$ is simply a two-point vertex. We could of course omit the unit operator from the above diagram, but we have temporarily included it to emphasize that (\ref{eq:twoptcrossing}) is a special case of (\ref{eq:6jdefinition}).\footnote{To be precise, we have established~\eqref{eq:6jdefinition} only for non-degenerate operators $\cO_i$. However, as explained in section~\ref{sec:convarianttwopt}, the objects on either side of~\eqref{eq:twoptcrossing} span the space of covariant two-point functions, which provides the missing ingredient.} Again, the notation ``$\uniq$" means there is a unique corresponding structure or differential operator.

Now, let us add shadow integrals onto both $\cO$ and $\cO'$ in the above diagram.  Using (\ref{eq:shadownormalizationtwo}), we find
\be
\diagramEnvelope{\begin{tikzpicture}[anchor=base,baseline]
	\node (vertU) at (1,0) [cross] {};
	\node (vertD) at (0,-0.08) [threept] {$c$};
	\node (opO1) at (-1,0) [left] {$\cO$};
	\node (opO3) at (2,0) [right] {$\cO'^\dag$};
	\node (opW) at (0,-1) [below] {$W$};	
	\node at (0.6,0.1) [above] {$\cO'$};
	\draw [spinning] (opO1)-- (vertD);
	\draw [spinning] (vertD)-- (vertU);
	\draw [spinning] (opO3)-- (vertU);
	\draw [finite with arrow] (vertD)-- (opW);
\end{tikzpicture}}
	\quad=\quad
	\sum_{m}
	\left\{
		\begin{matrix}
		\cO^\dag & \mathbf{1} & \cO'^\dag \\
		\cO' & W & \cO
		\end{matrix}
	\right\}^{\uniq c}_{\uniq m}
\diagramEnvelope{\begin{tikzpicture}[anchor=base,baseline]
	\node (vertL) at (0,0) [cross] {};
	\node (vertR) at (1,-0.08) [threept, inner sep=1pt] {$m$};
	\node (opO1) at (-1,0) [left] {$\cO$};
	\node (opO3) at (2,0) [right] {$\cO'^\dag$};
	\node (opW) at (1,-1) [below] {$W$};
	\node at (0.4,0.1) [above] {$\cO^\dag$};
	\draw [spinning] (opO1)-- (vertL);
	\draw [spinning] (vertR)-- (vertL);
	\draw [spinning] (opO3)-- (vertR);
	\draw [finite with arrow] (vertR)-- (opW);
\end{tikzpicture}}
\label{eq:integrationbyparts}
\ee
Equation~(\ref{eq:integrationbyparts}) essentially implements two integrations by parts in the double integral (\ref{eq:gluing}), allowing us to move a differential operator from one side of a shadow integral to another.
In symbolic notation it has the form
\begin{equation}\label{eq:integrationbyparts_algebraic}
|\cD^{(c)A}\cO\rangle \bowtie  \langle \cO'^\dag|=\sum_{m}
\left\{
\begin{matrix}
\cO^\dag & \mathbf{1} &  \cO'^\dag \\
\cO' & W & \cO
\end{matrix}
\right\}^{\uniq c}_{\uniq m}
|\cO\rangle \bowtie \, \langle \cD^{(m)A}\cO'^\dag|.
\end{equation}

\subsection{Spinning conformal blocks review}

The expression (\ref{eq:generalblock}) for a general block can be combined with the ``differential basis" trick (\ref{eq:differentialbasistrick}) to express certain conformal blocks as derivatives of scalar blocks \cite{Costa:2011dw}. Suppose the exchanged operator $\cO=\cO_\ell$ is a traceless-symmetric tensor of spin $\ell$. Applying (\ref{eq:differentialbasistrick}) twice, we find
\be
\diagramEnvelope{\begin{tikzpicture}[anchor=base,baseline]
	\node (vertL) at (-2,0) [threept] {$a$};
	\node (vertR) at (2,-0.05) [threept] {$b$};
	\node (opO1) at (-2.5,-1) [below] {$\cO_1$};
	\node (opO2) at (-2.5,1) [above] {$\cO_2$};
	\node (opO3) at (2.5,1) [above] {$\cO_3$};
	\node (opW) at (2.5,-1) [below] {$\cO_4$};
	\node (shad) at (0,0.085) [cross] {};
	\node at (1,0.1) [above] {$\cO_\ell$};	
	\node at (-1,0.1) [above] {$\cO_\ell$};	
	\draw [spinning] (vertL)-- (opO1);
	\draw [spinning] (vertL)-- (opO2);
	\draw [spinning] (vertL)-- (shad);
	\draw [spinning] (vertR) -- (shad);
	\draw [spinning] (vertR)-- (opO3);
	\draw [spinning] (vertR)-- (opW);
\end{tikzpicture}}
&=
\diagramEnvelope{\begin{tikzpicture}[anchor=base,baseline]
	\node (vertL) at (-1.7,0) [twopt] {};
	\node (vertR) at (1.7,0) [twopt] {};
	\node (opO1) at (-2.7,-1.5) [below] {$\cO_1$};
	\node (opO2) at (-2.7,1.5) [above] {$\cO_2$};
	\node (opO3) at (2.7,1.5) [above] {$\cO_3$};
	\node (opO4) at (2.7,-1.5) [below] {$\cO_4$};
	\node (sc1) at (-2.4,-0.7) [threept] {};
	\node (sc2) at (-2.4,0.8) [threept] {};
	\node (sc3) at (2.4,0.8) [threept] {};
	\node (sc4) at (2.4,-0.7) [threept] {};
	\node (ld) at (-2.7,0) [threept] {$a$};
	\node (rd) at (2.7,0) [threept] {$b$};
	\node (shad) at (0,0) [cross] {};
	\node at (1,0.1) [above] {$\cO_\ell$};	
	\node at (-1,0.1) [above] {$\cO_\ell$};	
	\draw [scalar] (vertL)-- (sc1);
	\draw [scalar] (vertL)-- (sc2);
	\draw [spinning] (vertL)-- (shad);
	\draw [spinning] (vertR) -- (shad);
	\draw [scalar] (vertR)-- (sc3);
	\draw [scalar] (vertR)-- (sc4);
	\draw [spinning] (sc1) -- (opO1);
	\draw [spinning] (sc2) -- (opO2);
	\draw [spinning] (sc3) -- (opO3);
	\draw [spinning] (sc4) -- (opO4);
	\draw [finite] (ld) -- (sc1);
	\draw [finite] (ld) -- (sc2);
	\draw [finite] (rd) -- (sc3);
	\draw [finite] (rd) -- (sc4);
\end{tikzpicture}}.
\label{eq:spinningfromscalar}
\ee
Note that the right-hand side is a differential operator acting on conformal blocks with external scalars. In equations (\ref{eq:spinningfromscalar}) reads
\be
\label{eq:spinningfromscalareq}
G^{(a,b)\cO_1\cO_2\cO_3\cO_4}_{\De,\ell}(x_i) &= \mathscr{D}^{(a)\cO_1\cO_2}_{\f_1\f_2} \mathscr{D}^{(b)\cO_3\cO_4}_{\f_3\f_4} G^{\f_1\f_2\f_3\f_4}_{\De,\ell}(x_i).
\ee
The objects in (\ref{eq:spinningfromscalareq}) and (\ref{eq:diffbasisexplicit}) carry $\SO(d)$ indices which we have suppressed for simplicity.

Note that symmetric traceless tensors (STTs) are the only representations that can appear in an OPE of two scalars. Because $\mathscr{D}^{(a)\cO_i\cO_j}_{\f_i\f_j}$ can't change the representation of the exchanged operator, the expression (\ref{eq:spinningfromscalareq}) only works for conformal blocks with an exchanged STT\@. This is sufficient to compute all bosonic blocks in 3d, since all bosonic (irreducible) 3d Lorentz representations are STTs. However, in general there exist blocks which cannot be computed using (\ref{eq:spinningfromscalareq}).

To compute more general blocks, an approach advocated in \cite{Costa:2011dw,SimmonsDuffin:2012uy} is to identify the simplest set of blocks with general exchanged representations --- so-called ``seed" blocks --- compute them using some other method and apply the trick (\ref{eq:spinningfromscalareq}) to those.\footnote{Seed blocks for 4d theories were classified in \cite{Echeverri:2015rwa} and computed in \cite{Echeverri:2016dun} using the Casimir equation. In 3d, there are two types of seed blocks: external scalars with exchange of spin $\ell \in \Z$, and external fermion+scalars with exchange of spin-$\ell\in \Z+\frac 1 2$. A recursion relation for the latter type of 3d seed block was computed in \cite{Iliesiu:2015akf}.} However, our new techniques will make it simple to modify (\ref{eq:spinningfromscalar}) and (\ref{eq:spinningfromscalareq}) to compute any type of conformal block (including seed blocks).

\subsection{Expression for general conformal blocks}
\label{sec:generalblockexpression}

The basic idea is to allow the differential operators acting on the left and right to be conformally-covariant, instead of simply invariant,
\be
\label{eq:moregeneralbasicidea}
G^{(a,b)\cO_1\cO_2\cO_3\cO_4}_{\cO}(x_i) &= \mathscr{D}_\textrm{left}^{(a)A} \mathscr{D}_\textrm{right}^{(b)}{}_A G^{\f_1\f_2\f_3\f_4}_{\De,\ell}(x_i),
\ee
where $A$ is an index for some finite-dimensional representation $W$ of $\SO(d+1,1)$. The exchanged operator then lives in the tensor product $W\otimes V_{\De,\ell}$, which can contain primaries with more general Lorentz representations. We must be careful to choose $\mathscr{D}_\textrm{left}^{(a)A}$ and $\mathscr{D}_\textrm{right}^{(b)}{}_A$ so that precisely one irreducible subrepresentation of $W\otimes V_{\De,\ell}$ contributes. However, this can be done easily and systematically using the techniques we have developed.

Let us begin with the object we would like to compute: a conformal block for the exchange of an operator $\cO$ transforming in $V_{\De,\rho}$,
\be
\label{eq:generalblockagain}
G_{\cO}^{(a,b)\cO_1\cO_2\cO_3\cO_4}(x_i) &=
\diagramEnvelope{\begin{tikzpicture}[anchor=base,baseline]
	\node (vertL) at (-2,0) [threept] {$a$};
	\node (vertR) at (2,-0.05) [threept] {$b$};
	\node (opO1) at (-2.5,-1) [below] {$\cO_1$};
	\node (opO2) at (-2.5,1) [above] {$\cO_2$};
	\node (opO3) at (2.5,1) [above] {$\cO_3$};
	\node (opW) at (2.5,-1) [below] {$\cO_4$};
	\node (shad) at (0,.08) [cross] {};
	\node at (1,0.1) [above] {$\cO^\dag$};	
	\node at (-1,0.1) [above] {$\cO$};	
	\draw [spinning] (vertL)-- (opO1);
	\draw [spinning] (vertL)-- (opO2);
	\draw [spinning] (vertL)-- (shad);
	\draw [spinning] (vertR) -- (shad);
	\draw [spinning] (vertR)-- (opO3);
	\draw [spinning] (vertR)-- (opW);
\end{tikzpicture}}.
\ee
Let $W$ be a finite-dimensional representation of the conformal group such that $W^*\otimes V_{\De,\rho}$ contains a spin-$\ell$ STT representation $\cO_\ell$. We can introduce a bubble of $W$ and $\cO_\ell$ in the middle of the diagram, so that the shadow integral itself involves a spin-$\ell$ representation. Note that
\be
\diagramEnvelope{\begin{tikzpicture}[anchor=base,baseline]
	\node (opLeft) at (-2,0) [left] {$\cO$};
	\node (vertLeft) at (-1,-0.08) [threept,inner sep=1pt] {$m$};
	\node (vertCenter) at (0,0) [cross] {};
	\node (vertRight) at (1,-0.08) [threept] {$n$};
	\node (opRight) at (2,0) [right] {$\cO^\dag$};
	\node at (0,-0.8) [below] {$W$};
	\node at (0,0.1) [above] {$\cO_\ell$};
	\draw [spinning] (opLeft) -- (vertLeft);
	\draw [spinning] (vertLeft) -- (vertCenter);
	\draw [spinning] (opRight) -- (vertRight);
	\draw [spinning] (vertRight) -- (vertCenter);
	\draw [finite with arrow] (vertLeft) to[out=-90,in=-90] (vertRight);
\end{tikzpicture}}
&=
\sum_{p}
	\left\{
		\begin{matrix}
		\cO^\dag & \mathbf{1} & \cO_\ell \\
		\cO_\ell & W & \cO
		\end{matrix}
	\right\}^{\uniq m}_{\uniq p}
\diagramEnvelope{\begin{tikzpicture}[anchor=base,baseline]
	\node (opLeft) at (-2,0) [left] {$\cO$};
	\node (shad) at (-1,0) [cross] {};
	\node (vert1) at (0,-0.045) [threept,inner sep=1pt] {$p$};
	\node (vert2) at (1.5,-0.08) [threept] {$n$};
	\node (opRight) at (2.5,0) [right] {$\cO^\dag$};
	\node at (0.8,-0.8) [below] {$W$};
	\node at (0.8,0.2) [above] {$\cO_\ell$};
	\draw [spinning] (opLeft) -- (shad);
	\draw [spinning] (vert1) -- (shad);
	\draw [spinning] (vert2) -- (vert1);
	\draw [spinning] (opRight) -- (vert2);
	\draw [finite with arrow] (vert1) to[out=-90,in=-90] (vert2);
\end{tikzpicture}}
\nn\\
&=
\sum_{p}
	\left\{
		\begin{matrix}
		\cO^\dag & \mathbf{1} & \cO_\ell \\
		\cO_\ell & W & \cO
		\end{matrix}
	\right\}^{\uniq m}_{\uniq p}
\begin{pmatrix}
\cO^\dag\\
\cO_\ell W
\end{pmatrix}^{pn}
\diagramEnvelope{\begin{tikzpicture}[anchor=base,baseline]
	\node (opLeft) at (-1,0) [left] {$\cO$};
	\node (shad) at (0,0) [cross] {};
	\node (opRight) at (1,0) [right] {$\cO^\dag$};
	\draw [spinning] (opLeft) -- (shad);
	\draw [spinning] (opRight) -- (shad);
\end{tikzpicture}},
\ee
where we have used (\ref{eq:integrationbyparts}) to move the differential operator $\cD^{(m)A}$ from one side of the shadow integral to the other, and (\ref{eq:bubble}) to simplify a product of differential operators $\cD^{(p)A}\cD^{(n)}_A$ on a single leg.
Thus, we have
\be
\label{eq:newdiagram}
G_{\cO}^{(a,b)\cO_1\cO_2\cO_3\cO_4}(x_i) &= \frac{1}{M_{mn}}
\diagramEnvelope{\begin{tikzpicture}[anchor=base,baseline]
	\node (vertL) at (-3,0) [threept] {$a$};
	\node (vertR) at (3,-0.05) [threept] {$b$};
	\node (opO1) at (-3.5,-1) [below] {$\cO_1$};
	\node (opO2) at (-3.5,1) [above] {$\cO_2$};
	\node (opO3) at (3.5,1) [above] {$\cO_3$};
	\node (opW) at (3.5,-1) [below] {$\cO_4$};
	\node (shad) at (0,0.08) [cross] {};
	\node (bubL) at (-1.5,0) [threept,inner sep=1.5pt] {$m$};
	\node (bubR) at (1.5,0) [threept] {$n$};
	\node at (2.2,0.1) [above] {$\cO^\dag$};	
	\node at (0,0.1) [above] {$\cO_\ell$};	
	\node at (-2.2,0.1) [above] {$\cO$};
	\node at (0,-1.1) [below] {$W$};	
	\draw [spinning] (vertL)-- (opO1);
	\draw [spinning] (vertL)-- (opO2);
	\draw [spinning] (vertL)-- (bubL);
	\draw [spinning] (vertR) -- (bubR);
	\draw [spinning] (bubL) -- (shad);
	\draw [spinning] (bubR) -- (shad);
	\draw [spinning] (vertR)-- (opO3);
	\draw [spinning] (vertR)-- (opW);
 	\draw [finite with arrow] (bubL) to[out=-90,in=-90] (bubR);
\end{tikzpicture}},
\ee
where
\be
\label{eq:defofmn}
M_{mn} &\equiv \sum_{p}
	\left\{
		\begin{matrix}
		\cO^\dag & \mathbf{1} & \cO_\ell \\
		\cO_\ell & W & \cO
		\end{matrix}
	\right\}^{\uniq m}_{\uniq p}
\begin{pmatrix}
\cO^\dag\\
\cO_\ell W
\end{pmatrix}^{pn}.
\ee
We do not sum over $m,n$ in (\ref{eq:newdiagram}) --- rather we can choose any $m,n$ such that $M_{mn}$ is nonzero.

Now we use crossing to move the $W$ vertices to the external legs. Let us focus on the left-hand side of the diagram (\ref{eq:newdiagram}),
\be
\diagramEnvelope{\begin{tikzpicture}[anchor=base,baseline]
	\node (vertL) at (0,0) [threept] {$a$};
	\node (vertR) at (1.4,0) [threept] {$m$};
	\node (opO1) at (-0.5,-1) [below] {$\cO_1$};
	\node (opO2) at (-0.5,1) [above] {$\cO_2$};
	\node (opO3) at (2.4,0.2) [right] {$\cO_\ell$};
	\node (opW) at (2,-1) [below] {$W$};	
	\node at (0.7,0.1) [above] {$\cO$};	
	\draw [spinning] (vertL)-- (opO1);
	\draw [spinning] (vertL)-- (opO2);
	\draw [spinning] (vertL)-- (vertR);
	\draw [spinning] (vertR)-- (opO3);
	\draw [finite with arrow] (vertR)-- (opW);
\end{tikzpicture}}
	&=
	\sum_{\cO',r,s}
	\left\{
		\begin{matrix}
		\cO_1 & \cO_2 & \cO' \\
		\cO_\ell & W & \cO
		\end{matrix}
	\right\}^{am}_{rs}
\diagramEnvelope{\begin{tikzpicture}[anchor=base,baseline]
	\node (vertU) at (0.3,0.7) [threept] {$r$};
	\node (vertD) at (0,-0.7) [threept] {$s$};
	\node (opO1) at (-1,-1.5) [below] {$\cO_1$};
	\node (opO2) at (-1,1.5) [above] {$\cO_2$};
	\node (opO3) at (1.2,0.8) [right] {$\cO_\ell$};
	\node (opW) at (1,-1.5) [below] {$W$};	
	\node at (0.1,0) [right] {$\cO'$};	
	\draw [spinning] (vertD)-- (opO1);
	\draw [spinning] (vertU)-- (opO2);
	\draw [spinning] (vertU)-- (vertD);
	\draw [spinning] (vertU)-- (opO3);
	\draw [finite with arrow] (vertD)-- (opW);
\end{tikzpicture}}
\ee
Now $\cO_2$ and $\cO'$ participate in a three-point vertex with an STT operator $\cO_\ell$, so we can use (\ref{eq:differentialbasistrick}) to obtain
\be
	&=
	\sum_{\cO',r,s}
	\left\{
		\begin{matrix}
		\cO_1 & \cO_2 & \cO' \\
		\cO_\ell & W & \cO
		\end{matrix}
	\right\}^{am}_{rs}
\diagramEnvelope{\begin{tikzpicture}[anchor=base,baseline]
	\node (vertU) at (0.8,0.9) [twopt] {};
	\node (diffU) at (-0.1,0.5) [threept] {$r$};
	\node (vertD) at (0,-0.7) [threept] {$s$};
	\node (opO1) at (-1,-1.5) [below] {$\cO_1$};
	\node (opO2) at (-1,1.5) [above] {$\cO_2$};
	\node (sc2) at (0,1.2) [threept] {};
	\node (sc1) at (0.4,0.2) [threept] {};
	\node (opO3) at (1.8,0.9) [right] {$\cO_\ell$};
	\node (opW) at (1,-1.5) [below] {$W$};	
	\node at (0.1,-0.3) [right] {$\cO'$};	
	\draw [spinning] (vertD)-- (opO1);
	\draw [spinning] (sc2)-- (opO2);
	\draw [scalar] (vertU)-- (sc2);
	\draw [scalar] (vertU)-- (sc1);
	\draw [spinning] (sc1) -- (vertD);
	\draw [spinning] (vertU)-- (opO3);
	\draw [finite with arrow] (vertD)-- (opW);
	\draw [finite] (diffU) -- (sc1);
	\draw [finite] (diffU) -- (sc2);
\end{tikzpicture}}.
\ee
Thus, we find
\be
\mathscr{D}_\textrm{left}^{(a)A} \<\f_1\f_2\cO_\ell\> &= \frac{1}{\sqrt{M_{mn}}}\cD^{(m)A}_x\<\cO_1\cO_2 \cO(x)\>^{(a)},\nn\\
\textrm{where}\quad\mathscr{D}_\textrm{left}^{(a)A} &\equiv \frac{1}{\sqrt{M_{mn}}}\sum_{\cO',r,s}
	\left\{
		\begin{matrix}
		\cO_1 & \cO_2 & \cO' \\
		\cO_\ell & W & \cO
		\end{matrix}
	\right\}^{am}_{rs}\cD^{(s)A}_{x_1} \mathscr{D}_{\f_1\f_2}^{(r)\cO'\cO_2},
\ee
where the $x$ subscript indicates that $\cD_x^{(m)A}$ acts on the operator $\cO(x)$.
Similarly,
\be
\mathscr{D}_\textrm{right}^{(b)}{}_A \<\f_4\f_3\cO_\ell\> &= \frac{1}{\sqrt{M_{mn}}}\cD^{(n)}_{A,x}\<\cO_4\cO_3 \cO^\dag(x)\>^{(b)}
\nn\\
\mathscr{D}_\textrm{right}^{(b)}{}_A &\equiv \frac{1}{\sqrt{M_{mn}}}\sum_{\cO',t,u}
	\left\{
		\begin{matrix}
		\cO_4 & \cO_3 & \cO' \\
		\cO_\ell & W^* & \cO^\dag
		\end{matrix}
	\right\}^{bn}_{tu}\cD^{(u)}_{x_4,A} \mathscr{D}_{\f_4\f_3}^{(t)\cO'\cO_3}.
\ee
Together with (\ref{eq:defofmn}), this gives (\ref{eq:moregeneralbasicidea}).

Schematically, applying $\mathscr{D}_\textrm{left}^{(a)A} \mathscr{D}_\textrm{right}^{(b)}{}_A$ to a scalar block results in a graph with the topology
\be
\diagramEnvelope{\begin{tikzpicture}[anchor=base,baseline]
	\node (vertL) at (-1,0) [threept] {};
	\node (vertR) at (1,0) [threept] {};
	\node (opO1) at (-1.5,-1) [below] {};
	\node (opO2) at (-1.5,1) [above] {};
	\node (opO3) at (1.5,1) [above] {};
	\node (opW) at (1.5,-1) [below] {};
	\node (shad) at (0,0) [cross] {};
	\draw [spinning] (vertL)-- (opO1);
	\draw [spinning] (vertL)-- (opO2);
	\draw [spinning] (vertL)-- (shad);
	\draw [spinning] (vertR) -- (shad);
	\draw [spinning] (vertR)-- (opO3);
	\draw [spinning] (vertR)-- (opW);
\end{tikzpicture}}
&=
\sum
\diagramEnvelope{\begin{tikzpicture}[anchor=base,baseline]
	\node (vertL) at (-0.8,0) [twopt] {};
	\node (vertR) at (0.8,0) [twopt] {};
	\node (opO1) at (-1.7,-2) [below] {};
	\node (opO2) at (-1.7,1.5) [above] {};
	\node (opO3) at (1.7,1.5) [above] {};
	\node (opO4) at (1.7,-2) [below] {};
	\node (v4) at (1.5,-1.3) [threept] {};
	\node (v1) at (-1.5,-1.3) [threept] {};
	\node (sc1) at (-1.4,-0.7) [threept] {};
	\node (sc2) at (-1.4,0.8) [threept] {};
	\node (sc3) at (1.4,0.8) [threept] {};
	\node (sc4) at (1.4,-0.7) [threept] {};
	\node (shad) at (0,0) [cross] {};
	\draw [scalar] (vertL)-- (sc1);
	\draw [scalar] (vertL)-- (sc2);
	\draw [spinning] (vertL)-- (shad);
	\draw [spinning] (vertR) -- (shad);
	\draw [scalar] (vertR)-- (sc3);
	\draw [scalar] (vertR)-- (sc4);
	\draw [spinning] (sc1) -- (v1);
	\draw [spinning] (v1) -- (opO1);
	\draw [spinning] (sc2) -- (opO2);
	\draw [spinning] (sc3) -- (opO3);
	\draw [spinning] (sc4) -- (v4);
	\draw [spinning] (v4) -- (opO4);
	\draw [finite] (sc2) -- (sc1);
	\draw [finite] (sc4) -- (sc3);
	\draw [finite with arrow] (v1) -- (v4);
\end{tikzpicture}}.
\label{eq:spinningfromscalarmoregeneral}
\ee
The inner object is a conformal block for external scalars (dashed lines). Weight-shifting operators dress it in a way  such that (a component of) the tensor $W\otimes \cO_\ell$ propagates from left to right.

The above calculation has the advantage of being extremely general. However, it requires us to make non-canonical choices of $W$ and the differential operators $m,n$. Different choices for these objects will result in naively different, but equivalent expressions for our conformal block in terms of derivatives of scalar blocks. In some cases, to obtain the simplest possible expression, we may want to proceed slightly differently.

\subsection{Expression for seed blocks}
\label{sec:seedalgo}

Let us consider for example the problem of computing the seed blocks. For simplicity of discussion, we will restrict to the case of even $d$. The case of odd $d$ can be analyzed similarly\footnote{The complication in the case of odd $d$ is that when $\cO$ is a fermion, we cannot choose the external operators so that there is a single tensor structure on each side of the seed block. Instead, the minimum is two. This is related to the fact that the irreducible fermionic representations of $\SO(d-1)$ are necessarily chiral when $d$ is odd.} (for example, we construct the 3d seed block in section~\ref{sec:seed3d}).

 As mentioned above, seed blocks are the simplest conformal blocks that exchange a primary $\cO$ in a given $\SO(d)$ representation. In particular, we can always choose the external operators in a way such that there exists a single three-point structure on either side of the block, for example
\be
\diagramEnvelope{\begin{tikzpicture}[anchor=base,baseline]
	\node (vertL) at (-2,0) [threept] {};
	\node (vertR) at (2,0) [threept] {};
	\node (opO1) at (-2.5,-1) [below] {$\cO_1$};
	\node (opO2) at (-2.5,1) [above] {$\cO_2$};
	\node (opO3) at (2.5,1) [above] {$\cO_3$};
	\node (opO4) at (2.5,-1) [below] {$\cO_4$};
	\node (shad) at (0,0) [cross] {};
	\node at (1,0.1) [above] {$\cO^\dag$};	
	\node at (-1,0.1) [above] {$\cO$};	
	\draw [scalar] (vertL)-- (opO1);
	\draw [spinning] (vertL)-- (opO2);
	\draw [spinning] (vertL)-- (shad);
	\draw [spinning] (vertR) -- (shad);
	\draw [scalar] (vertR)-- (opO3);
	\draw [spinning] (vertR)-- (opO4);
\end{tikzpicture}},
\ee
where $\cO_1$ and $\cO_3$ are scalars, while $\cO_2$ and $\cO_4$ transform in representations which are obtained from that of $\cO$ by, for example, removing the first row of the $\SO(d)$ Young diagram.

To express this seed block in terms of scalar blocks, let us first focus on the left three-point structure. We can write
\be\label{eq:seedstep1}
\diagramEnvelope{\begin{tikzpicture}[anchor=base,baseline]
	\node (vertL) at (0,0) [threept] {};
	\node (opO1) at (-0.5,-1) [below] {$\cO_1$};
	\node (opO2) at (-0.5,1) [above] {$\cO_2$};
	\node (shad) at (2,0) [] {};
	\node at (1,0.1) [above] {$\cO$};	
	\draw [scalar] (vertL)-- (opO1);
	\draw [spinning] (vertL)-- (opO2);
	\draw [spinning] (vertL)-- (shad);
\end{tikzpicture}}=
\frac{1}{C_{mn}}\diagramEnvelope{\begin{tikzpicture}[anchor=base,baseline]
	\node (vertL) at (0,0) [threept] {};
	\node (vertWL) at (-0.5,1) [threept] {$n$};
	\node (vertWR) at (1.2,-0.085) [threept,inner sep=1.5pt] {$m$};
	\node (opO1) at (-0.5,-1) [below] {$\cO_1$};
	\node (opO2) at (-1,2) [above] {$\cO_2$};
	\node (shad) at (2.7,0) [] {};
	\node at (1.8,0.1) [above] {$\cO$};	
	\node at (0.6,-0.1) [below] {$\cO_\ell$};	
	\node at (0.8,0.9) [above] {$W$};
	\node at (-0.3,0.3) [left] {$\cO_2'$};
	\draw [scalar] (vertL)-- (opO1);
	\draw [scalar] (vertL)-- (vertWL);
	\draw [spinning] (vertWL)-- (opO2);
	\draw [spinning] (vertL)-- (vertWR);
	\draw [spinning] (vertWR)-- (shad);
	\draw [finite with arrow] (vertWL) to[out=0,in=120] (vertWR);
\end{tikzpicture}},
\ee
where due to the uniqueness of the tensor structures, we are free to choose $n,m$ and $W$ as long as $\cO'_2$ is a scalar and $\cO_\ell$ is a STT\@. In what follows, we will perform manipulations with the operator labeled by $m$, but we will leave $n$ untouched. For this reason, it is convenient to choose $W$ and $n$ so that $n$ is a 0-th order differential operator. According to theorem~\ref{thm:operatorclassification}, this means that the primary of $W^*$ should transform in the same representation as $\cO_2$, i.e.\ $(W^*)_{-j}=(W_j)^*=\rho_2$, where $\rho_i$ is the $\SO(d)$ representation of $\cO_i$.\footnote{Such a $W^*$ always exists. In fact, there are infinitely many choices differing by the value of $j$, and the $W^*$ with minimal $j$ is obtained by prepending a $0$ to the list of Dynkin labels of $\rho_2$ (in the natural ordering where the vector label is the first and the spinor labels are the last).} On the other hand, the condition for existence of the structure on the left is 
\be
	(\rho\otimes\rho_2)^{\SO(d-1)}\neq 0,
\ee
where $\rho$ is the representation of $\cO$. This is equivalent to saying that there is a STT in the tensor product $\rho\otimes\rho_2=\rho\otimes (W_j)^*$. In turn, this leads to
\be
	\rho\in \text{STT}\otimes W_j.
\ee
According to theorem~\ref{thm:operatorclassification}, this implies that we can use an order-$(2j+1)$ differential operator associated to $W^*$ in place of $m$.

We can now use \eqref{eq:integrationbyparts} to move $m$ to the right three-point structure to find the piece
\be\label{eq:seedstep2}
	\sum_{c}
	\left\{
		\begin{matrix}
		\cO_\ell & \mathbf{1} & \cO^\dag \\
		\cO & W^* & \cO_\ell
		\end{matrix}
	\right\}^{\uniq m}_{\uniq c}
\diagramEnvelope{\begin{tikzpicture}[anchor=base,baseline]
	\node (shad) at (0,0) [cross] {};
	\node (vertW) at (1,-0.085) [threept] {$c$};
	\node (vertR) at (2.3,0) [threept] {};
	\node (opOl) at (-1,0) [left] {$\cO_\ell$};
	\node (opO) at (1.7,0.1) [above] {$\cO^\dag$};
	\node (opW) at (-0.5,1) [above] {$W$};
	\node at (0.4,0.1) [above] {$\cO_\ell$};
	\node (opO3) at (2.8,1) [above] {$\cO_3$};
	\node (opO4) at (2.8,-1) [below] {$\cO_4$};
	\draw [spinning] (opOl)-- (shad);
	\draw [spinning] (vertW)-- (shad);
	\draw [spinning] (vertR)-- (vertW);
	\draw [finite with arrow] (opW) to[out=0,in=90] (vertW);
	\draw [scalar] (vertR)-- (opO3);
	\draw [spinning] (vertR)-- (opO4);
\end{tikzpicture}},
\ee
to which we can apply a crossing transformation to find
\be
	=\sum_{c}\sum_{\cO_3',a,b}
	\left\{
		\begin{matrix}
		\cO_\ell & \mathbf{1} & \cO^\dag \\
		\cO & W^* & \cO_\ell
		\end{matrix}
	\right\}^{\uniq m}_{\uniq c}
	\left\{
		\begin{matrix}
		\cO_3 & \cO_4 & \cO_3' \\
		\cO_\ell & W^* & \cO^\dag
		\end{matrix}
	\right\}^{\uniq c}_{a b}
\diagramEnvelope{\begin{tikzpicture}[anchor=base,baseline]
	\node (shad) at (0,0) [cross] {};
	\node (vertW) at (3,1) [threept] {$b$};
	\node (vertR) at (2.3,-0.08) [threept] {$a$};
	\node (opOl) at (-1,0) [left] {$\cO_\ell$};
	\node (opO3p) at (2.7,0.5) [right] {$\cO_3'$};
	\node (opW) at (-0.3,1.1) [left] {$W$};
	\node at (1,0.1) [above] {$\cO_\ell$};
	\node (opO3) at (3.5,2) [above, inner sep=0pt] {$\cO_3$};
	\node (opO4) at (2.8,-1) [below] {$\cO_4$};
	\draw [spinning] (opOl)-- (shad);
	\draw [spinning] (vertR)-- (shad);
	\draw [spinning] (vertR)-- (vertW);
	\draw [finite with arrow] (opW) to[out=0,in=180] (vertW);
	\draw [scalar] (vertW)-- (opO3);
	\draw [spinning] (vertR)-- (opO4);
\end{tikzpicture}}.
\ee
We now use~\eqref{eq:differentialbasistrick} to write the full seed block as
\be\label{eq:seedgeneral}
\frac{1}{C_{mn}}\sum_{c}\sum_{\cO_3',a,b}
\left\{
\begin{matrix}
	\cO_\ell & \mathbf{1} & \cO^\dag \\
	\cO & W^* & \cO_\ell
\end{matrix}
\right\}^{\uniq m}_{\uniq c}
\left\{
\begin{matrix}
	\cO_3 & \cO_4 & \cO_3' \\
	\cO_\ell & W^* & \cO^\dag
\end{matrix}
\right\}^{\uniq c}_{a b}
\diagramEnvelope{\begin{tikzpicture}[anchor=base,baseline]
\node (opO1) at (-1,-2) [below, inner sep=0pt] {$\cO_1$};
\node (opO2) at (-1.5,3) [above, inner sep=0pt] {$\cO_2$};
\node (vertWL) at (-1,2) [threept] {$n$};
\node (vertL) at (0,0) [twopt] {};
\node (shad) at (1,0) [cross] {};
\node (vertW) at (3,2) [threept] {$b$};
\node (vertR) at (2,0) [twopt] {};
\node (vertRD) at (3,0) [threept] {$a$};
\node (vertDU) at (2.5,1) [twopt] {};
\node (vertDD) at (2.5,-1) [twopt] {};
\node (opO3p) at (2.7,1.3) [right] {$\cO_3'$};
\node (opW) at (1,2.1) [above] {$W$};
\node at (1,0.1) [above] {$\cO_\ell$};
\node (opO3) at (3.5,3) [above, inner sep=0pt] {$\cO_3$};
\node (opO4) at (3,-2) [below] {$\cO_4$};
\draw [spinning] (vertL)-- (shad);
\draw [spinning] (vertR)-- (shad);
\draw [scalar] (vertR)-- (vertDU);
\draw [spinning] (vertDU)-- (vertW);
\draw [finite with arrow] (vertWL)-- (vertW);
\draw [scalar] (vertW)-- (opO3);
\draw [scalar] (vertR)--(vertDD);
\draw [spinning] (vertDD)-- (opO4);
\draw [spinning] (vertWL)-- (opO2);
\draw [scalar] (vertL)--(vertWL);
\draw [finite] (vertDU)-- (vertRD);
\draw [finite] (vertDD)-- (vertRD);
\draw [scalar] (vertL) -- (opO1);
\end{tikzpicture}}.
\ee
The advantage of this over the more general~\eqref{eq:spinningfromscalarmoregeneral} is that we have been able to choose the  differential operator $n$ to be of zeroth order, and we also avoided acting with differential operators on one of the legs. This reduces the order of the full differential operator acting on the scalar conformal block relative to the general expression. Let us now consider some examples.

\subsubsection{Example: seed block in 3d}
\label{sec:seed3d}

Our first example is the fermion seed block in 3 dimensions. The $\SO(3)$ representations are labeled by a single (half-)integer $\ell$. If $\ell$ is integral, then the representation is bosonic, and operators $\cO_\ell$ can be exchanged in a four-point function of scalars. If $\ell$ is half-integral, then the representation is fermionic and $\cO_\ell$ can be exchanged in a scalar-fermion four-point function\footnote{Since our analysis is purely kinematical, we will label operators by their scaling dimensions and spins.}
\be\label{eq:3Dfermion_scalar_4pt}
	\<\psi_{\De_1}(s_1,x_1)\phi_{\De_2}(x_2)\phi_{\De_3}(x_3)\psi_{\De_4}(s_4,x_4)\>.
\ee
It is therefore possible to express any conformal block in terms of a scalar or fermion-scalar block. The latter were computed in~\cite{Iliesiu:2015akf} by a Zamolodchikov type recursion relation. In this section we will show how the fermion-scalar block can be expressed as a third-order differential operator acting on a scalar conformal block, thus reducing all conformal blocks in 3d to derivatives of scalar blocks.

For ease of comparison, we will follow the conventions of~\cite{Iliesiu:2015akf}. Let us review basic properties of~\eqref{eq:3Dfermion_scalar_4pt}. On each side of conformal block there exist 2 three-point structures, which can be defined using the 5d embedding formalism as
\be\label{eq:3Dspinorthreepoint}
	\<\psi_{\De_1}\phi_{\De_2}\cO_{\De,\ell}\>^{(+)}=\diagramEnvelope{\begin{tikzpicture}[anchor=base,baseline]
	\node (vert) at (0,0) [threept,inner sep=0pt] {$+$};
	\node (opO1) at (-0.5,-1) [below] {$\psi_1$};
	\node (opO2) at (-0.5,1) [above] {$\phi_2$};
	\node (opO3) at (1,0.1) [right] {$\cO_\ell$};	
	\draw [spinning] (vert)-- (opO1);
	\draw [scalar] (vert)-- (opO2);
	\draw [spinning] (vert)-- (opO3);
	\end{tikzpicture}}&=\frac{\<S_1S_0\>\<S_0X_1X_2S_0\>^{\ell-\frac{1}{2}}}{
			X_{12}^{\frac{\De_1+\De_2-\De+\ell-\frac{1}{2}}{2}}
			X_{20}^{\frac{\De_2+\De-\De_1+\ell-\frac{1}{2}}{2}}
			X_{01}^{\frac{\De_1+\De-\De_2+\ell+\frac{1}{2}}{2}}
		},\nn\\
	\<\psi_{\De_1}\phi_{\De_2}\cO_{\De,\ell}\>^{(-)}=\diagramEnvelope{\begin{tikzpicture}[anchor=base,baseline]
	\node (vert) at (0,0) [threept,inner sep=0pt] {$-$};
	\node (opO1) at (-0.5,-1) [below] {$\psi_1$};
	\node (opO2) at (-0.5,1) [above] {$\phi_2$};
	\node (opO3) at (1,0.1) [right] {$\cO_\ell$};	
	\draw [spinning] (vert)-- (opO1);
	\draw [scalar] (vert)-- (opO2);
	\draw [spinning] (vert)-- (opO3);
	\end{tikzpicture}}&=\frac{\<S_1X_2S_0\>\<S_0X_1X_2S_0\>^{\ell-\frac{1}{2}}}{
		X_{12}^{\frac{\De_1+\De_2-\De+\ell+\frac{1}{2}}{2}}
		X_{20}^{\frac{\De_2+\De-\De_1+\ell+\frac{1}{2}}{2}}
		X_{01}^{\frac{\De_1+\De-\De_2+\ell-\frac{1}{2}}{2}}
	},
\ee
and analogously for the right three-point function ($1\rightarrow 4$, $2\rightarrow3$). Here the index $0$ refers to the intermediate operator $\cO_\ell$ of dimension $\De$, and we labeled the three-point structures by their $P$-parity. Accordingly, there exist 4 conformal blocks, which can be expanded in a basis of four-point tensor structures,
\be
G^{ab}_\mathrm{seed}(s_1,s_4,x_i)=\diagramEnvelope{\begin{tikzpicture}[anchor=base,baseline]
\node (vertL) at (-2,0) [threept] {$a$};
\node (vertR) at (2,-0.055) [threept] {$b$};
\node (opO1) at (-2.5,-1) [below] {$\psi_1$};
\node (opO2) at (-2.5,1) [above] {$\phi_2$};
\node (opO3) at (2.5,1) [above] {$\phi_3$};
\node (opW) at (2.5,-1) [below] {$\psi_4$};
\node (shad) at (0,0.08) [cross] {};
\node at (1,0.2) [above] {$\cO_\ell$};	
\node at (-1,0.2) [above] {$\cO_\ell$};	
\draw [spinning] (vertL)-- (opO1);
\draw [scalar] (vertL)-- (opO2);
\draw [spinning] (vertL)-- (shad);
\draw [spinning] (vertR) -- (shad);
\draw [scalar] (vertR)-- (opO3);
\draw [spinning] (vertR)-- (opW);
\end{tikzpicture}}
=
\sum_{I=1}^4 g^{ab}_I(z,\bar z)\mathbb{T}^I_4(s_1,s_4,x_i).
\ee
As indicated, there exist $4$ four-point tensor structures $\mathbb{T}^I_4$. Out of them, two structures are parity-even and participate in conformal blocks $G^{++}, G^{--}$, and two are parity-odd and participate in $G^{+-}$ and $G^{-+}$. We give their exact form in appendix~\ref{app:3Dblocks}.

We now compute the seed blocks using the algorithm\footnote{Because we want to follow the conventions of~\cite{Iliesiu:2015akf}, some minor modifications to the algorithm are required, such as reordering of the operators.} from section~\ref{sec:seedalgo}, and we will use the spinor representation $W=\mathcal{S}$ of the 3d conformal group to translate traceless-symmetric representations into fermionic representations. The first step is to write the left three-point structures in the form~\eqref{eq:seedstep1}. Let us define the scalar three-point structures as
\be\label{eq:3Dscalartheepoint}
	\<\phi_{\De_1}\phi_{\De_2}\cO_{\De,\ell}\>=
	\diagramEnvelope{\begin{tikzpicture}[anchor=base,baseline]
		\node (vert) at (0,0) [twopt] {};
		\node (opO1) at (-0.5,-1) [below] {$\phi_1$};
		\node (opO2) at (-0.5,1) [above] {$\phi_2$};
		\node (opO3) at (1,0) [right] {$\cO_\ell$};	
		\draw [scalar] (vert)-- (opO1);
		\draw [scalar] (vert)-- (opO2);
		\draw [spinning] (vert)-- (opO3);
	\end{tikzpicture}}&=\frac{\<S_0X_1X_2S_0\>^{\ell}}{
	X_{12}^{\frac{\De_1+\De_2-\De+\ell}{2}}
	X_{20}^{\frac{\De_2+\De-\De_1+\ell}{2}}
	X_{01}^{\frac{\De_1+\De-\De_2+\ell}{2}}
},
\ee
In~\eqref{eq:seedstep1} we will use the zeroth order operator $\cD^{-+}_a$ in place of $n$. For $m$ we can take any differential operator of the appropriate parity. A simple choice is to use $\cD^{-+}_a$ for the parity even structure, and 
$\cD^{--}_a$ for the parity-odd structure. We then have
\be
	\<\psi_{\De_1}\phi_{\De_2}\cO_{\De,\ell}\>^{(\pm)}=\frac{1}{C_\pm}\<\cD^{-+}_1\cD^{-\pm}_0\>\<\phi_{\De_1+\half}\phi_{\De_2}\cO_{\De+\half,\ell\mp\frac{1}{2}}\>.
\ee
It is easy to find by a direct computation that
\be
	C_+=1,\qquad C_-=2\ell+1.
\ee
Note that $\cO_{\De+\half,\ell\mp\frac{1}{2}}$ is the operator which is going to be exchanged in the scalar block. If we chose different operators for $m$ (i.e.\ $\cD^{+\mp}$) in~\eqref{eq:seedstep1}, then we would relate the seed block to different scalar blocks (in particular, it doesn't make sense to mix these choices).

\paragraph{Crossing of 2-point functions} The next step is to learn how to push the operators $\cD^{-\pm}_a$ through the shadow integral. For that we need to fix the normalization of two-point functions, which we choose to be
\be\label{eq:3Dtwopoint}
	\<\cO_{\De,\ell}(S_1,X_1)\cO_{\De,\ell}(S_2,X_2)\>=i^{2\ell}\frac{\<S_1S_2\>^{2\ell}}{X_{12}^{\De+\ell}}.
\ee
The definition of $6j$ symbols~\eqref{eq:twoptcrossing} is in our case
\begin{multline}
	{\cD}^{-\pm}_{2,a}\<\cO_{\De+\half,\ell\mp\frac{1}{2}}(S_1,X_1)\cO_{\De+\half,\ell\mp\frac{1}{2}}(S_2,X_2)\>=\\=
	\left\{
	\begin{matrix}
		\cO_{\De+\half,\ell\mp\frac{1}{2}} & \mathbf{1} & \cO_{\De,\ell} \\
		\cO_{\De,\ell} & \mathcal{S} & \cO_{\De+\half,\ell\mp\frac{1}{2}}
	\end{matrix}
	\right\}^{\uniq (-\pm)}_{\uniq (+\mp)}
	\cD^{+\mp}_{1,a}\<\cO_{\De,\ell}(S_1,X_1)\cO_{\De,\ell}(S_2,X_2)\>.
\end{multline}
We can explicitly compute
\be
	\left\{
	\begin{matrix}
		\cO_{\De+\half,\ell-\frac{1}{2}} & \mathbf{1} & \cO_{\De,\ell} \\
		\cO_{\De,\ell} & \mathcal{S} & \cO_{\De+\half,\ell-\frac{1}{2}}
	\end{matrix}
	\right\}^{\uniq (-+)}_{\uniq (+-)}&=\frac{i}{8\ell(\De-1)(\De-\ell-1)},\\
	\left\{
	\begin{matrix}
		\cO_{\De+\half,\ell+\frac{1}{2}} & \mathbf{1} & \cO_{\De,\ell} \\
		\cO_{\De,\ell} & \mathcal{S} & \cO_{\De+\half,\ell+\frac{1}{2}}
	\end{matrix}
	\right\}^{\uniq (--)}_{\uniq (++)}&=\frac{i(2\ell+1)}{4(\De-1)(\De+\ell)},
\ee
and use these coefficients in~\eqref{eq:integrationbyparts} to arrive at~\eqref{eq:seedstep2}. At this point, we have expressed the seed block in the form
\begin{multline}
G^{\pm b}_\mathrm{seed}(s_1,s_4,x_i)=\\=\frac{1}{C_\pm}
	\left\{
\begin{matrix}
	\cO_{\ell\mp\frac{1}{2}} & \mathbf{1} & \cO_\ell \\
	\cO_\ell & \mathcal{S} & \cO_{\ell\mp\frac{1}{2}}
\end{matrix}
\right\}^{\uniq (-\pm)}_{\uniq (+\mp)}
\Omega^{cd}\cD^{-+}_{1,c}\<\phi_{\De_1+\half}\phi_{\De_2}\cO_{\De+\half,\ell\mp\half}\>\bowtie\cD^{+\mp}_{0,d}\<\cO_{\De,\ell}\phi_{\De_3}\psi_{\De_4}\>^{(b)},
\end{multline}
where $\bowtie$ stands for shadow integral. 

\paragraph{Crossing of three-point functions}Now we are going to perform the crossing transformation on the right three-point function to write it as
\be\label{eq:3DseedCrossing}
\cD^{+\mp}_{0,d}\<\cO_{\De,\ell}\phi_{\De_3}\psi_{\De_4}\>^{(b)}=\sum_{b'}&
\left\{
\begin{matrix}
	\phi_{\De_3} & \psi_{\De_4} & \psi_{\De_3+\half} \\
	\cO_{\De,\ell} & \mathcal{S} & \cO_{\De+\half,\ell\mp\half}
\end{matrix}
\right\}^{b (+\mp)}_{b' (--)}
\cD^{--}_{3,d}\<\cO_{\De+\half,\ell\mp\half}\psi_{\De_3+\half}\psi_{\De_4}\>^{(b')}\nn\\
&+
\left\{
\begin{matrix}
	\phi_{\De_3} & \psi_{\De_4} & \psi_{\De_3-\half} \\
	\cO_{\De,\ell} & \mathcal{S} & \cO_{\De+\half,\ell\mp\half}
\end{matrix}
\right\}^{b (+\mp)}_{b' (+-)}
\cD^{+-}_{3,d}\<\cO_{\De+\half,\ell\mp\half}\psi_{\De_3-\half}\psi_{\De_4}\>^{(b')}.
\ee
To proceed, we need to choose a basis of tensor structures for three-point functions of the type $\<\cO_{\De+\half,\ell\mp\half}\psi_{\De_3\pm\half}\psi_{\De_4}\>$. We define
\be
	t_1&=\frac{\<S_1S_2\>\<S_3X_1X_2S_3\>^\ell}{X_{12}^\ell}+\frac{\<S_1S_3\>\<S_2S_3\>\<S_3X_1X_2S_3\>^{\ell-1}}{X_{12}^{\ell-1}},\\
	t_2&=\frac{\<S_1S_2\>\<S_3X_1X_2S_3\>^\ell}{X_{12}^\ell}+2\frac{\<S_1S_3\>\<S_2S_3\>\<S_3X_1X_2S_3\>^{\ell-1}}{X_{12}^{\ell-1}},\\
\label{eq:structdefinition3}
	t_3&=\frac{\<S_3X_1X_2S_3\>^{\ell-1}}{
			X_{12}^{\ell-\half}X_{23}^{\half}X_{31}^{\half}
		}X_{23}\<S_1S_3\>\<S_2X_1S_3\>,\\
\label{eq:structdefinition4}		
	t_4&=\frac{\<S_3X_1X_2S_3\>^{\ell-1}}{
			X_{12}^{\ell-\half}X_{23}^{\half}X_{31}^{\half}
		}X_{13}\<S_2S_3\>\<S_1X_2S_3\>,
\ee
where the first two structures are parity-even and the second two are parity-odd.\footnote{We choose this peculiar basis only for the purposes of presentation, because in it $6j$ symbols have the simplest form. In practice we used the basis~\eqref{eq:3DmonomialBasis}, in which we know the general $6j$ symbols for the spinor representation.} In terms of these structures we set\footnote{We thank Soner Albayrak for pointing out some typos in previous versions of (\ref{eq:structdefinition3}), (\ref{eq:structdefinition4}), and (\ref{eq:structdefinitionagain}).}
\be
\label{eq:structdefinitionagain}
	\<\psi_1\psi_2\cO_\ell\>^{(b)}=\frac{t_b}{
		X_{12}^{\frac{\De_1+\De_2-\De_3-\ell+1}{2}}
		X_{23}^{\frac{\De_2+\De_3-\De_1+\ell}{2}}
		X_{31}^{\frac{\De_3+\De_1-\De_2+\ell}{2}}
	}.
\ee
We can now compute the $6j$ symbols in~\eqref{eq:3DseedCrossing}. For example, the only non-vanishing symbols for $b=+$ and $\cD^{++}$ on the left of~\eqref{eq:3DseedCrossing} are
\be
\left\{
\begin{matrix}
	\phi_{\De_3} & \psi_{\De_4} & \psi_{\De_3+\half} \\
	\cO_{\De,\ell} & \mathcal{S} & \cO_{\De+\half,\ell+\half}
\end{matrix}
\right\}^{+ (++)}_{1 (--)}&=(-1)^{\ell+\half}\frac{
		(\De-\frac{3}{2})(\De+\ell+\De_{3}-\De_4-\half)(\De+\ell+\De_{3}+\De_4-\frac{3}{2})
	}{
		\De_3-\frac{3}{2}
	},\\
\left\{
\begin{matrix}
	\phi_{\De_3} & \psi_{\De_4} & \psi_{\De_3+\half} \\
	\cO_{\De,\ell} & \mathcal{S} & \cO_{\De+\half,\ell+\half}
\end{matrix}
\right\}^{+ (++)}_{2 (--)}&=(-1)^{\ell+\half}\frac{
	(\De+\ell+\De_{3}-\De_4-\half)((\De-1)(\De+\ell+\De_{3}+\De_4-\frac{5}{2})-\half)
}{
	\De_3-\frac{3}{2}
},\\
\left\{
\begin{matrix}
	\phi_{\De_3} & \psi_{\De_4} & \psi_{\De_3-\half} \\
	\cO_{\De,\ell} & \mathcal{S} & \cO_{\De+\half,\ell+\half}
\end{matrix}
\right\}^{+ (++)}_{3 (+-)}&=(-1)^{\ell+\half}\frac{
	(\De+\ell+\De_{3}-\De_4-\half)
}{
	4(\De_3-\frac{3}{2})(\De_3-2)
},\\
\left\{
\begin{matrix}
	\phi_{\De_3} & \psi_{\De_4} & \psi_{\De_3-\half} \\
	\cO_{\De,\ell} & \mathcal{S} & \cO_{\De+\half,\ell+\half}
\end{matrix}
\right\}^{+ (++)}_{4 (+-)}&=(-1)^{\ell+\half}\frac{
	(\De-1)(\De+\ell-\De_{3}+\De_4+\half)
}{
	2(\De_3-\frac{3}{2})(\De_3-2)
}.
\ee
The other symbols vanish due to space parity. The are 12 more non-vanishing $6j$ symbols for other choices of $b$ and of the operator on the left, which we won't list here since they represent only an intermediate step in our calculation.

\paragraph{Differential basis} The final step is to express the three-point structures  $\<\cO_{\De+\half,\ell\pm\half}\psi_{\De_3\pm\half}\psi_{\De_4}\>^{(b)}$ in terms of derivatives acting on scalar thee-point structures. This is standard, and this particular case was solved in~\cite{Iliesiu:2015qra}, so we do not explain it in detail. We only note that the operators which create the parity-even structures $t_1$ and $t_2$ should be parity even,
\be
t_1,\,t_2\sim \<\cD^{++}_3\cD^{++}_4\>,\, \<\cD^{-+}_3\cD^{-+}_4\>,
\ee
while operators which create parity-odd structures have to be parity-odd,
\be
t_3,\,t_4\sim \<\cD^{-+}_3\cD^{++}_4\>,\, \<\cD^{++}_3\cD^{-+}_4\>.
\ee

\paragraph{The recursion relation} Assembling everything together, we arrive at the following expressions for the seed blocks in terms of third-order differential operators acting on scalar blocks,
\be\label{eq:3DparityEvenSeed}
	G^{++}_\mathrm{seed},G^{--}_\mathrm{seed}=
		&v_1\<\cD^{-+}_1\cD^{--}_3\>\<\cD^{-+}_3\cD^{++}_4\>\<\phi_{\De_1+\half}\phi_{\De_2}\phi_{\De_3+1}\phi_{\De_4-\half}\>\nn\\
		&+v_2\<\cD^{-+}_1\cD^{--}_3\>\<\cD^{++}_3\cD^{-+}_4\>\<\phi_{\De_1+\half}\phi_{\De_2}\phi_{\De_3}\phi_{\De_4+\half}\>\nn\\
		&+v_3\<\cD^{-+}_1\cD^{+-}_3\>\<\cD^{++}_3\cD^{++}_4\>\<\phi_{\De_1+\half}\phi_{\De_2}\phi_{\De_3-1}\phi_{\De_4-\half}\>\nn\\
		&+v_4\<\cD^{-+}_1\cD^{+-}_3\>\<\cD^{-+}_3\cD^{-+}_4\>\<\phi_{\De_1+\half}\phi_{\De_2}\phi_{\De_3}\phi_{\De_4+\half}\>,\\
	\label{eq:3DparityOddSeed}
	G^{+-}_\mathrm{seed},G^{-+}_\mathrm{seed}=
		& v_1\<\cD^{-+}_1\cD^{--}_3\>\<\cD^{++}_3\cD^{++}_4\>\<\phi_{\De_1+\half}\phi_{\De_2}\phi_{\De_3}\phi_{\De_4-\half}\>\nn\\
		&+ v_2\<\cD^{-+}_1\cD^{--}_3\>\<\cD^{-+}_3\cD^{-+}_4\>\<\phi_{\De_1+\half}\phi_{\De_2}\phi_{\De_3+1}\phi_{\De_4+\half}\>\nn\\
		&+ v_3\<\cD^{-+}_1\cD^{+-}_3\>\<\cD^{-+}_3\cD^{++}_4\>\<\phi_{\De_1+\half}\phi_{\De_2}\phi_{\De_3}\phi_{\De_4-\half}\>\nn\\
		&+ v_4\<\cD^{-+}_1\cD^{+-}_3\>\<\cD^{++}_3\cD^{-+}_4\>\<\phi_{\De_1+\half}\phi_{\De_2}\phi_{\De_3-1}\phi_{\De_4+\half}\>.
\ee
The coefficients $v_i$ are different for each of the blocks, and we give the explicit expressions in appendix~\ref{app:3Dblocks}. The scalar blocks in the above expressions for $G^{+\pm}_\text{seed}$ correspond to exchange of $[\De+\half,\ell-\half]$, while for  $G^{-\pm}_\text{seed}$ the exchanged primary is $[\De+\half,\ell+\half]$. 

\paragraph{Decomposition into components} Note that the scalar conformal blocks have the form
\be
	\<\phi_{\De_1}\phi_{\De_2}\phi_{\De_3}\phi_{\De_4}\>= \frac{1}{x_{12}^{\De_1+\De_2} x_{34}^{\De_3+\De_4}} \p{\frac{x_{14}^2}{x_{24}^2}}^\a \p{\frac{x_{14}^2}{x_{13}^2}}^\b G^{\alpha,\beta}_{\De,\ell}(z,\bar z),
\ee
where $\a=-\frac 1 2 \De_{12}$, $\b=\frac 1 2 \De_{34}$, and depend essentially only on $\alpha$ and $\beta$ and not the individual dimensions $\De_i$. We then see that e.g.\ for $G^{++}_\mathrm{seed}$ we only need the scalar blocks $G^{\alpha-\tfrac{1}{4},\beta-\tfrac{1}{4}}_{\De+\half,\ell- \half}$ and $G^{\alpha-\tfrac{1}{4},\beta+\tfrac{3}{4}}_{\De+\half,\ell- \half}$. There exists a second-order differential operator (see \cite{DO3} and section~\ref{sec:DolanOsbornShifts}) which relates these two blocks,
\be\label{eq:3Dscalardimshift}
	G^{\alpha-\tfrac{1}{4},\beta+\tfrac{3}{4}}_{\De+\half,\ell- \half}(z,\bar z)\sim (\ptl_z\ptl_{\bar z}+\ldots) G^{\alpha-\tfrac{1}{4},\beta-\tfrac{1}{4}}_{\De+\half,\ell- \half}(z,\bar z).
\ee
In~\eqref{eq:3DparityEvenSeed} only a first order operator acts on $G^{\alpha-\tfrac{1}{4},\beta+\tfrac{3}{4}}_{\De+\half,\ell- \half}$, and thus we can use~\eqref{eq:3Dscalardimshift} to  reduce~\eqref{eq:3DparityEvenSeed} to another third-order operator acting on the single scalar block. In particular, we can write 
\be
	G^{++}_\mathrm{seed}=
		g^{++}_1(z,\bar z)\frac{[-\thalf,0,0,-\thalf]+[\thalf,0,0,\thalf]}{2}
		+g^{++}_2(z,\bar z)\frac{[-\thalf,0,0,\thalf]+[\thalf,0,0,-\thalf]}{2},
\ee
where the tensor structures are defined in appendix~\ref{app:3Dblocks} and
\be
	g^{++}_k(z,\bar z)=\frac{i(-1)^{\ell-\half}}{\ell(\De-\ell-1)(\De-1)}(z \bar z)^{-\frac{\De_1+\De_2+\half}{2}}\mathfrak{D}^{++}_k G^{\alpha-\tfrac{1}{4},\beta-\tfrac{1}{4}}_{\De+\half,\ell-\half}(z,\bar z).
\ee
The differential operators $\mathfrak{D}^{++}_k$ are given by\footnote{In simplifying these expressions for the differential operators we made use of the quadratic Casimir equation satisfied by the scalar conformal blocks.}
\be
	\mathfrak{D}^{++}_1(z,\bar z)=&
		\bar z\ptl_{\bar z}D_z-z\ptl_z D_{\bar z}-(z\ptl_z-\bar z\ptl_{\bar z})\frac{z\bar z}{2(z-\bar z)}\big(
		(1-z)\ptl_z-(1-\bar z)\ptl_{\bar z}
		\big)\nn\\
		&+\frac{(\De-\ell)(\De-\ell-3)}{4}(z\ptl_z-\bar z\ptl_{\bar z})+\frac{\De-\ell-3}{2}(D_z-D_{\bar z}),\\
	\mathfrak{D}^{++}_2(z,\bar z)=&
		\nabla_z D_{\bar z}+\nabla_{\bar z} D_{z}+(\nabla_z+\nabla_{\bar z})\frac{z\bar z}{2(z-\bar z)}\big(
			(1-z)\ptl_z-(1-\bar z)\ptl_{\bar z}
		\big)\nn\\
		&-\frac{(\Delta-\ell)(\Delta-\ell-3)}{4}(\nabla_z+\nabla_{\bar z})+\frac{(2\ell+1)(\De-\ell-3)(\De-\tfrac{3}{2})}{4},
\ee
where
\be
	D_z &= z^2(1-z)\ptl_z^2-(\alpha'+\beta'+1)z^2\ptl_z-\alpha'\beta'z,\quad \alpha'=\alpha-\tfrac{1}{4},\beta'=\beta-\tfrac{1}{4},\\
	\nabla_z&=z\ptl_z+\frac{z}{z-\bar z},
\ee
and $D_{\bar z},\, \nabla_{\bar z}$ are defined by exchanging $z$ and $\bar z$.

The same reduction to a single block happens for $G^{--}_\mathrm{seed}$. For $G^{+-}_\mathrm{seed}$ and $G^{-+}_\mathrm{seed}$ the situation is a little trickier since there is a second order differential operator acting on the ``wrong'' scalar block. However, it turns out that its second-order piece is in fact coming precisely from the dimension shifting operator, and we again can reduce to a third-order differential operator acting on a single scalar block. Explicit expressions for these blocks can be written in a compact form given in appendix~\ref{app:3Dblocks} together with an explanation of the normalization conventions.

\subsubsection{Example: seed blocks in 4d}
\label{sec:seed4d}

In 4-dimensions the operators in a generic spin representation are labeled by 2 non-negative integers\footnote{Notice a difference in conventions relative to the 3-dimensional case where $\ell$ can be half-integer for fermionic operators.} $\ell$ and $\bar\ell$
\begin{equation}
\cO_{\Delta,\rho}= \cO_{\Delta}^{(\ell,\bar\ell)}.
\end{equation}
It is convenient to distinguish different classes of representations by a parameter $p$ defined as
\begin{equation}
p\equiv |\ell-\bar\ell|.
\end{equation}
Operators with $p=0$ are the symmetric traceless tensors. Using~\eqref{eq:cfthreeptcount} one can easily check that any given four-point function can exchange operators with only a finite number of different values of $p$. This implies that contrary to the 3-dimensional case, in 4-dimensions we need infinitely many seed conformal blocks, parametrized by $p$. 

A calculation of the general 4-dimensional seed conformal blocks was first performed in~\citep{Echeverri:2016dun}, where the explicit expressions for $p\leq 8$ were found.  In this section we perform an alternative computation of the seed blocks by using our new machinery and the strategy outlined in section~\ref{sec:seedalgo}. Our approach is to express the $p$ seed blocks in terms of the $p-1$ seed blocks. Knowing such a relation allows one to apply it recursively $p$ times to get an expression of the $p$ seed block in terms of the derivatives of the scalar $p=0$ Dolan-Osborn block~\cite{DO1,DO2}. Since the latter is known in terms of ${}_2F_1$ hypergeometric functions, this also gives hypergeometric expressions for the seed blocks, equivalent to those in~\cite{Echeverri:2016dun}.\footnote{With normalization conventions derived in~\cite{Cuomo:2017wme}. We performed the check for $p\leq4$.}

Let us note that the explicit hypergeometric expressions of~\cite{Echeverri:2016dun} are quite complex already for $p=2$. In numerical conformal bootstrap one usually requires simple rational approximations to conformal blocks~\cite{Simmons-Duffin:2015qma,Poland:2011ey,Kos:2014bka}, which are hard to construct from these expressions. On the other hand, our differential recurrence relation is rather simple, and we thus hope that it will find applications in the numerical bootstrap.

As in section~\ref{sec:basic_differential_operators_4D}, it will be convenient to use the 6d embedding formalism described in~\cite{SimmonsDuffin:2012uy,Elkhidir:2014woa,Echeverri:2015rwa,Echeverri:2016dun,Cuomo:2017wme}. In what follows we use the conventions of~\cite{Cuomo:2017wme}, and all the computations are performed using the Mathematica package described therein. To avoid repetition, the notation and conventions from~\cite{Cuomo:2017wme} will be used in this section without explanation.\footnote{The only difference is that we avoid using the terminology of~\cite{Echeverri:2015rwa,Echeverri:2016dun,Cuomo:2017wme} in which ``conformal partial waves'' refer to what we normally mean by conformal blocks, while ``conformal blocks'' refer to the coordinates in a basis of four-point tensor structures. When there is a danger of misinterpretation, we call the latter simply the components of conformal blocks. We do so to avoid the possible confusion with conformal partial waves from harmonic analysis.} 

A simple choice for the seed four-point function where the operator $\cO_{\Delta}^{(\ell,\bar\ell)}$ with a given $p$ can be exchanged in the $s$-channel is\footnote{The seed 4-point functions are chosen so that there is a unique conformal block for the exchange of $\cO^{(\ell,\bar\ell)}_\De$. There is an ambiguity in choosing the seed 4-point function, here we use the convention of~\cite{Echeverri:2016dun}.}
\begin{equation}\label{eq:seed_4_point_function}
\langle \cF_1^{(0,0)}\cF_2^{(p,0)}\cF_3^{(0,0)}\cF_4^{(0,p)}\rangle.
\end{equation}
The conformal block associated to the exchange of $\cO_{\Delta}^{(\ell,\bar\ell)}$ in the seed 4-point function is 
\begin{equation}\label{eq:seed_CPWs_4D}
W^{(p)}_{\ell,\bar\ell}\equiv
\langle F_{\Delta_1}^{(0,0)}F_{\Delta_2}^{(p,0)}\cO_\Delta^{(\ell,\bar\ell)}\rangle
\bowtie
\langle\conj \cO_{\Delta}^{(\bar\ell,\ell)} F_{\Delta_3}^{(0,0)}F_{\Delta_4}^{(0,p)} \rangle.
\end{equation}
We distinguish 2 cases depending on the sign of $\ell-\bar\ell$. Using the convention of~\citep{Echeverri:2016dun}
we define the ``seed"\footnote{In this paper we sometimes use ``primal seed'' to distinguish from the dual seeds.} and ``dual seed" conformal blocks as
\begin{align}
W^{(p)}_{seed}                 &\equiv W^{(p)}_{\ell,\bar\ell},\quad \ell\leq \bar\ell,\\
\overline W^{(p)}_{dual\;seed} &\equiv W^{(p)}_{\ell,\bar\ell},\quad \ell\geq \bar\ell.
\end{align}
The seed and the dual seed conformal blocks can be further decomposed into components as
\begin{align}\label{eq:seed_blocks_4D}
W_{seed}^{(p)} &=\mathcal{K}_4\sum_{e=0}^p (-2)^{p-e}H_{e}^{(p)}(z,\bar z) \big[\hat \II^{42}\big]^e \big[\hat \II^{42}_{31}\big]^{p-e},\\
\label{eq:dual_seed_blocks_4D}
\overline W_{dual\;seed}^{(p)} &=\mathcal{K}_4\sum_{e=0}^p (-2)^{p-e}\overline H_{e}^{(p)}(z,\bar z) \big[\hat \II^{42}\big]^e \big[\hat \II^{42}_{31}\big]^{p-e}.
\end{align}
The parameter $e=0,\ldots,p$ labels the possible 4-point tensor structures.
In this section we focus solely on the seed blocks $H_{e}^{(p)}(z,\bar z)$. The case of the dual blocks $\overline H_{e}^{(p)}(z,\bar z)$ is completely analogous and will be addressed in appendix~\ref{app:dual_seeds}.

The calculation essentially follows the algorithm in section~\ref{sec:seedalgo}, the main difference being that we go from exchange of $(\ell,\ell+p)$ to $(\ell,\ell+p-1)$ instead of going directly to an STT exchange. The calculation is also largely analogous to the 3-dimensional calculation in section~\ref{sec:seed3d}. For convenience, we start the algorithm from the right three-point structure instead of going from the left. 

We first rewrite the right three-point function entering~\eqref{eq:seed_CPWs_4D} as
\begin{equation}\label{eq:seed4dStep1}
\langle \conj \cO_{\Delta}^{(\ell+p,\ell)} F_{\Delta_3}^{(0,0)}F_{\Delta_4}^{(0,p)} \rangle=
(\Dmpz{0} \cdot \Dmzp{4})\;
\langle \conj \cO_{\Delta+1/2}^{(\ell+p-1,\ell)} F_{\Delta_3}^{(0,0)}F_{\Delta_4+1/2}^{(0,p-1)} \rangle.
\end{equation}
The subscript $0$ indicates that $\Dmpz{0}$ acts on the internal operator $\bar\cO$. We would like to move it across $\bowtie$ (integrate by parts) using the rule~\eqref{eq:integrationbyparts_algebraic}.

\paragraph{Crossing of 2-point functions}
The definition of the $6j$ symbol entering~\eqref{eq:integrationbyparts_algebraic} in the present case is
\begin{align}\label{eq:2pf_6j_seeds}
&\Dpzm{2\,a} \langle \conj \cO_{\Delta}^{(\ell+p,\ell)}(X_1,S_1,\bar S_1) \cO_{\Delta}^{(\ell,\ell+p)}(X_2,S_2,\bar S_2)\rangle
\nn\\
&=
\mathcal{A}\,
\Dmpz{1\,a} \langle \conj \cO_{\Delta+1/2}^{(\ell+p-1,\ell)}(X_1,S_1,\bar S_1) \cO_{\Delta+1/2}^{(\ell,\ell+p-1)}(X_2,S_2,\bar S_2)\rangle,
\end{align}
where
\begin{equation}
\mathcal{A}\equiv
\left\{
		\begin{matrix}
		\conj \cO_{\Delta}^{(\ell+p,\ell)} & \mathbf{1} & \conj \cO_{\Delta+1/2}^{(\ell+p-1,\ell)} \\
		\cO_{\Delta+1/2}^{(\ell,\ell+p-1)} & \mathcal{S} & \cO_{\Delta}^{(\ell,\ell+p)}
		\end{matrix}
	\right\}^{\uniq (+0-)}_{\uniq (-+0)}
=2i(\ell+p)(\Delta-\tfrac{p}{2}-1)(\Delta-\ell-\tfrac{p}{2}-2).
\end{equation}
Applying~\eqref{eq:integrationbyparts_algebraic} and~\eqref{eq:seed4dStep1} to~\eqref{eq:seed_CPWs_4D} we arrive at
\begin{equation}\label{eq:seed_CPW_rewritten}
W^{(p)}_{seed}=
\mathcal{A}^{-1}\,
(\Dpzm{0} \cdot \Dmzp{4})
\langle F_{\Delta_1}^{(0,0)}F_{\Delta_2}^{(p,0)}\cO_{\Delta}^{(\ell,\ell+p)}\rangle
\bowtie
\langle \overline \cO_{\Delta+1/2}^{(\ell+p-1,\ell)} F_{\Delta_3}^{(0,0)}F_{\Delta_4}^{(0,p-1)} \rangle,
\end{equation}
where $\Dpzm{0}$ now acts on the left three-point function.

\paragraph{Crossing of 3-point functions}
We now use the crossing equation for the 3-point function
\begin{align}\label{eq:6j_symbol_3pf}
\Dpzm{0\,a} \langle F_{\Delta_1}^{(0,0)}F_{\Delta_2}^{(p,0)}\cO_\Delta^{(\ell,\ell+p)}\rangle=
&\sum_{n=1}^2
\mathcal{B}^{(n)} \Dmmz{1\,a}\langle F_{\Delta_1+1/2}^{(1,0)}F_{\Delta_2}^{(p,0)}\cO_{\Delta+1/2}^{(\ell,\ell+p-1)}\rangle^{(n)}+\nonumber\\
&\sum_{n=1}^2
\mathcal{C}^{(n)} \Dpzm{1\,a}\langle F_{\Delta_1-1/2}^{(0,1)}F_{\Delta_2}^{(p,0)}\cO_{\Delta+1/2}^{(\ell,\ell+p-1)}
\rangle^{(n)},
\end{align}
where $\mathcal{B}^{(n)}$ and $\mathcal{C}^{(n)}$ denote the $6j$ symbols
\begin{align}
\mathcal{B}^{(n)} &\equiv
\left\{
		\begin{matrix}
		F_{\Delta_1}^{(0,0)} & F_{\Delta_2}^{(p,0)} & F_{\Delta_1+1/2}^{(1,0)} \\
		\cO_{\Delta+1/2}^{(\ell,\ell+p-1)} & \mathcal{S} & \cO_\Delta^{(\ell,\ell+p)}
		\end{matrix}
	\right\}^{\uniq (+0-)}_{(n) (--0)}, \nn\\
\mathcal{C}^{(n)} &\equiv
\left\{
		\begin{matrix}
		F_{\Delta_1}^{(0,0)} & F_{\Delta_2}^{(p,0)} & F_{\Delta_1-1/2}^{(0,1)} \\
		\cO_{\Delta+1/2}^{(\ell,\ell+p-1)} & \mathcal{S} & \cO_\Delta^{(\ell,\ell+p)}
		\end{matrix}
	\right\}^{\uniq (+0-)}_{(n) (+0-)}.
\end{align}
The 3-point functions in the right-hand side of~\eqref{eq:6j_symbol_3pf} have the following form
\begin{align}
\langle F_{\Delta_1+1/2}^{(1,0)}F_{\Delta_2}^{(p,0)}\cO_{\Delta+1/2}^{(\ell,\ell+p-1)}\rangle^{(i)} &=
\mathcal{K}_3 [\hat I^{32}]^{p-1} [\hat J^3_{12}]^{\ell-1} \begin{pmatrix} \hat I^{32} \hat K_{2}^{13} \nn\\
\hat I^{31} \hat K_{1}^{23} \end{pmatrix},\\
\langle F_{\Delta_1-1/2}^{(0,1)}F_{\Delta_2}^{(p,0)}\cO_{\Delta+1/2}^{(\ell,\ell+p-1)}\rangle^{(i)} &=
\mathcal{K}'_3 [\hat I^{32}]^{p-1} [\hat J^3_{12}]^{\ell-1} \begin{pmatrix} \hat I^{13} \hat I^{32} \\
\hat I^{12} \hat J^{3}_{12} \end{pmatrix},
\end{align}
Again, we can find the $6j$ symbols $\mathcal{B}^{(n)}$ and $\mathcal{C}^{(n)}$ by an explicit calculation,
\begin{align}
\mathcal{B}^{(1)} &=   
\mathcal{B}^{(2)}-\frac{\ell (\Delta_1+\Delta_2+\Delta-\ell-p-6) }{4 (\Delta_1-2)}\times\nn\\
&\left(4 (\ell+p+1) (\Delta_1-\Delta_2+\ell+\frac{p}{2}+1)+(\Delta_1-\Delta_2+\Delta+\ell) (2 \Delta-4 \ell-3 p-6)\right),\nn\\
\mathcal{B}^{(2)} &= -\frac{p (\Delta_1-\Delta_2+\Delta+\ell) (2 \Delta-2 \ell-p-4) (\Delta_1+\Delta_2+\Delta-\ell-p-6)}{4 (\Delta_1-2)},\nn\\
\mathcal{C}^{(1)} &= -\frac{\ell (2 \Delta+p-2) (\Delta_1-\Delta_2-\Delta+\ell+p+2)}{4 (\Delta_1-3) (\Delta_1-2)} ,\nn\\
\mathcal{C}^{(2)} &=  \frac{p (-2 \Delta+2 \ell+p+4) (\Delta_1-\Delta_2-\Delta+\ell+p+2)}{4 (\Delta_1-3) (\Delta_1-2)}.
\end{align}

\paragraph{Differential basis}
The last step is to relate the 3-point functions entering~\eqref{eq:6j_symbol_3pf} to the seed 3-point functions
$\langle F_{\Delta_1^\prime}^{(0,0)}F_{\Delta_2^\prime}^{(p-1,0)}\cO_{\Delta+1/2}^{(\ell,\ell+p-1)}\rangle$ with shifted dimensions by using the differential basis trick. This is standard~\cite{Costa:2011dw,Echeverri:2015rwa}, so we simply note that we use the following differential operators
\be\label{eq:differentialbasis4d}
\langle F_{\Delta_1+1/2}^{(1,0)}F_{\Delta_2}^{(p,0)}\cO_{\Delta+1/2}^{(\ell,\ell+p-1)}
\rangle^{(n)}&\quad\sim\quad (\Dmpz{1} \cdot \Dppz{2})
,
(\Dppz{1} \cdot \Dmpz{2}),\nn\\
\langle F_{\Delta_1-1/2}^{(0,1)}F_{\Delta_2}^{(p,0)}\cO_{\Delta+1/2}^{(\ell,\ell+p-1)}\rangle^{(n)} &\quad\sim\quad (\Dpzp{1} \cdot \Dppz{2}),
(\Dmzp{1} \cdot \Dmpz{2}).
\ee

\paragraph{The recursion relation}
Combining the expressions~\eqref{eq:seed_CPW_rewritten}, \eqref{eq:6j_symbol_3pf}, and the differential basis~\eqref{eq:differentialbasis4d} we find the following recursion relation
\begin{align}\label{eq:seed_CPWs_recursion_relation}
W^{(p)}_{\Delta,\ell;\;\Delta_1,\Delta_2,\Delta_3,\Delta_4} &=\nn\\
 \mathcal{A}^{-1}\Bigg( &v_1
(\Dmmz{1}\cdot \Dmzp{4})(\Dmpz{1}\cdot \Dppz{2})\;W^{(p-1)}_{\Delta+\tfrac{1}{2},\ell;\;\Delta_1+1,\Delta_2-\tfrac{1}{2},\Delta_3,\Delta_4+\tfrac{1}{2}}\nn\\
+&v_2(\Dmmz{1}\cdot\Dmzp{4})( \Dppz{1}\cdot \Dmpz{2})\;W^{(p-1)}_{\Delta+\tfrac{1}{2},\ell;\;\Delta_1,\Delta_2+\tfrac{1}{2},\Delta_3,\Delta_4+\tfrac{1}{2}}\nn\\
+&v_3(\Dpzm{1}\cdot\Dmzp{4})(\Dpzp{1}\cdot \Dppz{2})\;W^{(p-1)}_{\Delta+\tfrac{1}{2},\ell;\;\Delta_1-1,\Delta_2-\tfrac{1}{2},\Delta_3,\Delta_4+\tfrac{1}{2}}\nn\\
+&v_4(\Dpzm{1}\cdot\Dmzp{4})(\Dmzp{1}\cdot \Dmpz{2})\;W^{(p-1)}_{\Delta+\tfrac{1}{2},\ell;\;\Delta_1,\Delta_2+\tfrac{1}{2},\Delta_3,\Delta_4+\tfrac{1}{2}}\Bigg),
\end{align}
where the coefficients $v_i$ are given explicitly by
\begin{align}
v_1 &= \frac{(\Delta +\Delta_1-\Delta_2+\ell) (-\Delta -\Delta_1+\Delta_2+\ell+2) (\Delta +\Delta_1+\Delta_2-\ell-p-6)}{4 (\Delta_1-2) (2 \Delta_2+p-4)}  ,\nn\\
v_2 &= \frac{(-\Delta +\Delta_1-\Delta_2+\ell+p+2) (\Delta +\Delta_1-\Delta_2-\ell-2 p-2) (\Delta +\Delta_1+\Delta_2-\ell-p-6)}{8 (\Delta_1-2) (\Delta_1-1)} ,\nn\\
v_3 &= \frac{-\Delta +\Delta_1-\Delta_2+\ell+p+2}{4 (\Delta_1-3) (\Delta_1-2)^2 (2 \Delta_2+p-4)}  ,\nn\\
v_4 &= -\frac{(-\Delta +\Delta_1-\Delta_2+\ell+p+2) (-\Delta +\Delta_1+\Delta_2+\ell+2 p-2) (\Delta +\Delta_1+\Delta_2-\ell-p-6)}{8 (\Delta_1-3) (\Delta_1-2)}.
\end{align}

\paragraph{Decomposition into components}
By using~\eqref{eq:seed_blocks_4D} one can write the recursion relation~\eqref{eq:seed_CPWs_recursion_relation} at the level of components of the seed conformal blocks $H_{e}^{(p)}(z,\bar z)$.

First let us notice that according to~\citep{Echeverri:2016dun} the components $H_{e}^{(p)}(z,\bar z)$ of the seed blocks depend on the external scaling dimensions $\Delta_i$ only via the quantities
\begin{equation}
a^{p}_{e}\equiv a^{(p)},\quad
b^{p}_{e}\equiv b^{(p)}+p-e,\quad
c^{p}_{e}\equiv p-e,
\end{equation}
where
\begin{equation}
a^{(p)}\equiv-\frac{\Delta_1-\Delta_2-p/2}{2},\quad
b^{(p)}\equiv+\frac{\Delta_3-\Delta_4-p/2}{2}.
\end{equation}

Let us now analyze the expression~\eqref{eq:seed_CPWs_recursion_relation}. Almost all the conformal blocks entering the right hand side of~\eqref{eq:seed_CPWs_recursion_relation} correspond to the same parameters $a^{(p)}$ and $b^{(p)}$ (the difference in p is compensated by a difference in $\Delta_i$). The only exception is the conformal block 
\begin{equation}\label{eq:rouge_CPW}
W^{(p-1)}_{\Delta+\tfrac{1}{2},\ell;\;\Delta_1+1,\Delta_2-\tfrac{1}{2},\Delta_3,\Delta_4+\tfrac{1}{2}}
\end{equation}
which contains $a^{(p)}-1$ and $b^{(p)}$. Just as in the case of 3-dimensions in section~\ref{sec:seed3d}, we can use a dimension shifting operator to simplify the structure of the recursion relation~\eqref{eq:seed_CPWs_recursion_relation}. The only difference is that we need to shift the external dimensions of a general seed block. This can be done by generalizing the construction of dimension-shifting operator outlined in section~\ref{sec:DolanOsbornShifts}. We find
\begin{equation}\label{eq:relation_shifted_dimensions_CPWs}
W^{(p-1)}_{\Delta+\tfrac{1}{2},\ell;\;\Delta_1+1,\Delta_2-\tfrac{1}{2},\Delta_3,\Delta_4+\tfrac{1}{2}}=
\mathcal{E}^{-1}(\Dpmz{1}\cdot \Dmmz{2})(\Dppz{1}\cdot \Dmpz{2})
W^{(p-1)}_{\Delta+\tfrac{1}{2},\ell;\;\Delta_1,\Delta_2+\tfrac{1}{2},\Delta_3,\Delta_4+\tfrac{1}{2}},
\end{equation}
where
\begin{equation}
\mathcal{E}\equiv-(p+1) (\Delta_1-1) (\Delta_1-2) (\Delta +\Delta_1-\Delta_2+\ell) (\Delta +\Delta_1-\Delta_2-\ell-2).
\end{equation}
Note that this is in fact completely analogous to the differential basis trick, except that instead of changing the external spins, we change the external dimensions.

Plugging the relation~\eqref{eq:relation_shifted_dimensions_CPWs} in~\eqref{eq:seed_CPWs_recursion_relation}, stripping off the kinematic factor
and decomposing this relation into components according to~\eqref{eq:seed_blocks_4D} one obtains a recursion relation for the seed blocks of the form
\begin{equation}\label{eq:seed4dfinal}
H_e^{(p)}(z,\bar z)=-\frac{\mathcal{A}^{-1}}{z-\bar z}
\left(
\mathit{D}_0\; H_{e}^{(p-1)}(z,\bar z)
-2\mathit{D}_1\; H_{e-1}^{(p-1)}(z,\bar z)
+4\cParam{p-1}{e-2}z\bar z\mathit{D}_2\; H_{e-2}^{(p-1)}(z,\bar z)
\right),
\end{equation}
where the conformal block in the l.h.s depends on $[\Delta,\ell;\;\Delta_1,\Delta_2,\Delta_3,\Delta_4]$ while the conformal blocks in the r.h.s.\ depend on
$[\Delta+\tfrac{1}{2},\ell;\;\Delta_1,\Delta_2+\tfrac{1}{2},\Delta_3,\Delta_4+\tfrac{1}{2}]$.
The differential operators $D_i$ are given by
\begin{align}
\label{eq:d_0}
D_0 \equiv 
&\normalD{\bar z}{\bParam{p-1}{e}}\hyperD{z}{p-1,e}-\normalD{z}{\bParam{p-1}{e}}\hyperD{\bar z}{p-1,e}\nn\\
&+k\left(\hyperD{z}{p-1,e}-\hyperD{\bar z}{p-1,e}\right)-(\cParam{p-1}{e}+1)L[\bParam{p-1}{e}]B\!\left[-\frac{k(k-2)}{1+\cParam{p-1}{e}}\right],\\
\label{eq:d_1}
D_1 \equiv
&z\normalD{\bar z}{\bParam{p-1}{e-1}+\cParam{p-1}{e-1}}\hyperD{z}{p-1,e-1}-\bar z\normalD{z}{\bParam{p-1}{e-1}+\cParam{p-1}{e-1}}\hyperD{\bar z}{p-1,e-1}\nn\\
&+k\left( z\hyperD{z}{p-1,e-1}-\bar z\hyperD{\bar z}{p-1,e-1}\right)\nn\\
&+(2\cParam{p-1}{e-1}+1)z\bar z L[\bParam{p-1}{e-1}](z-\bar z)^{-1}L[a]-(k-2)(k-\cParam{p-1}{e-1}-1) (z-\bar z)B[k],\\
\label{eq:d_2}
D_2 \equiv
&\hyperD{z}{p-1,e-2}-\hyperD{\bar z}{p-1,e-2}- L[a] B\!\left[k-\cParam{p-1}{e-2}-1\right],
\end{align}
where the coefficient $k$ is
\begin{equation}
k\equiv\frac{4-\Delta+\ell}{2}+\frac{3p}{4}.
\end{equation}
The elementary differential operators\footnote{Exactly the same differential operators (except for $\normalD{x}{\mu}$) enter the quadratic Casimir equation for the seed blocks~\cite{Echeverri:2016dun}. Note that here the definition of $L$ differs by a factor of $z-\bar z$} used here are
\begin{align}
D_x^{(a,b;c)} &\equiv x^2(1-x)\ptl_x^2-\left((a+b+1)x^2-cx\right)\ptl_x-a b x,\\
\normalD{x}{\mu} &\equiv -x(1-x)\ptl_x+\mu\,x,\\
L[\mu] &\equiv \normalD{z}{\mu} - \normalD{\bar z}{\mu},\\
B[\mu] &\equiv \frac{z \bar z}{z-\bar z}\left( (1-z)\partial_x - (1-\bar z)\partial_{\bar z}  \right)+\mu,
\end{align}
and we also use the following short-hand notation
\begin{equation}
\hyperD{x}{p,e}\equiv D_x^{(\aParam{p}{e},\bParam{p}{e};\cParam{p}{e})}.
\end{equation}

\subsection{Dimension-shifting and spin-shifting}
\label{sec:DolanOsbornShifts}

Using our techniques, we can explain some of the identities for scalar conformal blocks which were derived by Dolan and Osborn in~\cite{DO3}. For the ease of comparison, in this section we use the notation of~\cite{DO3}, which we now briefly recall. The scalar conformal block is defined as
\be\label{eq:DolanOsbornDefn}
	\<\phi_{\De_1}(x_1)\phi_{\De_2}(x_2)|\cO_{\De,\ell}|\phi_{\De_3}(x_3)\phi_{\De_4}(x_4)\>=\frac{1}{x_{12}^{\De_1+\De_2}x_{34}^{\De_3+\De_4}}\p{\frac{x_{24}}{x_{14}}}^{-2a}\p{\frac{x_{14}}{x_{13}}}^{2b} F_{\lambda_1\lambda_2}(a,b,x,\bar x),
\ee
where $x$ and $\bar x$ are the standard Dolan-Osborn coordinates denoted by $z$ and $\bar z$ in the rest of this paper, 
\be
	x\bar x=\frac{x_{12}^2x_{34}^2}{x_{13}^2x_{24}^2},\quad 	(1-x)(1-\bar x)=\frac{x_{23}^2x_{14}^2}{x_{13}^2x_{24}^2},\quad 
\ee
and
\be
	a=-\frac{1}{2}\De_{12},\quad b=\frac{1}{2}\De_{34},
\ee
while the parameters $\lambda_i$ are defined as
\be
	\lambda_1=\frac{1}{2}(\De+\ell),\quad \lambda_2=\frac{1}{2}(\De-\ell).
\ee

\paragraph{Operators $\mathcal{H}_k$} Let us consider acting on~\eqref{eq:DolanOsbornDefn} with the following contraction of the vector operators~\eqref{eq:vectoroperators},
\be 
-2\cD^{-0}_1\cdot \cD^{-0}_4=-2X_1\cdot X_4=x_{14}^2.
\ee
The resulting four-point function will have scaling dimensions at positions $1$ and $4$ shifted by $-1$. Accordingly, we can remove the prefactor for the new set of scaling dimensions to find the resulting action of this operator on $F_{\lambda_1\lambda_2}$,
\be
(x\bar x)^{-\frac{1}{2}}F_{\lambda_1\lambda_2}(a,b,x,\bar x).
\ee
This operation is equivalent to the following diagram,
\be\label{eq:DOsimple}
\diagramEnvelope{\begin{tikzpicture}[anchor=base,baseline]
	\node (opO1) at (-1,-2) [below, inner sep=0pt] {$[\De_1-1,0]$};
	\node (opO1p) at (-0.5,-0.5) [left, inner sep=0pt] {$[\De_1,0]$};
	\node (opO2) at (-0.5,1) [above, inner sep=0pt] {$[\De_2,0]$};
	\node (vertWL) at (-0.5,-1) [threept] {};
	\node (vertW) at (2.5,-1) [threept] {};
	\node (vertL) at (0,0) [twopt] {};
	\node (shad) at (1,0) [cross] {};
	\node (vertR) at (2,0) [twopt] {};
	\node (opW) at (1,-1.1) [below] {$\myng{(1)}$};
	\node at (1,0.1) [above] {$[\De,\ell]$};
	\node (opO3) at (2.5,1) [above, inner sep=0pt] {$[\De_3,0]$};
	\node (opO4) at (3,-2) [below] {$[\De_4-1,0]$};
	\node (opO4p) at (2.5,-0.5) [right, inner sep=0pt] {$[\De_4,0]$};
	\draw [spinning] (vertL)-- (shad);
	\draw [spinning] (vertR)-- (shad);
	\draw [scalar] (vertR)-- (vertW);
	\draw [finite with arrow] (vertWL)-- (vertW);
	\draw [scalar] (vertW)-- (opO4);
	\draw [scalar] (vertR)--(opO3);
	\draw [scalar] (vertWL)-- (opO1);
	\draw [scalar] (vertL)--(vertWL);
	\draw [scalar] (vertL) -- (opO2);
	\end{tikzpicture}},
\ee
and thus according to our general analysis can be expanded using the finite-dimensional crossing~\eqref{eq:6jdefinition} in terms of scalar conformal blocks with shifted external dimensions  and the internal representations appearing in
\be
\myng{(1)}\otimes [\De,\ell]=[\De-1,\ell]\oplus [\De,\ell+1]\oplus [\De,\ell-1]\oplus [\De+1,\ell]\oplus\ldots,
\ee
where $\ldots$ represents non-STT representations which do not appear in a four-point function of scalars. In the notation of~\cite{DO3}, this corresponds to an equality of the form
\be
(x\bar x)^{-\frac{1}{2}}F_{\lambda_1\lambda_2}(a,b)=\,&r\, F_{\lambda_1-\half\,\lambda_2-\half}(a+\thalf,b+\thalf)+s\, F_{\lambda_1+\half\,\lambda_2-\half}(a+\thalf,b+\thalf)\nn\\
&+t\, F_{\lambda_1-\half\,\lambda_2+\half}(a+\thalf,b+\thalf)+u\, F_{\lambda_1+\half\,\lambda_2+\half}(a+\thalf,b+\thalf),
\ee
where the coefficients $r,s,t,u$ are some combinations of the $6j$ symbols~\eqref{eq:6jdefinition}. This is precisely the equation (4.18) in~\cite{DO3}. Dolan and Osborn also introduce $k$-th order differential operators $\mathcal{H}_k$ for $k=1,2,3$, which act on $F_{\lambda_1\lambda_2}$ in the same way but with different sets of coefficients $r_k,s_k,t_k,u_k$. In particular, they all increase $a$ and $b$ by $\thalf$. In our formalism we can also find 3 other operators with such a property,
\be\label{eq:ourHkoperatorsPre}
\cD_{13}&=\cD^{-0}_1\cdot \cD^{+0}_3,\nn\\
\cD_{24}&=\cD^{+0}_2\cdot \cD^{-0}_4,\nn\\
\cD_{23}&=\cD^{+0}_2\cdot \cD^{+0}_3,
\ee
all of which also exchange the vector representation in a way similar to~\eqref{eq:DOsimple}, and thus act in the same way as $\mathcal{H}_k$. In fact, one can express $\mathcal{H}_k$ in terms of these operators, and we provide explicit expressions in appendix~\ref{app:Hkoperators}.

\paragraph{Operators $\mathcal{F}_k$} Another class of operators introduced in~\cite{DO3} can be interpreted as exchanges of the adjoint representation of conformal group. The simplest of such exchanges is given by
\be
	\cF_0=8 \cD^{-0}_{1,[m}\cD^{-0}_{2,n]}\cD^{-0,[m}_{3}\cD^{-0,n]}_{4},
\ee
whose action on the functions $F_{\lambda_1\lambda_2}$ is equivalent to
\be
	\cF_0=\frac{1}{x}+\frac{1}{\bar x}-1,
\ee
which is precisely how $\cF_0$ is defined in~\cite{DO3}. The action of this operator on a conformal block corresponds to the following diagram,
\be\label{eq:DOadjoint}
\diagramEnvelope{\begin{tikzpicture}[anchor=base,baseline]
\node (opO1) at (-1,-2) [below, inner sep=0pt] {$[\De_1-1,0]$};
\node (opO1p) at (-0.5,-0.5) [left, inner sep=0pt] {$[\De_1,0]$};
\node (opO2) at (-1,2) [above, inner sep=0pt] {$[\De_2-1,0]$};
\node (opO2p) at (-1,0.3) [above, inner sep=0pt] {$[\De_2,0]$};
\node (vertWL) at (-0.5,-1) [threept] {};
\node (vertW) at (3.5,-1) [threept] {};
\node (vertWUL) at (-0.5,1) [threept] {};
\node (vertWU) at (3.5,1) [threept] {};
\node (vertW3L) at (0.3,-0.5) [twopt] {};
\node (vertW3R) at (2.7,-0.5) [twopt] {};
\node (vertL) at (0,0) [twopt] {};
\node (shad) at (1.5,0) [cross] {};
\node (vertR) at (3,0) [twopt] {};
\node (opW) at (1.5,-0.6) [below] {$\myng{(1,1)}$};
\node at (1.5,0.1) [above] {$[\De,\ell]$};
\node (opO3) at (4,2) [above, inner sep=0pt] {$[\De_3-1,0]$};
\node (opO3p) at (4,0.3) [above, inner sep=0pt] {$[\De_3,0]$};
\node (opO4) at (4,-2) [below] {$[\De_4-1,0]$};
\node (opO4p) at (3.5,-0.5) [right, inner sep=0pt] {$[\De_4,0]$};
\draw [spinning] (vertL)-- (shad);
\draw [spinning] (vertR)-- (shad);
\draw [scalar] (vertR)-- (vertW);
\draw [finite] (vertWL)-- (vertW3L)--(vertW3R)--(vertW);
\draw [scalar] (vertW)-- (opO4);
\draw [scalar] (vertR)--(vertWU);
\draw [scalar] (vertWU)--(opO3);
\draw [scalar] (vertWL)-- (opO1);
\draw [scalar] (vertL)--(vertWL);
\draw [scalar] (vertL) -- (vertWUL);
\draw [scalar] (vertWUL)--(opO2);
\draw [finite] (vertWUL) to[in=100,out=0] (vertW3L);
\draw [finite] (vertWU) to[in=80,out=180] (vertW3R);
\end{tikzpicture}},
\ee
where the individual differential operators have indices in the vector representation and are then joined into the adjoint representation $\myng{(1,1)}\in \myng{(1)}\otimes\myng{(1)}$. Therefore, it decomposes into scalar blocks appearing in the tensor product
\be
\myng{(1,1)}\otimes[\De,\ell]=[\De-1,\ell+1]\oplus[\De-1,\ell-1]\oplus[\De+1,\ell+1]\oplus[\De+1,\ell-1]\oplus[\De,\ell]\oplus\ldots,
\ee
where ``$\ldots$" represents non-STT representations which do not appear in scalar conformal blocks. Thus there exists an identity of the form
\be
\cF_0 F_{\lambda_1\,\lambda_2}=r_0 F_{\lambda_1\,\lambda_2-1}+s_0 F_{\lambda_1-1\,\lambda_2}+t_0 F_{\lambda_1+1\,\lambda_2}+u_0 F_{\lambda_1\,\lambda_2+1}+w_0 F_{\lambda_1\,\lambda_2},
\ee
with coefficients $r_0,s_0,t_0,u_0,w_0$ being some combinations of the $6j$ symbols~\eqref{eq:6jdefinition}. This is precisely (4.28) of \cite{DO3}. The operators $\cF_k$ with $k=1,2,3$ can be constructed analogously.

\paragraph{Operator $\cD^{(\varepsilon)}$} Finally, let us consider the identity (4.50) of~\cite{DO3}, which is\footnote{Note that there is a typo in the second part of (4.43) in~\cite{DO3}. The correct definition is $\cD^{(\varepsilon)}=(x\bar x)^{-\half}\cH_2$.}
\be\label{eq:DOdimshift}
	(x\bar x)^{\varepsilon-b+1}\cD^{(\varepsilon)}(x\bar x)^{b-\varepsilon}F_{\lambda_1\,\lambda_2}(a,b,x,\bar x)=(\lambda_1+b)(\lambda_2+b-\epsilon)F_{\lambda_1\,\lambda_2}(a,b+1,x,\bar x).
\ee
We see that the left hand side of this expression gives a differential operator which shifts $b$ by $1$. In our formalism, it is extremely easy to construct this operator, namely
\be
	(x\bar x)^{\varepsilon-b+1}\cD^{(\varepsilon)}(x\bar x)^{b-\varepsilon}=\frac{\cD^{+0}_3\cdot\cD^{-0}_4}{(\De_3-1)(d-2-\De_3)}.
\ee
From the definition it is clear that it simply shifts $b$ by $1$. The coefficient in the right hand side of~\eqref{eq:DOdimshift} can be easily expressed in terms of $6j$ symbols~\eqref{eq:6jdefinition}.

\subsection{Recursion relations for conformal blocks}
\label{sec:recursionrelationsection}

In sections~\ref{sec:generalblockexpression} and~\ref{sec:seedalgo} we have managed to express an arbitrary conformal block in terms of derivatives of scalar blocks, schematically
\be\label{eq:generalblockschematic}
G_{\De,\rho}=\sum_k c_k(\De) D_k G^\text{scalar}_{\De+\delta_k,\ell_k},
\ee
where $[\De,\rho]$ is the representation of the exchanged operator, $D_k$ are some $\De$-independent differential operators, and $c_k(\De)$ are rational functions.  All ingredients in this formula implicitly depend on the dimensions and representations of the external operators, as well as on $\rho$. In practice we often have a generic spin parameter $\ell$ in $\rho$, and we can keep it generic in this formula as we did in the examples in sections~\ref{sec:seed3d} and~\ref{sec:seed4d}. The spins $\ell_k$ are then finite shifts of $\ell$, $\ell_k=\ell+\delta\ell_k$. 

Explicit examples of such expressions are given in~\eqref{eq:3DparityEvenSeed}, \eqref{eq:3DparityOddSeed} and~\eqref{eq:seed_CPWs_recursion_relation}. They readily allow us to compute the spinning conformal blocks numerically. But they also allow us to analytically infer properties of the spinning blocks from the known properties of the scalar blocks. 

For example, a general method for numerical computation of conformal blocks is based on Zamolodchikov recursion relations~\cite{Zamolodchikov:1985ie,Zamolodchikov:1987}. The basic idea is that for certain values $\De_i$ of the scaling dimension $\De$ the generalized Verma module for the representation $[\De,\rho]$ has null descendants $[\De'_i,\rho'_i]$, which lead to poles in the conformal block for $[\De,\rho]$ with the residue being proportional to the conformal block for $[\De'_i,\rho'_i]$,
\be
	G_{\De,\rho}\sim \frac{R_i}{\De-\De_i}G_{\De'_i,\rho'_i},
\ee
where $R_i$ are certain coefficients, which in the case of spinning blocks generically are matrices rotating the left and right three-point structures in $G$. For fixed $\rho$ there are in general several infinite families of poles $\De_i$. If we know the asymptotic behavior of the conformal blocks for $\De\to\infty$,
\be
	G_{\De,\rho}\sim r^\De h_{\infty,\rho},
\ee
where $r$ is the radial coordinate of~\cite{Pappadopulo:2012jk,Hogervorst:2013sma} and $h_{\infty,\rho}$ is some relatively easily computable function, then we can write the conformal block as a sum over residues~\cite{Kos:2013tga,Penedones:2015aga}. The resulting approximation is perfectly suited for numerical applications based on semidefinite methods~\cite{Kos:2013tga,Kos:2014bka,Dymarsky:2017xzb}. 

To accomplish this program, one needs to understand the pole positions $\De_i$, the representations of null states $[\De_i',\rho_i']$, and the residue matrices $R_i$. This data has been determined for general scalar blocks~\cite{Kos:2013tga,Kos:2014bka} as well as some examples of spinning blocks~\cite{Penedones:2015aga,Dymarsky:2017xzb,Iliesiu:2015akf}. Although the classification of the poles $\De_i$ and the null states~$[\De_i',\rho_i']$ is known~\cite{Penedones:2015aga,Yamazaki:2016vqi,Oshima:2016gqy}, the computation of the residue matrices $R_i$ may not be an easy task. 

Our expression~\eqref{eq:generalblockschematic} is perfectly suited for this problem. Indeed, from it the pole structure of $G_{\De,\rho}$ is completely apparent. In particular, the poles in $G_{\De,\rho}$ are given by the poles of the scalar blocks in the right hand side, and a \textit{finite}\footnote{For a fixed $\ell$.} number of poles of the coefficients $c_k(\De)$. The residues of the poles are easy to compute. Indeed, any residue is given by a sum of differential operators $D_k$ acting on some scalar blocks. Using the techniques of section~\ref{sec:generalblockexpression}, it is easy to express the action of $D_k$ on a general scalar block as a sum over conformal blocks which can appear for the given external operators,\footnote{In particular, substituting these expressions in~\eqref{eq:generalblockschematic}, we get a tautology.}
\be
	D_k G_{\De,\ell}^\text{scalar}\sim \sum_{\De',\rho'} G_{\De',\rho'}.
\ee

In other words, our techniques allow us to translate the known recursion relations for scalar blocks into recursion relations for general conformal blocks. This approach has already been used in~\cite{Dymarsky:2017xzb} for the exchange of traceless-symmetric representations. The new ingredient here is that we can now derive the recursion relation for general internal representations. For example, using the equations~\eqref{eq:3DparityEvenSeed} and~\eqref{eq:3DparityOddSeed}, we re-derived the recursion relation of~\cite{Iliesiu:2015akf} for the scalar-fermion seed blocks exchanging a fermionic representation.

In~\cite{Penedones:2015aga} the residues of the conformal blocks were computed explicitly by considering the action of the differential operators $\cD_i$ corresponding to the null states on the three-point functions, and the behavior of the norm of the null state near the pole. We expect that the conformally-covariant differential operators can be useful also in this approach. For example, the null state differential operators $\cD_i$ can be obtained by the translation functor from a set of basic operators~\cite{MR1255551}. In our language this means that one can write the operators $\cD_i$ as
\be
	\cD_i\propto\cD_{A}\cD_i'\cD^{A},
\ee
where $\cD_i'$ are some simpler differential operators (for instance, many null states can be obtained from $\cD'=d$ the exterior derivative acting on differential forms.). The action of $\cD_i$ on a three-point function can then be computed by applying a crossing transformation to move $\cD^{A}$ on a different leg and then acting with $\cD'_i$.

\section{Further Applications}
\label{sec:furtherapps}

\subsection{Inversion formulae and ``spinning-down" a four-point function}
\label{sec:spinningdown}

Orthogonality relations between conformal blocks are useful tools for analyzing crossing symmetry. By exploiting orthogonality, we can derive inversion formulae that express OPE data in terms of an integral of a conformal block against a four-point function \cite{Dobrev:1977qv,Dobrev:1975ru}. Applying an $s$-channel inversion formula to a $t$-channel conformal block expansion, we can study crossing directly in terms of CFT data.\footnote{Note that the integral in an inversion formula in general does not commute with the sum over conformal blocks in the $t$-channel, so this analysis must be done carefully.} The coefficients relating $t$-channel blocks and $s$-channel blocks are sometimes called ``crossing kernels." Inversion formulae and crossing kernels for scalar operators have been discussed recently in \cite{Hogervorst:2017sfd,Hogervorst:2017kbj,Caron-Huot:2017vep,Gadde:2017sjg}. Here, we briefly describe how our techniques are perfectly suited for studying inversion formulae and crossing kernels for spinning operators. We will omit details, and simply highlight how weight-shifting operators can be used in these computations. We leave detailed discussion and examples for later work \cite{ShadowFuture}.

Our starting point is a conformally-invariant pairing between a four-point function of operators $\cO_i$ in representations $[\De_i,\rho_i]$ and a four-point function of shadow operators $\tl \cO_i$ in representations $[d-\De_i,\rho_i^*]$. This can be written
\be
\<F,G\> &= \frac{1}{\mathrm{Vol}(\SO(d+1,1))}\int \p{\prod_{i=1}^4 d^d x_i} F_{a_1a_2a_3a_4}(x_i) G^{a_1 a_2 a_3 a_4}(x_i)
\nn\\
&=
\diagramEnvelope{\begin{tikzpicture}[anchor=base,baseline]
	\node (f) at (-1.5,0) [threept] {$F$};
	\node (g) at (1.5,0) [threept] {$G$};
	\node (o1) at (1.1,-0.4) [below] {$\cO_1$};
	\node (o2) at (1.1,0.55) [above] {$\cO_2$};
	\node (o3) at (2.2,0.6) [above] {$\cO_3$};
	\node (o4) at (2.1,-0.6) [below] {$\cO_4$};
	\draw [spinning] (g) to[in=45,out=135] (f);
	\draw [spinning] (g) to[in=-45,out=-135] (f);
	\draw [spinning] (g) to[in=135,out=45,looseness=1.6] (f);
	\draw [spinning] (g) to[in=-135,out=-45,looseness=1.6] (f);
\end{tikzpicture}}.
\label{eq:spinningpairing}
\ee
In our diagrammatic language, an incoming line for $\cO$ is equivalent to an outgoing line for $\tl \cO$, and connecting lines means contracting indices and integrating over Euclidean space. To get a finite result for $\<F,G\>$, we must divide by the volume of the conformal group acting on all four points $x_i$. In practice, this means gauge-fixing and inserting the appropriate Faddeev-Popov determinant.

Consider first the case of scalar operators $\cO_i$.
An orthogonal basis with respect to the pairing $\<\cdot,\cdot\>$ is given by linear combinations of blocks that are single-valued in Euclidean space,
\be
F_{\De,\ell} &= \frac 1 2 \p{G_{\De,\ell} + S_{\De,\ell} G_{d-\De,\ell}},
\ee
 where $\De=\frac d 2 + i \nu$ is restricted to the principal series.\footnote{We must also include the so-called ``discrete series" in non-even dimensions \cite{Dobrev:1977qv}.} The constant $S_{\De,\ell}$ depends on $\De,\ell$ and the external dimensions $\De_i$, and will not be important for the current discussion. We call the $F_{\De,\ell}$ ``Euclidean partial waves." Orthogonality follows from the fact that the Casimir operator is self-adjoint with respect to $\<\cdot,\cdot\>$, together with the fact that $F_{\De,\ell}$ is single-valued so there are no boundary contributions from integrating by parts. See \cite{Caron-Huot:2017vep} for more details.
 
A four-point function of scalars has a Euclidean partial wave decomposition of the form
\be
\label{eq:euclideandecomposition}
g(x_i) &= 1+\sum_\ell \oint_{\frac d 2 - i\oo}^{\frac d 2 + i\oo} \frac{d\De}{2\pi i} c(\De,\ell) F_{\De,\ell}(x_i) + \textrm{discrete series}.
\ee
The decomposition (\ref{eq:euclideandecomposition}) is not the usual conformal block decomposition, but it is closely related. When $g(x_i)$ is a four-point function in a unitary CFT, we expect that $c(\De,\ell)$ has (shadow-symmetric) simple poles in $\De$ on the real axis
\be
c(\De,\ell) &\sim \sum_i -c_{\De_i,\ell}\p{\frac{1}{\De-\De_i} + \frac{S_{\De,\ell}^{-1}}{d-\De-\De_i}}.
\ee
 We can then deform the $\De$-contour in (\ref{eq:euclideandecomposition}) to the right for $G_{\De,\ell}$ and to the left for $G_{d-\De,\ell}$ to obtain
\be
g(x_i) &= 1 + \sum_{\De_i,\ell} c_{\De_i,\ell} G_{\De_i,\ell}(x_i).
\ee
Thus, positions of poles in $c(\De,\ell)$ encode the spectrum of the theory, and the residues encode products of OPE coefficients.\footnote{When deforming the $\De$-contour, one must take into account poles in the blocks themselves, which interact in an intricate way \cite{Dobrev:1977qv,Cornalba:2007fs,Caron-Huot:2017vep}.}

For spinning operators, the Euclidean partial waves $F^{(a,b)}_{\De,\rho}$ and their coefficients $c_{(a,b)}(\De,\rho)$ are additionally labeled by a pair of three-point structures $(a,b)$. An inversion formula for the coefficients is given by\footnote{We sum over raised and lowered pairs of three-point structures $(a,b)$.}
\be
M^{(c,d)(a,b)}(\De,\rho) c_{(a,b)}(\De,\rho) &= \<\tl F^{(c,d)}_{\De,\rho},g\>,
\ee
where, roughly speaking,\footnote{We are neglecting an additional term proportional to $\de(\De+\De'-d)$ that is unimportant for the current discussion.}
\be
\<\tl F_{\De',\rho'}^{(c,d)}, F_{\De,\rho}^{(a,b)}\> &\sim M^{(c,d)(a,b)}(\De,\rho) \de_{\rho \rho'} \de(\De-\De').
\ee
Pictorially,
\be
\label{eq:pictorialinversionspin}
\<\tl F_{\De,\rho}^{(c,d)},g\> &=\diagramEnvelope{\begin{tikzpicture}[anchor=base,baseline]
	\node (f1) at (-2.5,0) [threept] {$c$};
	\node (f2) at (-0.5,0) [threept] {$d$};
	\node (fm) at (-1.5,0) [cross] {};
	\node (g) at (2.5,0) [threept] {$g$};
	\node (l) at (-1.05,0.1) [above] {$\cO$};
	\node (l) at (-1.9,0.1) [above] {$\cO^\dag$};
	\node (o1) at (-0.3,-0.4) [below] {$\cO_1$};
	\node (o2) at (-0.3,0.4) [above] {$\cO_2$};
	\node (o3) at (-3.2,0.4) [above] {$\cO_3$};
	\node (o4) at (-3.2,-0.4) [below] {$\cO_4$};
	\draw [spinning] (g) to[out=135,in=45] (f2);
	\draw [spinning] (g) to[out=-135,in=-45] (f2);
	\draw [spinning] (g) to[out=45,in=135] (f1);
	\draw [spinning] (g) to[out=-45,in=-135] (f1);
	\draw [spinning] (f1) -- (fm);
	\draw [spinning] (f2) -- (fm);
\end{tikzpicture}}.
\ee

One of our main observations is that spinning conformal blocks can be written as derivatives of scalar blocks. Schematically, we have
\be
\tl F_{\De,\rho}^\textrm{spin} &= \mathfrak{D} F^\textrm{scalar}_{\De,\ell},\nn\\
\mathfrak{D} &= \sum_t d_t(\De,\rho) t_{ABCD} \cD^{(a)A}_{1} \cD^{(b)B}_{2} \cD^{(c)C}_3 \cD^{(d)D}_4.
\ee
The operators $\cD^{(a_i)A_i}_i$ are spin-raising operators transforming in $W_i$, acting on the point $x_i$. 
Here, $t$ runs over invariant tensors in $(W_1\otimes W_2 \otimes W_3\otimes W_4)^*$.

To compute the pairing (\ref{eq:pictorialinversionspin}), it is useful to integrate $\mathfrak{D}$ by parts,
\be
\label{eq:afterintegratingbyparts}
\<\tl F^\textrm{spin}_{\De,\rho},g\> &= \<F^\textrm{scalar}_{\De,\ell},\mathfrak{D}^* g\>
\ee
where $\mathfrak{D}^*$ is the adjoint of $\mathfrak{D}$ under the pairing $\<\cdot,\cdot\>$, given by replacing each $\cD_i^{(a)}$ with its adjoint $(\cD_i^{(a)})^*$ (since we can integrate by parts individually on each leg). The adjoints $(\cD^{(a_i)}_i)^*$ are spin-{\it lowering}\ differential operators, and the right-hand side of (\ref{eq:afterintegratingbyparts}) is a pairing between {\it scalar}\ four-point functions.  We can thus proceed to study it in the same way as we study four-point functions of scalars. For example, one can derive spinning versions of the CFT Froissart-Gribov formula \cite{Caron-Huot:2017vep} using these techniques.\footnote{One of the consequences of the Froissart-Gribov formula is that CFT data can be analytically continued in spin. When non-STTs can appear as internal operators, analytic continuation in spin can be understood by expressing $V_{\De,\rho}$ as a subrepresentation of $V_{\De',\ell}\otimes W$ for some fixed $W$, and then analytically continuing in $\ell$. This is equivalent to analytically continuing in the length of the first row of the Young diagram for $\rho$.} We call this trick ``spinning-down" a four-point function.

In pictures, the right-hand side of (\ref{eq:afterintegratingbyparts}) is
\be
\<\tl F^\textrm{spin}_{\De,\rho},g\> &\sim \sum_t
\diagramEnvelope{\begin{tikzpicture}[anchor=base,baseline]
	\node (f1) at (-2.5,0) [twopt] {};
	\node (f2) at (-0.5,0) [twopt] {};
	\node (g) at (3,0) [threept] {$g$};
	\node (o1) at (2,-0.8) [threept] {};
	\node (o2) at (2,0.8) [threept] {};
	\node (o3) at (4,0.8) [threept] {};
	\node (o4) at (4,-0.8) [threept] {};
	\draw [scalar] (o2) to[out=180,in=45] (f2);
	\draw [scalar] (o1) to[out=180,in=-45] (f2);
	\draw [scalar] (o3) to[out=135,in=135] (f1);
	\draw [scalar] (o4) to[out=-135,in=-135] (f1);
	\draw [spinning] (g) -- (o1);
	\draw [spinning] (g) -- (o2);
	\draw [spinning] (g) -- (o3);
	\draw [spinning] (g) -- (o4);
	\draw [spinning] (f1) -- (f2);
	\node (t) at (3,-2) [threept] {$t$};
	\draw [finite] (o1) -- (t);
	\draw [finite] (o2) -- (t);
	\draw [finite] (o3) -- (t);
	\draw [finite] (o4) -- (t);
\end{tikzpicture}},
\label{eq:somepicture}
\ee
where the dashed lines represent scalars.

\subsection{$6j$ symbols for infinite-dimensional representations}

If we plug in a $t$-channel partial wave for $g$, then we can simplify (\ref{eq:somepicture}) further by using crossing to move the differential operators to the internal leg:
\be
\diagramEnvelope{\begin{tikzpicture}[anchor=base,baseline]
	\node (f1) at (-1.3,0) [threept] {};
	\node (f2) at (0,0) [threept] {};
	\node (g1) at (2,0.8) [threept] {};
	\node (g2) at (2,-0.8) [threept] {};
	\draw [spinning] (g2) to[out=-135,in=-45] (f2);
	\draw [spinning] (g1) to[out=135,in=45] (f2);
	\draw [spinning] (g1) to[out=45,in=135,looseness=2] (f1);
	\draw [spinning] (g2) to[out=-45,in=-135,looseness=2] (f1);
	\draw [spinning] (f1) -- (f2);
	\draw [spinning] (g1) -- (g2);
\end{tikzpicture}
\hspace{-0.3in}&=\hspace{-0.3in}
\begin{tikzpicture}[anchor=base,baseline]
	\node (f1) at (-1.3,0) [twopt] {};
	\node (f2) at (0,0) [twopt] {};
	\node (g1) at (3,0.6) [threept] {};
	\node (g2) at (3,-0.6) [threept] {};
	\node (o1) at (2,-1.4) [threept] {};
	\node (o2) at (2,1.4) [threept] {};
	\node (o3) at (4,1.4) [threept] {};
	\node (o4) at (4,-1.4) [threept] {};
	\draw [scalar] (o2) to[out=180,in=45] (f2);
	\draw [scalar] (o1) to[out=180,in=-45] (f2);
	\draw [scalar] (o3) to[out=90,in=135] (f1);
	\draw [scalar] (o4) to[out=-90,in=-135] (f1);
	\draw [spinning] (g2) -- (o1);
	\draw [spinning] (g1) -- (o2);
	\draw [spinning] (g1) -- (o3);
	\draw [spinning] (g2) -- (o4);
	\draw [spinning] (f1) -- (f2);
	\draw [spinning] (g1) -- (g2);
	\node (t) at (3,-2) [threept] {$t$};
	\draw [finite] (o1) -- (t);
	\draw [finite] (o2) -- (t);
	\draw [finite] (o3) -- (t);
	\draw [finite] (o4) -- (t);
\end{tikzpicture}
\nn\\[-0.5in]
&=\sum\left\{\phantom{\frac{1}{2}}\!\!\!\!\cdots\phantom{\frac{1}{2}}\!\!\!\!\right\}^4
\hspace{-0.3in}
\begin{tikzpicture}[anchor=base,baseline]
	\node (f1) at (-1.3,0) [twopt] {};
	\node (f2) at (0,0) [twopt] {};
	\node (g1) at (2,0.8) [twopt] {};
	\node (g2) at (2,-0.8) [twopt] {};
	\node (t) at (3.5,0) [threept] {$t$};
	\node (l1) at (2,0.45) [threept] {};
	\node (l2) at (2,0.15) [threept] {};
	\node (l3) at (2,-0.15) [threept] {};
	\node (l4) at (2,-0.45) [threept] {};
	\draw [scalar] (g2) to[out=-135,in=-45] (f2);
	\draw [scalar] (g1) to[out=135,in=45] (f2);
	\draw [scalar] (g1) to[out=45,in=135,looseness=2] (f1);
	\draw [scalar] (g2) to[out=-45,in=-135,looseness=2] (f1);
	\draw [spinning] (f1) -- (f2);
	\draw [spinning] (g1) -- (l1) -- (l2) -- (l3) -- (l4) -- (g2);
	\draw [finite] (l1) to[out=0,in=135] (t);
	\draw [finite] (l2) to[out=0,in=160] (t);
	\draw [finite] (l3) to[out=0,in=-160] (t);
	\draw [finite] (l4) to[out=0,in=-135] (t);
\end{tikzpicture}
\nn\\[-0.5in]
&=\sum\left\{\phantom{\frac{1}{2}}\!\!\!\!\cdots\phantom{\frac{1}{2}}\!\!\!\!\right\}^4
\left(\phantom{\frac{1}{2}}\!\!\!\!\cdots\phantom{\frac{1}{2}}\!\!\!\!\right)
\hspace{-0.3in}
\begin{tikzpicture}[anchor=base,baseline]
	\node (f1) at (-1.3,0) [twopt] {};
	\node (f2) at (0,0) [twopt] {};
	\node (g1) at (2,0.8) [twopt] {};
	\node (g2) at (2,-0.8) [twopt] {};
	\draw [scalar] (g2) to[out=-135,in=-45] (f2);
	\draw [scalar] (g1) to[out=135,in=45] (f2);
	\draw [scalar] (g1) to[out=45,in=135,looseness=2] (f1);
	\draw [scalar] (g2) to[out=-45,in=-135,looseness=2] (f1);
	\draw [spinning] (f1) -- (f2);
	\draw [spinning] (g1) -- (g2);
\end{tikzpicture}}
\label{eq:generalsixjintermsofscalarsixj}
\ee
The symbol $\{\cdots\}^4$ represents a product of four $6j$ symbols of the type in (\ref{eq:6jdefinition}), and the factor $(\cdots)$ is the result of taking a conformally-invariant product of differential operators on the right internal leg. For simplicity, we have omitted labels and shown only the topology of the various diagrams. Dashed lines represent scalar operators, and solid lines represent operators with spin. 

Equation (\ref{eq:generalsixjintermsofscalarsixj}) expresses an inner product of general spinning blocks in terms of inner products of scalar blocks. Such inner products are examples of $6j$ symbols for the conformal group, where all the representations are infinite-dimensional principal series representations. The corresponding graphs have the topology of a tetrahedron. The equality (\ref{eq:generalsixjintermsofscalarsixj}) is an example of a general set of relations between infinite-dimensional $6j$ symbols that we can derive as follows. We start with a tetrahedron graph and introduce a bubble with a finite-dimensional representation $W$ on one of the lines. We can then move the vertices of the bubble to a different internal line and collapse it.
\be
&
\diagramEnvelope{\begin{tikzpicture}[anchor=base,baseline]
\node (a) at (-2.00,0) [twopt] {};
\node (b) at (1.00, 1.732) [twopt] {};
\node (c) at (1.00, -1.732) [twopt] {};
\node (d) at (0,0) [twopt] {};
\draw [spinning no arrow] (b) -- (a);
\draw [spinning no arrow] (c) -- (a);
\draw [spinning no arrow] (a) -- (d);
\draw [spinning no arrow] (b) -- (c);
\draw [spinning no arrow] (b) -- (d);
\draw [spinning no arrow] (c) -- (d);
\end{tikzpicture}\nn\\
&=
\left(\phantom{\frac{1}{2}}\!\!\!\!\cdots\phantom{\frac{1}{2}}\!\!\!\!\right)
\begin{tikzpicture}[anchor=base,baseline]
\node (a) at (-2.00,0) [twopt] {};
\node (b) at (1.00, 1.732) [twopt] {};
\node (c) at (1.00, -1.732) [twopt] {};
\node (d) at (0,0) [twopt] {};
\node (w1) at (1,0.7) [twopt] {};
\node (w2) at (1,-0.7) [twopt] {};
\draw [spinning no arrow] (b) -- (a);
\draw [spinning no arrow] (c) -- (a);
\draw [spinning no arrow] (a) -- (d);
\draw [spinning no arrow] (b) -- (c);
\draw [spinning no arrow] (b) -- (d);
\draw [spinning no arrow] (c) -- (d);
\draw [finite] (w1) to[out=0,in=0] (w2);
\end{tikzpicture}
=
\sum\left(\phantom{\frac{1}{2}}\!\!\!\!\cdots\phantom{\frac{1}{2}}\!\!\!\!\right)\left\{\phantom{\frac{1}{2}}\!\!\!\!\cdots\phantom{\frac{1}{2}}\!\!\!\!\right\}^2
\begin{tikzpicture}[anchor=base,baseline]
\node (a) at (-2.00,0) [twopt] {};
\node (b) at (1.00, 1.732) [twopt] {};
\node (c) at (1.00, -1.732) [twopt] {};
\node (d) at (0,0) [twopt] {};
\node (w1) at (0.7, 1.212) [twopt] {};
\node (w2) at (0.7, -1.212) [twopt] {};
\draw [spinning no arrow] (b) -- (a);
\draw [spinning no arrow] (c) -- (a);
\draw [spinning no arrow] (a) -- (d);
\draw [spinning no arrow] (b) -- (c);
\draw [spinning no arrow] (b) -- (d);
\draw [spinning no arrow] (c) -- (d);
\draw [finite] (w1) to[out=-90,in=90] (w2);
\end{tikzpicture}
\nn\\
&=
\sum\left(\phantom{\frac{1}{2}}\!\!\!\!\cdots\phantom{\frac{1}{2}}\!\!\!\!\right)\left\{\phantom{\frac{1}{2}}\!\!\!\!\cdots\phantom{\frac{1}{2}}\!\!\!\!\right\}^3
\begin{tikzpicture}[anchor=base,baseline]
\node (a) at (-2.00,0) [twopt] {};
\node (b) at (1.00, 1.732) [twopt] {};
\node (c) at (1.00, -1.732) [twopt] {};
\node (d) at (0,0) [twopt] {};
\node (w1) at (0.7, 1.212) [twopt] {};
\node (w2) at (0.2, 0.346) [twopt] {};
\draw [spinning no arrow] (b) -- (a);
\draw [spinning no arrow] (c) -- (a);
\draw [spinning no arrow] (a) -- (d);
\draw [spinning no arrow] (b) -- (c);
\draw [spinning no arrow] (b) -- (d);
\draw [spinning no arrow] (c) -- (d);
\draw [finite] (w1) to[out=-70,in=-30] (w2);
\end{tikzpicture}
=
\sum
\left(\phantom{\frac{1}{2}}\!\!\!\!\cdots\phantom{\frac{1}{2}}\!\!\!\!\right)^2
\left\{\phantom{\frac{1}{2}}\!\!\!\!\cdots\phantom{\frac{1}{2}}\!\!\!\!\right\}^4
\begin{tikzpicture}[anchor=base,baseline]
\node (a) at (-2.00,0) [twopt] {};
\node (b) at (1.00, 1.732) [twopt] {};
\node (c) at (1.00, -1.732) [twopt] {};
\node (d) at (0,0) [twopt] {};
\draw [spinning no arrow] (b) -- (a);
\draw [spinning no arrow] (c) -- (a);
\draw [spinning no arrow] (a) -- (d);
\draw [spinning no arrow] (b) -- (c);
\draw [spinning no arrow] (b) -- (d);
\draw [spinning no arrow] (c) -- (d);
\end{tikzpicture}}
\label{eq:tetrahedra}
\ee

The above is essentially the pentagon identity for a mixture of finite-dimensional (degenerate) and infinite-dimensional representations. Because the crossing kernel for degenerate four-point functions is so simple, the pentagon identity becomes a useful tool for computing infinite-dimensional crossing kernels. The $6j$ symbol for six scalar representations of the conformal group was computed in \cite{Krasnov:2005fu} in terms of a four-fold Mellin-Barnes integral. That result, along with relations of the type illustrated in (\ref{eq:tetrahedra}) in principle allows one to compute an arbitrary $6j$ symbol.

\section{Discussion}
\label{sec:discussion}

In this work, we introduced new mathematical tools for computations in conformal representation theory. These include the construction of weight-shifting operators summarized in theorem~\ref{thm:operatorclassification}, the observation that they satisfy the crossing equation~(\ref{eq:6jdefinition}), and our discussion of how weight-shifting operators interact with conformally-invariant projectors (\ref{eq:integrationbyparts}). For concrete computations, we introduced the embedding space operators (\ref{eq:vectoroperators}), (\ref{eq:3Doperators}), and (\ref{eq:4Doperators}). We  explored in detail how these tools can be applied to compute conformal blocks. We also discussed some applications to harmonic analysis and inversion formulae. We plan to expand on the latter in future work \cite{ShadowFuture}.

However, many directions remain unexplored. One natural question is how weight-shifting operators interact with short multiplets of the conformal group. For simplicity, we specialized to simple generalized Verma modules (long multiplets) in this paper. However, we expect new phenomena in the presence of shortening conditions. Some questions include: How is the tensor product decomposition~\ref{eq:tensorproductdecomposition} modified for short multiplets? How are shortening conditions reflected in the zeros and poles of $6j$ symbols? Is the spinning-down procedure of section~\ref{sec:spinningdown} useful when external operators are in short multiplets?

Our construction of weight-shifting operators and their crossing equations is very general. As noted in the introduction, it also applies to generalized Verma modules of any Lie (super-)algebra.\footnote{In the language of~\cite{Gadde:2017sjg}, it works in a $G$FT for any group $G$.} In particular, supersymmetric weight-shifting operators should be useful for computing and studying superconformal blocks and tensor structures. It will be interesting to construct such operators and explore their applications. The question of how weight-shifting operators interact with shortening conditions becomes even more interesting in the superconformal case, since there are a wide variety of interesting short superconformal multiplets (see e.g.\ \cite{Cordova:2016emh}).

As discussed in section~\ref{sec:algebraofweightshifting}, the algebra of weight-shifting operators is governed by the fusion matrix $J(\l)$, which is closely related to solutions to the Yang-Baxter equation and integrability \cite{YangBaxter}. Does this structure have an interesting role to play in conformal field theory? Is it related to the ``superintegrability" of conformal blocks discussed in \cite{Isachenkov:2016gim,Chen:2016bxc,Schomerus:2016epl}?

It may also be interesting to explore the role of weight-shifting operators in holographic calculations.\footnote{After this draft was completed, \cite{Cardoso:2017qmj} appeared which describes the special case of weight-shifting operators that shift the mass of a scalar field in $AdS$.} We expect that they should help in the computation of Witten diagrams for operators with spin. Natural questions include: What is the flat-space limit of weight-shifting operators? Are they useful for amplitudes calculations (for example are they related to the differential operators introduced in \cite{Cheung:2017ems})? Weight-shifting operators may also be helpful for exploring spinning amplitudes in the conformal basis of \cite{Pasterski:2017kqt}.

\section*{Acknowledgements}

We are grateful to Clay C\'ordova, Tolya Dymarsky, Abhijit Gadde, Mikhail Isachenkov, Eric Perlmutter, Fernando Rejon-Barrera, Douglas Stanford and Emilio Trevisani for discussions. DK and PK would like to thank the organizers of the Bootstrap 2017 workshop, where part of this work was completed.
DSD is supported by DOE grant DE-SC0009988, a William D. Loughlin Membership at the
Institute for Advanced Study, and Simons Foundation grant 488657 (Simons Collaboration
on the Nonperturbative Bootstrap). PK is supported DOE grant DE-SC0011632.

\pagebreak

\appendix

\section{Conformal algebra}\label{app:conformalalgebra}
We use the following conventions for the conformal algebra,
\be
	[D,K_\mu]&=-K_\mu,\quad [D,P_\mu]=P_\mu,\\
	[K_\mu,P_\nu]&=2\delta_{\mu\nu}D-2M_{\mu\nu},\label{eq:KPcommutator}\\
	[M_{\mu\nu},P_{\r}]&=\delta_{\nu\r}P_\mu-\delta_{\mu\r}P_\nu,\\
	[M_{\mu\nu},K_{\r}]&=\delta_{\nu\r}K_\mu-\delta_{\mu\r}K_\nu,\\
	[M_{\mu\nu},M_{\r\s}]&=\delta_{\nu\r}M_{\mu\s}-\delta_{\mu\r}M_{\nu\s}+\delta_{\nu\s}M_{\r\mu}-\delta_{\mu\s}M_{\r\nu},
\ee
and all other commutators vanish. In Lorentzian signature, all generators are anti-Hermitian. In Euclidean signature $D=D^\dagger$, $K=P^\dagger$ and $M$ is anti-Hermitian. Notice how~\eqref{eq:KPcommutator} expresses the conformal Killing equation for the adjoint representation by saying that the rank-2 symmetric traceless tensor does not appear among level-1 descendants of the primary $K_\mu$.

\section{Verma modules and differential operators}
\label{app:verma}

In the main text we have seen that for every irreducible component $V_{\De',\rho}$ in the tensor product $W\otimes V_{\De,\rho}$ there is a conformally-covariant differential operator $\cD_A : [\De,\rho]\to [\De',\lambda]$ with a $W^*$-index $A$. Here we would like to state this relation more carefully and show that there is in fact a one-to-one correspondence.

\begin{theorem}
For generic $\De$ the decomposition~\eqref{eq:tensorproductdecomposition} holds. The irreducible components in the tensor product decomposition~\eqref{eq:tensorproductdecomposition} are in one-to-one correspondence with the conformally-covariant differential operators $\cD_A: [\De,\rho]\to[\De',\lambda]$ with an index $A$ transforming in a finite-dimensional representation $W$ of $\SO(d+1,1)$.
\end{theorem}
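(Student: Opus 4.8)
The plan is to prove the theorem in two directions. The forward direction has essentially been established in the main text: given an irreducible component $V_{\De',\lambda}\subseteq W\otimes V_{\De,\rho}$, the explicit ansatz~\eqref{eq:primaryansatz}, with coefficients $c_k$ fixed by the requirement that the state is annihilated by $1\otimes K_\mu + K_\mu\otimes 1$, produces a primary state $\cO'^c(0) = e^A\otimes(\cD_A)^c{}_b\cO^b(0)$, and the differential operator $\cD_A$ so defined is conformally-covariant by the computation in section~\ref{sec:diffopsfromtensor}. So the content of the theorem is really the reverse direction and the fact that the correspondence is a bijection: every conformally-covariant differential operator $\cD_A:[\De,\rho]\to[\De',\lambda]$ carrying a $W^*$-index arises in this way, and distinct operators come from distinct components.

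The first step is to recast ``conformally-covariant differential operator'' as a purely algebraic object. A differential operator $\cD_A$ intertwining the conformal action on $[\De,\rho]$ with the action on $[\De',\lambda]$, twisted by $D_B{}^A(g^{-1})$, is the same data as a $\mathfrak{g}$-module homomorphism $\Phi: V_{\De',\lambda}\to W\otimes V_{\De,\rho}$ — this is exactly the dictionary of section~\ref{sec:algebraofweightshifting}, equations~\eqref{eq:exampletensorstate}--\eqref{eq:Vermahomomorphism}: one sends the primary of $V_{\De',\lambda}$ to $e^A\otimes(\cD_A)\cO(0)$ and extends by $\mathfrak{g}$-equivariance, and conversely any such $\Phi$ restricted to the primary and paired against $e^A$ recovers $\cD_A$. (One must check that a differential — as opposed to merely $\mathfrak{g}$-equivariant — operator corresponds to a homomorphism out of the generalized Verma module rather than some larger object; this uses that $V_{\De',\lambda}$ is generated by its primary under $P_\mu$ and that the operator acts locally.) So it suffices to show: for generic $\De$, $\Hom_{\mathfrak{g}}(V_{\De',\lambda}, W\otimes V_{\De,\rho})$ is nonzero iff $V_{\De',\lambda}$ appears in the decomposition~\eqref{eq:tensorproductdecomposition}, and in that case is one-dimensional, so that the homomorphisms (hence the operators) are in bijection with the summands.

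This reduces to a statement about generalized Verma modules that is proved in appendix~\ref{app:verma}'s companion lemma and is standard once we know the decomposition. The key input is that for generic $\De$ the tensor product $M = W\otimes V_{\De,\rho}$ is \emph{completely reducible} into a direct sum of \emph{simple} (irreducible) generalized Verma modules $V_{\De+i,\lambda}$, each appearing with the multiplicity dictated by~\eqref{eq:tensorproductdecomposition} — genericity of $\De$ ensures both that no $V_{\De+i,\lambda}$ is itself reducible (no null descendants) and that no nontrivial extensions occur, which is the ``$\De$ generic'' assumption flagged repeatedly in sections~\ref{sec:tensorproducts} and~\ref{sec:diffopsfromtensor} and which one verifies via the factor-space argument $M' = M/(\oplus_\mu P_\mu M)$. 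Granting this, $\Hom_{\mathfrak{g}}(V_{\De',\lambda}, M) = \bigoplus \Hom_{\mathfrak{g}}(V_{\De',\lambda}, V_{\De+i,\mu})$, and for generic parameters $\Hom$ between two simple generalized Verma modules is $\mathbb{C}$ if they are isomorphic and $0$ otherwise (Schur, plus the fact that a nonzero map between them would force a common sub- or quotient). Hence the dimension of $\Hom_{\mathfrak{g}}(V_{\De',\lambda}, M)$ equals the multiplicity of $V_{\De',\lambda}$ in~\eqref{eq:tensorproductdecomposition}, and when this multiplicity is $1$ — as it is for each irreducible component listed there — we get the claimed bijection. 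The main obstacle is the genericity/complete-reducibility step: one needs to argue carefully that for $\De$ outside a locally finite set of ``bad'' values the module $M$ has no indecomposable pieces and no reducible generalized Verma summands, so that the character computation (matching $\SO(1,1)\times\SO(d)$ characters of both sides of~\eqref{eq:tensorproductdecomposition}) is not merely an equality of characters but an actual direct-sum decomposition; everything else is then formal homological algebra of category-$\mathcal{O}$-type modules.
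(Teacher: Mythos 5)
Your architecture matches the paper's: you translate conformally-covariant differential operators into $\mathfrak{g}$-module homomorphisms $V_{\De',\lambda}\to W\otimes V_{\De,\rho}$ (the paper isolates exactly this as its lemma, proved via the jet-space/equivariance argument you sketch parenthetically), and you then read off the bijection from Schur's lemma applied to the direct-sum decomposition~\eqref{eq:tensorproductdecomposition}. That part is correct and is the same route the paper takes.

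The genuine gap is the step you yourself flag as ``the main obstacle'' and then do not close: upgrading the character identity to an actual direct-sum decomposition of $M=W\otimes V_{\De,\rho}$ into simple generalized Verma modules. You suggest this is ``verified via the factor-space argument $M'=M/(\oplus_\mu P_\mu M)$,'' but that argument only counts primaries and establishes equality of $\SO(1,1)\times\SO(d)$ characters --- it holds for \emph{all} $\De$, including degenerate values (such as $\De=1$ in the vector example of section~\ref{sec:tensorproducts}) where the decomposition fails as a module isomorphism, so it cannot by itself exclude indecomposable extensions. The paper closes this with a concrete mechanism: $M$ carries a conformally-invariant inner product induced from those on $W$ and $V_{\De,\rho}$; for generic $\De$ the submodule generated by the lowest-dimension primaries is a sum of irreducible generalized Verma modules, so the (possibly indefinite) invariant form is non-degenerate on it, and it therefore splits off as a direct summand $M=Y\oplus Y^\perp$; iterating on $Y^\perp$ from the bottom of the dimension spectrum upward, with the character identity controlling termination, yields~\eqref{eq:tensorproductdecomposition} as a genuine isomorphism. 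Some such argument (this one, or an Ext-vanishing statement at generic infinitesimal character) is required; ``formal homological algebra of category-$\mathcal{O}$-type modules'' does not supply it for free, and without it your computation of $\mathrm{Hom}_{\mathfrak{g}}(V_{\De',\lambda},M)$ as a sum over summands has no starting point.
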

\begin{proof}
First we show that the tensor product decomposition~\eqref{eq:tensorproductdecomposition} holds. The discussion in section~\ref{sec:tensorproducts} essentially shows that the characters on the both sides agree. This statement holds for all $\De$. This however does not necessarily imply~\eqref{eq:tensorproductdecomposition} as an isomorphism between the representations. So our first step is to construct the isomorphism~\eqref{eq:tensorproductdecomposition}.
 
We can define on $W\otimes V_{\De,\rho}$ a conformally-invariant inner product, induced from the inner products on $W$ and $V_{\De,\rho}$. Suppose that there is a submodule $Y\subseteq W\otimes V_{\De,\rho}$. If the conformally-invariant inner product is non-degenerate\footnote{Note that if the inner-product is non-degenerate but not positive-definite, there still can exist subspaces on which it is degenerate. Finite-dimensional representations of non-compact groups such as $\SO(d+1,1)$ or $\SO(d,2)$ necessarily have indefinite inner products.} on $Y$, it follows that $Y$ is in fact a direct summand,
\be\label{eq:directsummand}
	M\equiv W\otimes V_{\De,\rho} = Y\oplus Y^\perp.
\ee
Starting from this observation, it is a standard argument to show that~\eqref{eq:tensorproductdecomposition} holds. We reproduce it here for completeness. The states~\eqref{eq:simplestensortprimary} are always primary because they have the smallest possible scaling dimension $\De-j$. We can decompose them into mutually orthogonal irreducibles of $\SO(d)$. Considering all the descendants of these states we form the submodule
\be
Y_{-j} = \bigoplus_{\lambda \in W_{-j}\otimes \rho} V_{\De-j,\lambda}.
\ee
For generic $\De$ the generalized Verma modules in this sum are irreducible, and thus the inner product is non-degenerate (otherwise the null states form a submodule). By \eqref{eq:directsummand} we then have 
\be
	M = Y_{-j}\oplus M_{1}, \quad M_1\equiv Y_{-j}^\perp.
\ee
We can now look at the states of the smallest scaling dimension inside of $M_0$. These all are again primary, and we can consider the submodule $Y_{-j+1}$ which they generate. Since we already know~\eqref{eq:tensorproductdecomposition} as a character identity, we know that 
\be
Y_{-j+1} = \bigoplus_{\lambda \in W_{-j+1}\otimes \rho} V_{\De-j+1,\lambda}.
\ee
Again, from~\eqref{eq:directsummand} we find
\be
	M_1 = Y_{-j+1}\oplus M_2.
\ee
We then continue recursively until we exhaust all states as controlled by~\eqref{eq:tensorproductdecomposition} as a character identity. Collecting everything together, we arrive at~\eqref{eq:tensorproductdecomposition} as a direct sum decomposition.

From the discussion in the main text it follows that the primaries which we identify in the tensor product $W\otimes V_{\De,\rho}$ give rise to conformally-covariant differential operators. At the same time, as observed in section~\ref{sec:algebraofweightshifting}, they give rise to homomorphisms~\eqref{eq:Vermahomomorphism}. In fact, there is a one-to-one correspondence between these objects.
\blem\label{lem:tautology}
For any fixed $\De,\De',\rho,\lambda$ the conformally-covariant differential operators $\cD_A:[\De,\rho]\to[\De',\lambda]$ are in a one-to-one correspondence with the homomorphisms of the form~\eqref{eq:Vermahomomorphism}.
\elem
\noindent The map implied by this lemma is essentially constructed in section~\ref{sec:diffopsfromtensor}. Looking at it one can easily convince oneself that the lemma is almost a tautology. We give a formal proof later in this appendix. 
 
Given lemma~\ref{lem:tautology}, to finish the proof of the theorem it only remains to show that generically the only homomorphisms of the form~\eqref{eq:Vermahomomorphism} are those which come from the embeddings of the direct summands in~\eqref{eq:tensorproductdecomposition}. This follows immediately from Schur's lemma and the fact that Verma modules are irreducible for generic scaling dimensions.
\end{proof}

\begin{proof}[Proof of lemma~\ref{lem:tautology}]
For $W=\bullet$, lemma~\ref{lem:tautology} is standard material in representation theory of generalized Verma modules~\cite{MR1463509}, and we need to only slightly modify it by introducing the non-trivial $W$. Let us give an elementary review of the proof with the appropriate modifications. 

First, we need to give the precise meaning to $[\De,\rho]$, which is in fact a vector bundle. The sections of $[\De,\rho]$ are the functions $f^a(x)$ on the conformal sphere $S^d$ with index $a$ in $\rho$ which transform as\footnote{The difference with~\eqref{eq:primarytransformation} comes from the fact that here we are defining the action on functions rather than operators, and the appearance of $g^{-1}$ in the argument of $f$ on the right hand side is dictated by compatibility with the group multiplication $(gh)f=g(h(f))$.}
\be
	(gf)^a(x) = \Omega(x)^{-\Delta}\rho^{a}{}_{b}(R(x))f^b(g^{-1}x), \quad g\in \SO(d+1,1). 
\ee
We also associate a vector bundle $\cW$ to $W$. The sections of $\cW$ are the functions $f_A$ which transform as
\be
	(gf)_A(x) = D_A{}^B(g)f_B(g^{-1}x).
\ee
The conformally-covariant differential operator $\cD_A$ is then a differential operator between the vector bundles
\be
	\cD: [\rho,\lambda]\to \cW\otimes [\De',\lambda],
\ee
which commutes with the action of the conformal group. We will refer to this property as equivariance. The idea now is to note that if we know that $\cD$ is equivariant, then it is completely specified by its action at zero, i.e.\ by the expression
\be\label{eq:Dat0}
	(\cD f)^a_A(0)=\text{derivatives of }f\text{ at 0}.
\ee
Indeed, let $t_x$ be the translation which takes $0$ to $x$. Then we can compute $\cD f$ at any $x$ by writing
\be\label{eq:Dat0extension}
	(\cD f)^a_A(x) = (t_x \cD t_{-x}f)^a_A(x)=D_A{}^B(t_x) (\cD t_{-x}f)^a_B(0),
\ee
and using~\eqref{eq:Dat0} for $t_{-x}f$. As usual, the only condition the expression~\eqref{eq:Dat0} has to satisfy in order for this construction to be self-consistent is that it has to be equivariant with respect to the transformations which fix the origin -- in our case with respect to dilatations, rotations and special conformal transformations, the algebra of which we will denote by $\mathfrak{p}$.\footnote{This is not to be confused with the subalgebra generated by translations. We use this notation to be consistent with the mathematics literature, where $\mathfrak{p}$ stands for ``parabolic''.} 

Instead of studying this condition in detail, we can just map it to the similar problem for Verma modules. If $\cD$ is of order $k$, the equation \eqref{eq:Dat0} can be understood as the map 
\be\label{eq:kjetmap}
	\cD: J^k_0[\De,\rho]\to W\otimes J^0_0[\De',\lambda],
\ee
where $J_0^k[\De,\rho]$ is the space of $k$-jets of sections of $[\De,\rho]$ at 0, i.e.\ the space of formal power series of sections of $[\De,\rho]$ around the origin, truncated to $k$-th order. One can extend the action of conformal algebra to these jets, and the problem of finding a $\mathfrak{p}$-equivariant map~\eqref{eq:Dat0} is equivalent to finding $\mathfrak{p}$-equivariant maps~\eqref{eq:kjetmap}. Using \eqref{eq:Dat0extension} we can extend such maps to $\mathfrak{so}(d+1,1)$-equivariant maps
\be\label{eq:inftyjetmap}
	\cD: J^\infty_0[\De,\rho]\to W\otimes J^\infty_0[\De',\lambda].
\ee
between the formal power series. These are the same as Verma module homomorphisms because $V_{\De,\rho}$ consists of the states like $\ptl_{\mu_1}\cdots\ptl_{\mu_n}\cO^a(0)$, which are naturally linear functionals on the formal power series $J^\infty_0[\De,\rho]$. In fact, one can show that as $\mathfrak{so}(d+1,1)$-representations,
\be
	V_{\De,\rho} \simeq \p{J^\infty_0[\De,\rho]}^*.
\ee
Thus by taking the dual of~\eqref{eq:inftyjetmap} we obtain a homomorhism
\be
	\cD^*: W^*\otimes V_{\De',\lambda}\to V_{\De,\rho}.
\ee
As usual, we can replace $W^*$ on the left with a $W$ on the right: we can define $\cD'(v) = e^A\otimes \cD^*(e^*_A\otimes v)$, so that
\be
	\cD': V_{\De',\lambda}\to  W \otimes V_{\De,\rho}
\ee
is a homomorphism of the form~\eqref{eq:Vermahomomorphism}. All the steps that we took to get from the differential operator $\cD_A$ to $\cD'$ were invertible, so we get a one-to-one correspondence.
\end{proof}

\section{Weight-shifting operators for the vector representation}
\label{app:D4coefficients}

Let us give more detail about the computation of the weight-shifting operators for the vector representation (\ref{eq:vectoroperators}). Recall that traceless symmetric tensor operators are homogeneous elements of $R/(R\cap I)$, where $R$ is the ring of functions of $X,Z\in \R^{d+1,1}$ that are invariant under $Z\to Z+\l X$ (equivalently they are killed by $X\.\pdr{}{Z}$), and $I$ is the ideal generated by $\{X^2,X\.Z,Z^2\}$.  For a differential operator $\cD$ to be well-defined on $R/(R\cap I)$, it must satisfy
\be
\label{eq:consistencyconditions}
\cD R &\subseteq R,\\
\cD (R\cap I) &\subseteq R\cap I.
\label{eq:consistencyconditionstoo}
\ee

Because we are searching for homogeneous differential operators, it suffices to consider their action on homogeneous elements of $R$. It is not hard to convince oneself that a general homogeneous element of $R$ can be written as a linear combination of functions of the form
\be
f_{\De,\ell}(X,Z) &\equiv (X\.Y)^{-\De-\ell}((Z\.P)(X\.Q) - (Z\.Q)(X\.P))^\ell,
\ee
for various $Y,P,Q$.

To find the weight-shifting operators $\cD^{(a)}_m$, we start by enumerating conformally-covariant terms with the correct homogeneity in $X$ and $Z$, modulo $X\.\pdr{}{X}$, $Z\.\pdr{}{Z}$, and $X\.\pdr{}{Z}$ (which act as $-\De$, $\ell$, and $0$, respectively). There are a finite number of such terms, and this leads to the ansatz (\ref{eq:vectoroperators}) with undetermined coefficients that are functions of $\De,\ell$.

To fix the coefficients, it suffices to check (\ref{eq:consistencyconditions}) and (\ref{eq:consistencyconditionstoo}) for a sufficient number of functions. In particular, we impose (\ref{eq:consistencyconditions}) in the form
\be
\p{X\.\pdr{}{Z}}\cD^{(a)}_m f_{\De,\ell}(X,Z) &= 0,
\ee
and (\ref{eq:consistencyconditionstoo}) in the form
\be
\cD^{(a)}_m((S\.Z)^2(X\.X)-2(S\.X)(S\.Z)(X\.Z) + (S\.X)^2 (Z\.Z))f_{\De+2,\ell-2}(X,Z) &\in R\cap I\nn\\
\cD^{(a)}_m((X\.X)(Z\.Z) - (X\.Z)^2) f_{\De+2,\ell-2}(X,Z) &\in R\cap I\nn\\
\cD^{(a)}_m((X\.X)(Z\.S) - (X\.S)(X\.Z)) f_{\De+2,\ell-1}(X,Z) &\in R\cap I\nn\\
\cD^{(a)}_m(X\.X) f_{\De+2,\ell}(X,Z) &\in R\cap I
\label{eq:specializedconsistencyconditions}
\ee
where $S,Y,P,Q\in\R^{d+1,1}$ are arbitrary vectors. (Because of (\ref{eq:consistencyconditions}), to check whether the left hand sides of (\ref{eq:specializedconsistencyconditions}) are in $R\cap I$, it suffices to check whether they are in $I$.  That is, we set $X^2,X\.Z,Z^2$ to zero and check whether the result is zero.) These conditions are sufficient to fix the unknown coefficients. In particular, for the most complicated weight-shiting operator $\cD^{+0}_m$, we find
\be
c_1 &= \left(\frac{d}{2}-\Delta -1\right) (\Delta +\ell-1) (d-\Delta +\ell-2)\nn\\
c_2 &= -\frac{1}{2} (\Delta +\ell-1) (d-\Delta +\ell-2)\nn\\
c_3 &= -\left(\frac{d}{2}-\Delta -1\right) (\Delta +\ell-2)\nn\\
c_4 &= -\left(\frac{d}{2}-\Delta -1\right) (d-\Delta +\ell-2)\nn\\
c_5 &= \frac{d}{2}+\ell-2\nn\\
c_6 &= \frac{d}{2}-\Delta -1\nn\\
c_7 &= -\frac{1}{2}.
\ee

\section{$6j$ symbols and the algebra of operators}
\label{app:sixjalgebra}
In this appendix we consider the crossing equation which is obtained by replacing $\cO_1$ in~\eqref{eq:6jdefinition} by a finite-dimensional representation,\footnote{We then find a third finite-dimensional representation arising from the tensor product of the first two.}
\be\label{eq:algebracrossing}
\diagramEnvelope{\begin{tikzpicture}[anchor=base,baseline]
	\node (vertL) at (0,-0.08) [threept] {$a$};
	\node (vertR) at (2,-0.133) [threept] {$b$};
	\node (opO1) at (-0.5,1) [above] {$U$};
	\node (opO2) at (-1,0) [left] {$\cO_2$};
	\node (opO3) at (3,0) [right] {$\cO_3$};
	\node (opW) at (2.5,1) [above] {$V$};	
	\node at (1,0.1) [above] {$\cO_3'$};	
	\draw [finite with arrow] (vertL)-- (opO1);
	\draw [spinning] (opO2)-- (vertL);
	\draw [spinning] (vertL)-- (vertR);
	\draw [spinning] (vertR)-- (opO3);
	\draw [finite with arrow] (vertR)-- (opW);
	\end{tikzpicture}}
\quad=\quad
\sum_{W,m,n}
\left\{
\begin{matrix}
	U & \cO_2 & W \\
	\cO_3 & V & \cO_3'
\end{matrix}
\right\}^{ab}_{mn}
\diagramEnvelope{\begin{tikzpicture}[anchor=base,baseline]
	\node (vertU) at (0,-0.78) [threept,inner sep=1.5pt] {$m$};
	\node (vertD) at (0,0.7) [threept] {$n$};
	\node (opO1) at (-1,1.5) [above] {$U$};
	\node (opO2) at (-1,-0.7) [left] {$\cO_2$};
	\node (opO3) at (1,-0.7) [right] {$\cO_3$};
	\node (opW) at (1,1.5) [above] {$V$};	
	\node at (0.1,0) [right] {$W$};	
	\draw [finite with arrow] (vertD)-- (opO1);
	\draw [spinning] (opO2)-- (vertU);
	\draw [finite with arrow] (vertU)-- (vertD);
	\draw [spinning] (vertU)-- (opO3);
	\draw [finite with arrow] (vertD)-- (opW);
	\end{tikzpicture}}.
\ee
Here, the sum is over $W\in U\otimes V$. Since restricting $\cO_1$ to a finite-dimensional representation changes the counting of structures on both sides, we should check that the numbers still agree. Let us assume that $\De_3=\De_2-l$. According to theorem~\ref{thm:operatorclassification}, the number of structures on the left is
\be
	\sum_{i+k=l} \dim (\rho_2^* \otimes U_i\otimes V_k \otimes \rho_3)^{\SO(d)},
\ee
while the number of structures on the right is
\be
\sum_{W\in U\otimes V} \dim (\rho_2^* \otimes W_l \otimes \rho_3)^{\SO(d)}\times \dim (U\otimes V\otimes W^*)^{\SO(d+1,1)}.
\ee
These numbers are the same due to
\be
	\bigoplus_{W\in U\otimes V} \dim (U\otimes V\otimes W^*)^{\SO(d+1,1)}\times W_l\simeq \bigoplus_{i+k=l} U_i\otimes V_k.
\ee

The crossing transformation~\eqref{eq:algebracrossing} defines the algebra of weight-shifting operators for general $\rho_2$ and generic $\De_2$. In section~\ref{sec:algebraofweightshifting} we described the same algebra in the situation when both $\rho_2$ and $\De_2$ are generic. As in section~\ref{sec:algebraofweightshifting},~\eqref{eq:algebracrossing} essentially expresses the associativity of the tensor product.

As a simple application, suppose that $U=V^*$ and let us contract $U$ and $V$ indices in~\eqref{eq:algebracrossing} to form the bubble diagram,
\be
\diagramEnvelope{\begin{tikzpicture}[anchor=base,baseline]
	\node (vertL) at (0,-0.08) [threept] {$a$};
	\node (vertR) at (2,-0.133) [threept] {$b$};
	\node (opO1) at (1,1) [above] {$U$};
	\node (opO2) at (-1,0) [left] {$\cO_2$};
	\node (opO3) at (3,0) [right] {$\cO_3$};
	\node at (1,0.1) [above] {$\cO_3'$};	
	\draw [finite with arrow] (vertL) to[out=90,in=90] (vertR);
	\draw [spinning] (opO2)-- (vertL);
	\draw [spinning] (vertL)-- (vertR);
	\draw [spinning] (vertR)-- (opO3);
	\end{tikzpicture}}
\quad=\quad
\sum_{W,m,n}
\left\{
\begin{matrix}
	U & \cO_2 & W \\
	\cO_3 & U^* & \cO_3'
\end{matrix}
\right\}^{ab}_{mn}
\diagramEnvelope{\begin{tikzpicture}[anchor=base,baseline]
	\node (vertU) at (0,-0.78) [threept,inner sep=1.5pt] {$m$};
	\node (vertD) at (0,0.7) [threept] {$n$};
	\node (opO1) at (0,1.5) [above] {$U$};
	\node (opO2) at (-1,-0.7) [left] {$\cO_2$};
	\node (opO3) at (1,-0.7) [right] {$\cO_3$};
	\node at (0.1,0) [right] {$W$};	
	\draw [spinning] (opO2)-- (vertU);
	\draw [finite with arrow] (vertU)-- (vertD);
	\draw [spinning] (vertU)-- (opO3);
	\draw [finite with arrow] (vertD) to[out=150,in=180] (0,1.5) to[out=0,in=30] (vertD);
	\end{tikzpicture}}.
\ee
The tadpole on the right can be non-zero only if $W=\bullet$ is the representation of the identity operator $\mathbf{1}$. But then $m$ exists only if $\De_2=\De_3$ and $\rho_2=\rho_3$. In this case there exists a unique structure for both $n$ and $m$. We can erase the line for the trivial $W$, and the $U$-loop gives a factor of $\dim U$. We thus find
\be
\diagramEnvelope{\begin{tikzpicture}[anchor=base,baseline]
	\node (vertL) at (0,-0.08) [threept] {$a$};
	\node (vertR) at (2,-0.133) [threept] {$b$};
	\node (opO1) at (1,1) [above] {$U$};
	\node (opO2) at (-1,0) [left] {$\cO_2$};
	\node (opO3) at (3,0) [right] {$\cO_3$};
	\node at (1,0.1) [above] {$\cO_3'$};	
	\draw [finite with arrow] (vertL) to[out=90,in=90] (vertR);
	\draw [spinning] (opO2)-- (vertL);
	\draw [spinning] (vertL)-- (vertR);
	\draw [spinning] (vertR)-- (opO3);
	\end{tikzpicture}}=
	(\dim U)\left\{
	\begin{matrix}
		U & \cO_2 & \mathbf{1} \\
		\cO_3 & U^* & \cO_3'
	\end{matrix}
	\right\}^{ab}_{\uniq\uniq}
\diagramEnvelope{\begin{tikzpicture}[anchor=base,baseline]
		\node (opO2) at (0,0) [left] {$\cO_2$};
		\node (opO3) at (1,0) [right] {$\cO_3$};
		\draw [spinning] (opO2) -- (opO3);
\end{tikzpicture}}
\ee
We thus conclude
\be
	\begin{pmatrix}
		\cO_3\\
		\cO_3'\ U
	\end{pmatrix}^{ba} = (\dim U)\left\{
\begin{matrix}
U & \cO_3 & \mathbf{1} \\
\cO_3 & U^* & \cO_3'
\end{matrix}
\right\}^{ab}_{\uniq\uniq}.
\ee

The algebra~\eqref{eq:algebracrossing} also shows tells us how to compose the two-point operators of~\cite{Costa:2011dw}. Indeed, suppose we have a composition of two two-point operators, ignoring the operator labels,
\be
\diagramEnvelope{\begin{tikzpicture}[anchor=base,baseline]
	\node (opO1) at (0,-1) [left] {$\cO_1$};
	\node (opO1p) at (3,-1) [right] {$\cO_1'$};
	\node (opO2) at (0,1) [left] {$\cO_2$};
	\node (opO2p) at (3,1) [right] {$\cO_2'$};
	\node (vertUup) at (0.8,1) [threept] {};
	\node (vertUdn) at (0.8,-1) [threept] {};
	\node (vertVup) at (2.2,1) [threept] {};
	\node (vertVdn) at (2.2,-1) [threept] {};
	\node (U) at (0.8,0) [left] {$U$};
	\node (V) at (2.2,0) [right] {$V$};
	\draw [spinning] (opO2) -- (vertUup) -- (vertVup) -- (opO2p);
	\draw [spinning] (opO1) -- (vertUdn) -- (vertVdn) -- (opO1p);
	\draw [finite with arrow] (vertUdn) -- (vertUup);
	\draw [finite with arrow] (vertVdn) -- (vertVup);
	\end{tikzpicture}}.
\ee
We can apply~\eqref{eq:algebracrossing} at the top and at the bottom to find, schematically,
\be
\diagramEnvelope{\begin{tikzpicture}[anchor=base,baseline]
	\node (opO1) at (0,-1) [left] {$\cO_1$};
	\node (opO1p) at (3,-1) [right] {$\cO_1'$};
	\node (opO2) at (0,1) [left] {$\cO_2$};
	\node (opO2p) at (3,1) [right] {$\cO_2'$};
	\node (vertUup) at (0.8,1) [threept] {};
	\node (vertUdn) at (0.8,-1) [threept] {};
	\node (vertVup) at (2.2,1) [threept] {};
	\node (vertVdn) at (2.2,-1) [threept] {};
	\node (U) at (0.8,0) [left] {$U$};
	\node (V) at (2.2,0) [right] {$V$};
	\draw [spinning] (opO2) -- (vertUup) -- (vertVup) -- (opO2p);
	\draw [spinning] (opO1) -- (vertUdn) -- (vertVdn) -- (opO1p);
	\draw [finite with arrow] (vertUdn) -- (vertUup);
	\draw [finite with arrow] (vertVdn) -- (vertVup);
	\end{tikzpicture}}=\sum_{W,W',\ldots}\{\cdots\}^2
\diagramEnvelope{\begin{tikzpicture}[anchor=base,baseline]
	\node (opO1) at (0,-1) [left] {$\cO_1$};
	\node (opO1p) at (3,-1) [right] {$\cO_1'$};
	\node (opO2) at (0,1) [left] {$\cO_2$};
	\node (opO2p) at (3,1) [right] {$\cO_2'$};
	\node (vertWup) at (1.5,1) [threept] {};
	\node (vertWdn) at (1.5,-1) [threept] {};
	\node (vertWmidup) at (1.5,0.3) [threept] {};
	\node (vertWmiddn) at (1.5,-0.3) [threept] {};
	\node (U) at (1.2,0) [left] {$U$};
	\node (V) at (1.8,0) [right] {$V$};
	\node (W) at (1.6,-0.7) [right] {$W$};
	\node (Wp) at (1.6,0.7) [right] {$W'$};
	\draw [spinning] (opO2) -- (vertWup) -- (opO2p);
	\draw [spinning] (opO1) -- (vertWdn) -- (opO1p);
	\draw [finite with arrow] (vertWdn) -- (vertWmiddn);
	\draw [finite with arrow] (vertWmidup) -- (vertWup);	
	\draw [finite with arrow] (vertWmiddn) to[out=180,in=180] (vertWmidup);
	\draw [finite with arrow] (vertWmiddn) to[out=0,in=0] (vertWmidup);
	\end{tikzpicture}},
\ee
where $\{\cdots\}^2$ is a product of two $6j$ symbols. By Schur's lemma, the bubble diagram in the middle can be non-zero only if $W=W'$, in which case it is a scalar. This scalar can be determined from finite-dimensional $6j$ symbols. We thus arrive at
\be
\diagramEnvelope{\begin{tikzpicture}[anchor=base,baseline]
	\node (opO1) at (0,-1) [left] {$\cO_1$};
	\node (opO1p) at (3,-1) [right] {$\cO_1'$};
	\node (opO2) at (0,1) [left] {$\cO_2$};
	\node (opO2p) at (3,1) [right] {$\cO_2'$};
	\node (vertUup) at (0.8,1) [threept] {};
	\node (vertUdn) at (0.8,-1) [threept] {};
	\node (vertVup) at (2.2,1) [threept] {};
	\node (vertVdn) at (2.2,-1) [threept] {};
	\node (U) at (0.8,0) [left] {$U$};
	\node (V) at (2.2,0) [right] {$V$};
	\draw [spinning] (opO2) -- (vertUup) -- (vertVup) -- (opO2p);
	\draw [spinning] (opO1) -- (vertUdn) -- (vertVdn) -- (opO1p);
	\draw [finite with arrow] (vertUdn) -- (vertUup);
	\draw [finite with arrow] (vertVdn) -- (vertVup);
	\end{tikzpicture}}=\sum_{W,\ldots}\{\cdots\}^3
\diagramEnvelope{\begin{tikzpicture}[anchor=base,baseline]
	\node (opO1) at (0,-1) [left] {$\cO_1$};
	\node (opO1p) at (3,-1) [right] {$\cO_1'$};
	\node (opO2) at (0,1) [left] {$\cO_2$};
	\node (opO2p) at (3,1) [right] {$\cO_2'$};
	\node (vertWup) at (1.5,1) [threept] {};
	\node (vertWdn) at (1.5,-1) [threept] {};
	\node (W) at (1.6,0) [right] {$W$};
	\draw [spinning] (opO2) -- (vertWup) -- (opO2p);
	\draw [spinning] (opO1) -- (vertWdn) -- (opO1p);
	\draw [finite with arrow] (vertWdn) -- (vertWup);
	\end{tikzpicture}},
\ee
where $\{\cdots\}^3$ are some coefficients involving three $6j$ symbols, and the sum is over $W\in U\otimes V$.

\section{Seed blocks in 3d}
\label{app:3Dblocks}

\paragraph{Basis of four-point tensor structures.} For the four-point tensor structures we use the conformal frame structures
\be
	[q_1q_2q_3q_4]
\ee
that we introduced in section~\ref{sec:3D6j_threepoint}. It is analogous to the basis used in~\cite{Kravchuk:2016qvl}, but we make a different choice of the conformal frame,
\be
	x_1&=(0,0,0),\nn\\
	x_2&=(\frac{\bar z-z}{2},0,\frac{\bar z+z}{2}),\nn\\
	x_3&=(0,0,1),\nn\\
	x_4&=(0,0,+\infty).
\ee
The configuration used in section~\ref{sec:3D6j_threepoint} corresponds then to $z=\bar z$.

In terms of these structures, for parity-even four-point functions ($G^{++}_\text{seed}$ and $G^{--}_\text{seed}$) we use the basis
\be
G=
g_1(z,\bar z)\frac{[-\thalf,0,0,-\thalf]+[\thalf,0,0,\thalf]}{2}
+g_2(z,\bar z)\frac{[\thalf,0,0,-\thalf]+[-\thalf,0,0,\thalf]}{2},
\ee
and for parity-odd four-point functions ($G^{+-}_\text{seed}$ and $G^{-+}_\text{seed}$) we use the basis
\be
G=
g_1(z,\bar z)\frac{[-\thalf,0,0,-\thalf]-[\thalf,0,0,\thalf]}{2}
+g_2(z,\bar z)\frac{[\thalf,0,0,-\thalf]-[-\thalf,0,0,\thalf]}{2}.
\ee
We will now provide explicit expressions for $g^{\pm\pm}_i(z,\bar z)$.

\paragraph{Explicit expressions for $g^{\pm\pm}_i(z,\bar z)$.}
First we strip off some normalization factors,
\be\label{eq:3DexplicitPreBegin}
g^{++}_k(z,\bar z)&=\frac{i(-1)^{\ell-\half}}{\ell(\De-\ell-1)(\De-1)}(z \bar z)^{-\frac{\De_1+\De_2+\half}{2}}\mathfrak{D}^{++}_iG^{\alpha-\tfrac{1}{4},\beta-\tfrac{1}{4}}_{\De+\half,\ell-\half}(z,\bar z),\nn\\
g^{--}_k(z,\bar z)&=\frac{i(-1)^{\ell-\half}}{(\ell+\half)(\De+\ell)(\De-1)}(z \bar z)^{-\frac{\De_1+\De_2+\half}{2}}\mathfrak{D}^{--}_iG^{\alpha-\tfrac{1}{4},\beta-\tfrac{1}{4}}_{\De+\half,\ell+\half}(z,\bar z),\nn\\
g^{+-}_k(z,\bar z)&=\frac{i(-1)^{\ell-\half}}{\ell(\De-\ell-1)(\De-1)}(z \bar z)^{-\frac{\De_1+\De_2+\half}{2}}\mathfrak{D}^{+-}_iG^{\alpha-\tfrac{1}{4},\beta+\tfrac{1}{4}}_{\De+\half,\ell-\half}(z,\bar z),\nn\\
g^{-+}_k(z,\bar z)&=\frac{i(-1)^{\ell-\half}}{(\ell+\half)(\De+\ell)(\De-1)}(z \bar z)^{-\frac{\De_1+\De_2+\half}{2}}\mathfrak{D}^{-+}_iG^{\alpha-\tfrac{1}{4},\beta+\tfrac{1}{4}}_{\De+\half,\ell+\half}(z,\bar z).
\ee
Here, $\alpha=-(\De_1-\De_2)/2$ and $\beta=(\De_3-\De_4)/2$, where $\De_i$ are the dimensions of the external operators in~\eqref{eq:3Dfermion_scalar_4pt}. To write down the expressions for $\mathfrak{D}^{\pm\pm}_i$, we introduce the following operators,
\be
D_z &= z^2(1-z)\ptl_z^2-(\alpha'+\beta'+1)z^2\ptl_z-\alpha'\beta'z,\nn\\
d_z &= z\ptl_z,\nn\\
\nabla_z&=\frac{1}{z-\bar z}d_z(z-\bar z)=z\ptl_z+\frac{z}{z-\bar z},\nn\\
\widetilde d_z&=(1-z)d_z-\alpha' z,\nn\\
\widetilde \nabla_z&=(1-z)\nabla_z-(\alpha'-1)z,
\ee
as well as their conjugates which are obtained by exchanging $z$ and $\bar z$. The variables $\alpha'$ and $\beta'$ in the above formulas are equal to the parameters of the scalar conformal blocks $G^{\alpha',\beta'}_{\De',\ell'}(z,\bar z)$ on which the differential operators act in~\eqref{eq:3DexplicitPreBegin}.

The differential operators $\mathfrak{D}^{\pm\pm}_i$ are given by\footnote{Note again that in order to simplify these expressions we made use of the quadratic Casimir equation satisfied by the scalar conformal blocks.}
\be
\mathfrak{D}^{\pm\pm}_1=&
	d_{\bar z}D_z-d_z D_{\bar z}
	-(d_z-d_{\bar z})\frac{z\bar z}{2(z-\bar z)}\big(
		(1-z)\ptl_z-(1-\bar z)\ptl_{\bar z}
		\big)\nn\\
	&+a^{\pm\pm}(d_z-d_{\bar z})
	+b^{\pm\pm}(D_z-D_{\bar z}),\nn\\
\mathfrak{D}^{\pm\pm}_2=&
	\nabla_z D_{\bar z}+\nabla_{\bar z} D_{z}+
	(\nabla_z+\nabla_{\bar z})\frac{z\bar z}{2(z-\bar z)}\big(
		(1-z)\ptl_z-(1-\bar z)\ptl_{\bar z}
		\big)\nn\\
	&-a^{\pm\pm}(\nabla_z+\nabla_{\bar z})
	+c^{\pm\pm},
\ee
where coefficients $a^{\pm\pm},\,b^{\pm\pm}$ and $c^{\pm\pm}$ are given below and, additionally, \textit{in $\mathfrak{D}^{+-}$ and $\mathfrak{D}^{-+}$ the operators $d$ and $\nabla$ need to be replaced by $\widetilde{d}$ and $\widetilde{\nabla}$ respectively.} We have
\be
	a^{++}&=a^{+-}=\frac{(\De-\ell)(\De-\ell-3)}{4},\nn\\
	b^{++}&=b^{+-}=\frac{\De-\ell-3}{2},\nn\\
	c^{++}&=c^{+-}=\frac{(2\ell+1)(\De-\ell-3)(\De-\frac{3}{2})}{4},
\ee
and the coefficients for parity-odd left structure are obtained by replacing $\ell\to-\ell-1$,
\be	
	a^{--}&=a^{-+}=\frac{(\De+\ell+1)(\De+\ell-2)}{4},\nn\\
	b^{--}&=b^{-+}=\frac{\De+\ell-2}{2},\nn\\
	c^{--}&=c^{-+}=-\frac{(2\ell+1)(\De+\ell-2)(\De-\frac{3}{2})}{4}.
\ee

\paragraph{Normalization conventions}
Our normalization conventions are fixed by our choice of two-point functions~\eqref{eq:3Dtwopoint}, the scalar-fermion three-point functions~\eqref{eq:3Dspinorthreepoint} and the scalar three-point functions~\eqref{eq:3Dscalartheepoint}. These conventions agree with~\cite{Iliesiu:2015akf}. In particular, if the scalar blocks are normalized as
\be
	G^{\alpha,\beta}_{\De,\ell}(z,\bar z)\sim\frac{(-1)^\ell(1)_\ell}{(1/2)_\ell}(z\bar z)^{\De/2} P_\ell\p{\frac{z+\bar z}{2\sqrt{z\bar z}}},\quad z,\bar z\ll 1,
\ee
where $P_\ell$ are Legendre polynomials, then the resulting seed blocks $G^{\pm\pm}_\text{seed}$ are normalized as in~\cite{Iliesiu:2015akf} with their $c_\cO=1$. To obtain the blocks at other values of $c_\cO$, one should divide our formulas by $c_\cO$.

\paragraph{The coefficients $v_i$}

For $G^{++}$ we have
\be
v_1&=-\frac{i (-1)^{\ell -\frac{1}{2}} \left(-2 \Delta -2 \Delta _3-2 \Delta _4+2 \ell +9\right) \left(2 \Delta +2 \Delta _3-2 \Delta _4+2 \ell -1\right) \left(-2 \Delta -2 \Delta _3+2 \Delta _4+2 \ell +1\right)}{64 (\Delta -1) \left(2 \Delta _3-3\right) \left(2 \Delta _4-3\right) \ell  (-\Delta +\ell +1)},\nn\\
v_2&=-\frac{i (-1)^{\ell -\frac{1}{2}} \left(-2 \Delta -2 \Delta _3-2 \Delta _4+2 \ell +9\right) \left(-2 \Delta +2 \Delta _3-2 \Delta _4+2 \ell +3\right) \left(-2 \Delta -2 \Delta _3+2 \Delta _4+2 \ell +3\right)}{128 (\Delta -1) \left(\Delta _3-1\right) \left(2 \Delta _3-3\right) \ell  (-\Delta +\ell +1)},\nn\\
v_3&=-\frac{i (-1)^{\ell -\frac{1}{2}} \left(-2 \Delta +2 \Delta _3-2 \Delta _4+2 \ell +3\right)}{64 (\Delta -1) \left(\Delta _3-2\right){}^2 \left(2 \Delta _3-3\right) \left(2 \Delta _4-3\right) \ell  (-\Delta +\ell +1)},\nn\\
v_4&=\frac{i (-1)^{\ell -\frac{1}{2}} \left(-2 \Delta -2 \Delta _3-2 \Delta _4+2 \ell +9\right) \left(-2 \Delta +2 \Delta _3-2 \Delta _4+2 \ell +3\right) \left(-2 \Delta +2 \Delta _3+2 \Delta _4+2 \ell -3\right)}{128 (\Delta -1) \left(\Delta _3-2\right) \left(2 \Delta _3-3\right) \ell  (-\Delta +\ell +1)}.
\ee
For $G^{--}$ we have
\be
v_1&=-\frac{i (-1)^{\ell -\frac{1}{2}} \left(2 \Delta +2 \Delta _3-2 \Delta _4+2 \ell +1\right) \left(-2 \Delta -2 \Delta _3+2 \Delta _4+2 \ell +3\right) \left(2 \Delta +2 \Delta _3+2 \Delta _4+2 \ell -7\right)}{32 (\Delta -1) \left(2 \Delta _3-3\right) \left(2 \Delta _4-3\right) (2 \ell +1) (\Delta +\ell )},\nn\\
v_2&=-\frac{i (-1)^{\ell -\frac{1}{2}} \left(2 \Delta +2 \Delta _3-2 \Delta _4+2 \ell -1\right) \left(2 \Delta -2 \Delta _3+2 \Delta _4+2 \ell -1\right) \left(2 \Delta +2 \Delta _3+2 \Delta _4+2 \ell -7\right)}{64 (\Delta -1) \left(\Delta _3-1\right) \left(2 \Delta _3-3\right) (2 \ell +1) (\Delta +\ell )},\nn\\
v_3&=-\frac{i (-1)^{\ell -\frac{1}{2}} \left(2 \Delta -2 \Delta _3+2 \Delta _4+2 \ell -1\right)}{32 (\Delta -1) \left(\Delta _3-2\right){}^2 \left(2 \Delta _3-3\right) \left(2 \Delta _4-3\right) (2 \ell +1) (\Delta +\ell )},\nn\\
v_4&=\frac{i (-1)^{\ell -\frac{1}{2}} \left(2 \Delta -2 \Delta _3-2 \Delta _4+2 \ell +5\right) \left(2 \Delta -2 \Delta _3+2 \Delta _4+2 \ell -1\right) \left(2 \Delta +2 \Delta _3+2 \Delta _4+2 \ell -7\right)}{64 (\Delta -1) \left(\Delta _3-2\right) \left(2 \Delta _3-3\right) (2 \ell +1) (\Delta +\ell )}.
\ee
For $G^{+-}$ we have
\be
v_1&=\frac{i (-1)^{\ell -\frac{1}{2}} \left(-2 \Delta -2 \Delta _3-2 \Delta _4+2 \ell +9\right) \left(-2 \Delta -2 \Delta _3+2 \Delta _4+2 \ell +3\right)}{64 (\Delta -1) \left(\Delta _3-1\right) \left(2 \Delta _3-3\right) \left(2 \Delta _4-3\right) \ell  (-\Delta +\ell +1)},\nn\\
v_2&=-\frac{i (-1)^{\ell -\frac{1}{2}} \left(-2 \Delta -2 \Delta _3-2 \Delta _4+2 \ell +7\right) \left(2 \Delta -2 \Delta _3-2 \Delta _4+2 \ell +3\right) }{128 (\Delta -1) \left(2 \Delta _3-3\right) \ell  (-\Delta +\ell +1)}\times\nn\\
&\quad\qquad\times \left(-2 \Delta -2 \Delta _3+2 \Delta _4+2 \ell +3\right) \left(2 \Delta +2 \Delta _3+2 \Delta _4+2 \ell -7\right),\nn\\
v_3&=-\frac{i (-1)^{\ell -\frac{1}{2}} \left(-2 \Delta +2 \Delta _3-2 \Delta _4+2 \ell +3\right) \left(-2 \Delta -2 \Delta _3+2 \Delta _4+2 \ell +3\right)}{64 (\Delta -1) \left(\Delta _3-2\right) \left(2 \Delta _3-3\right) \left(2 \Delta _4-3\right) \ell  (-\Delta +\ell +1)},\nn\\
v_4&=-\frac{i (-1)^{\ell -\frac{1}{2}} \left(-2 \Delta +2 \Delta _3-2 \Delta _4+2 \ell +1\right) \left(2 \Delta -2 \Delta _3+2 \Delta _4+2 \ell -1\right)}{128 (\Delta -1) \left(\Delta _3-2\right){}^2 \left(2 \Delta _3-3\right) \ell  (-\Delta +\ell +1)}.
\ee
For $G^{-+}$ we have
\be
v_1&=-\frac{i (-1)^{\ell -\frac{1}{2}} \left(2 \Delta +2 \Delta _3-2 \Delta _4+2 \ell -1\right) \left(2 \Delta +2 \Delta _3+2 \Delta _4+2 \ell -7\right)}{32 (\Delta -1) \left(\Delta _3-1\right) \left(2 \Delta _3-3\right) \left(2 \Delta _4-3\right) (2 \ell +1) (\Delta +\ell )},\nn\\
v_2&=\frac{i (-1)^{\ell -\frac{1}{2}} \left(-2 \Delta -2 \Delta _3-2 \Delta _4+2 \ell +9\right) \left(2 \Delta +2 \Delta _3-2 \Delta _4+2 \ell -1\right) }{64 (\Delta -1) \left(2 \Delta _3-3\right) (2 \ell +1) (\Delta +\ell )}\times\nn\\
&\quad\qquad\times\left(-2 \Delta +2 \Delta _3+2 \Delta _4+2 \ell -1\right) \left(2 \Delta +2 \Delta _3+2 \Delta _4+2 \ell -5\right),\nn\\
v_3&=\frac{i (-1)^{\ell -\frac{1}{2}} \left(2 \Delta +2 \Delta _3-2 \Delta _4+2 \ell -1\right) \left(2 \Delta -2 \Delta _3+2 \Delta _4+2 \ell -1\right)}{32 (\Delta -1) \left(\Delta _3-2\right) \left(2 \Delta _3-3\right) \left(2 \Delta _4-3\right) (2 \ell +1) (\Delta +\ell )},\nn\\
v_4&=\frac{i (-1)^{\ell -\frac{1}{2}} \left(-2 \Delta +2 \Delta _3-2 \Delta _4+2 \ell +3\right) \left(2 \Delta -2 \Delta _3+2 \Delta _4+2 \ell +1\right)}{64 (\Delta -1) \left(\Delta _3-2\right){}^2 \left(2 \Delta _3-3\right) (2 \ell +1) (\Delta +\ell )}.
\ee

\section{Dual seed blocks in 4d}
\label{app:dual_seeds}
In this appendix we provide the final expression for the dual seed conformal blocks recursion relation omitting all the derivations. All the quantities below carry a bar to distinguish them from their analogous in the seed case.

By performing the calculation completely analogous to the one in section~\ref{sec:seed4d}, we find that the dual seed conformal blocks obey the following recursion relation
\begin{align}\label{eq:dual_seed_CPWs_recursion_relation}
\overline W^{(p)}_{\Delta,\,\ell;\;\Delta_1,\,\Delta_2,\,\Delta_3,\,\Delta_4} &=\nn\\\nn
\overline{\mathcal{A}}^{-1}\Bigg( &\bar v_1\,
(\Dmmz{1}\cdot\Dmzp{4})\,(\Dmpz{1}\cdot \Dppz{2})\;\overline W^{(p-1)}_{\Delta-\tfrac{1}{2},\,\ell;\;\Delta_1+1,\,\Delta_2-\tfrac{1}{2},\,\Delta_3,\,\Delta_4+\tfrac{1}{2}}\\\nn
+&\bar v_2\,
(\Dmmz{1}\cdot\Dmzp{4})\,(\Dppz{1}\cdot \Dmpz{2})\;\overline W^{(p-1)}_{\Delta-\tfrac{1}{2},\,\ell;\;\Delta_1,\,\Delta_2+\tfrac{1}{2},\,\Delta_3,\,\Delta_4+\tfrac{1}{2}}\\\nn
+&\bar v_3\,
(\Dpzm{1}\cdot\Dmzp{4})\,(\Dpzp{1}\cdot \Dppz{2})\;\overline W^{(p-1)}_{\Delta-\tfrac{1}{2},\,\ell;\;\Delta_1-1,\,\Delta_2-\tfrac{1}{2},\,\Delta_3,\,\Delta_4+\tfrac{1}{2}}\\
+&\bar v_4\,
(\Dpzm{1}\cdot\Dmzp{4})\,(\Dmzp{1}\cdot \Dmpz{2})\;\overline W^{(p-1)}_{\Delta-\tfrac{1}{2},\,\ell;\;\Delta_1,\,\Delta_2+\tfrac{1}{2},\,\Delta_3,\,\Delta_4+\tfrac{1}{2}}\Bigg),
\end{align}
where the coefficients are\footnote{Here $\overline\cA$ is not the $6j$ symbol analogous to $\cA$, but simply an overall coefficient.}
\begin{equation}
\overline{\mathcal{A}}=-\frac{i\,(\ell+p) (\Delta +\Delta_3-\Delta_4+\ell+p-2)}{2 \Delta +2 \ell+p-2}
\end{equation}
and
\begin{align}
\bar v_1 &=   \frac{(\Delta -\Delta_1-\Delta_2+\ell+p+2) (-\Delta -\Delta_1+\Delta_2+\ell+p+2)}{2 (\Delta_1-2) (2 \Delta +p-4) (2 \Delta_2+p-4)}  ,\nn\\
\bar v_2 &=  -\frac{(\Delta -\Delta_1-\Delta_2+\ell+p+2) (\Delta -\Delta_1+\Delta_2+\ell+2 p-2)}{4 (\Delta_1-2) (\Delta_1-1) (2 \Delta +p-4)}  ,\nn\\
\bar v_3 &=  -\frac{1}{2 (\Delta_1-3) (\Delta_1-2)^2 (2 \Delta +p-4) (2 \Delta_2+p-4)} ,\nn\\
\bar v_4 &=  -\frac{(\Delta -\Delta_1-\Delta_2+\ell+p+2) (\Delta +\Delta_1+\Delta_2+\ell+2 p-6)}{4 (\Delta_1-3) (\Delta_1-2) (2 \Delta +p-4)} .
\end{align}
Analogously to the primal seed case, we replace one of the conformal blocks on the right hand side of~\eqref{eq:dual_seed_CPWs_recursion_relation} by using the dimension-shifting operator
\begin{equation}\label{eq:relation_shifted_dimensions_dual_CPWs}
\overline W^{(p-1)}_{\Delta-\tfrac{1}{2},\,\ell;\;\Delta_1+1,\,\Delta_2-\tfrac{1}{2},\,\Delta_3,\,\Delta_4+\tfrac{1}{2}}=
\overline{\mathcal{E}}^{-1}\,(\Dpmz{1}\cdot \Dmmz{2})(\Dppz{1}\cdot \Dmpz{2})\,
\overline W^{(p-1)}_{\Delta-\tfrac{1}{2},\,\ell;\;\Delta_1,\,\Delta_2+\tfrac{1}{2},\,\Delta_3,\,\Delta_4+\tfrac{1}{2}},
\end{equation}
where
\begin{equation}
\overline{\mathcal{E}}\equiv
(p+1)(\Delta_1-2) (\Delta_1-1)(\Delta +\Delta_1-\Delta_2+l+p-2) (-\Delta -\Delta_1+\Delta_2+l+p+2).
\end{equation}

\paragraph{Decomposition into components}
Plugging the relation~\eqref{eq:relation_shifted_dimensions_dual_CPWs} in~\eqref{eq:dual_seed_CPWs_recursion_relation}, stripping off the kinematic factor
and decomposing this relation into four-point tensor structures according to~\eqref{eq:dual_seed_blocks_4D} one obtains a recursion relation for the components of seed blocks of the form analogous to~\eqref{eq:seed4dfinal}
\begin{equation}
\overline H_e^{(p)}(z,\bar z)=-\frac{\overline{\mathcal{A}}^{\prime-1}}{z-\bar z}\,
\left(
\mathit{D}_0\; \overline H_{e}^{(p-1)}(z,\bar z)
-2\mathit{D}_1\; \overline H_{e-1}^{(p-1)}(z,\bar z)
+4\cParam{p-1}{e-2}z\bar z\mathit{D}_2\; \overline H_{e-2}^{(p-1)}(z,\bar z)
\right),
\end{equation}
where the blocks in the l.h.s depend on $[\Delta,\,\ell;\;\Delta_1,\,\Delta_2,\,\Delta_3,\,\Delta_4]$ and the blocks in the r.h.s.\ depend on
$[\Delta-\tfrac{1}{2},\,\ell;\;\Delta_1,\,\Delta_2+\tfrac{1}{2},\,\Delta_3,\,\Delta_4+\tfrac{1}{2}]$.

The overall coefficient is
\begin{equation}
\overline{\mathcal{A}}^{\prime}\equiv
-(\Delta + \frac{p}{2} - 2) (\Delta +\Delta_1 - \Delta_2 + l + p - 2)\,
\overline{\mathcal{A}}.
\end{equation}
The differential operators $\mathit{D}_i$ are given by the expression \eqref{eq:d_0}-\eqref{eq:d_2} with the parameter $k$ replaced by $\bar k$
\begin{equation}
\bar k\equiv\frac{\Delta+\ell}{2}+\frac{3p}{4}.
\end{equation}

\section{Operators $\cH_k$}
\label{app:Hkoperators}

First, let us define normalized versions of operators~\eqref{eq:ourHkoperatorsPre},
\be\label{eq:ourHkoperators}
\hat\cD_{13}&=\frac{\cD^{-0}_1\cdot \cD^{+0}_3}{(\De_3-1)(d-\De_3-2)},\nn\\
\hat\cD_{24}&=\frac{\cD^{+0}_2\cdot \cD^{-0}_4}{(\De_2-1)(d-\De_2-2)},\nn\\
\hat\cD_{23}&=\frac{\cD^{+0}_2\cdot \cD^{+0}_3}{(\De_3-1)(d-\De_3-2)(\De_2-1)(d-\De_2-2)}.
\ee
In terms of these, the operators $\cH_k$ have the following expressions,\footnote{Here and below in this section we sometimes abuse the notation by using the same symbols for the embedding-space differential operators and their action on $F_{\lambda_1\lambda_2}$.}
\be
\mathcal{H}_1&=\frac{\hat\cD_{13}-\hat\cD_{24}}{\De_{12}^+-\De_{34}^+}+\frac{1}{4}(\De_{12}^++\De_{34}^+-2\varepsilon) (x\bar x)^{-\half},\nn\\
\mathcal{H}_2&=\frac{\hat\cD_{13}+\hat\cD_{24}}{2}+\frac{\De_{12}^++\De_{34}^+}{2}\mathcal{H}_1-\frac{\De_{12}^+(\De_{12}^+-2\varepsilon)+\De_{34}^+(\De_{34}^+-2\varepsilon)}{8}(x\bar x)^{-\half},\nn\\
\mathcal{H}_3&=\frac{2\hat\cD_{23}}{\De_{12}^++\De_{34}^+-4\varepsilon-2}-(\De_{12}^++\De_{34}^+-2(\varepsilon-1))\cH_2+(2c_2+\De_{12}^+\De_{34}^+)\cH_1+\kappa_0 (x\bar x)^{-\half},
\ee
where we defined\footnote{Notice that in~\cite{DO3} $\varepsilon$ is defined as here, whereas in the earlier work~\cite{DO2} the definition $\varepsilon=d-2$ was used.}
\be
\Delta_{ij}^+=\Delta_i+\Delta_j,\quad \varepsilon=\frac{d-2}{2},
\ee
and the Casimir eigenvalues and the coefficient $\kappa_0$ are given by
\be
c_2&=\lambda_1(\lambda_1-1)+\lambda_2(\lambda_2-2\epsilon-1),\nn\\
c_4&=(\lambda_1-\lambda_2)(\lambda_1-\lambda_2+2\varepsilon)(\lambda_1+\lambda_2-1)(\lambda_1+\lambda_2-1-2\varepsilon),\nn\\
\kappa_0&=\frac{(\De_{12}^+-2\varepsilon)(\De_{34}^+-2\varepsilon)(\De_{12}^+\De_{34}^++4c_2)-4(c_4-c_2(c_2+2\varepsilon))}{4(\De_{12}^++\De_{34}^+-4\varepsilon-2)}.
\ee
Note that the identity for the operator $\cH_3$ is valid up to quadratic and quartic Casimir equations (i.e.\ only when acting on scalar conformal blocks $F_{\lambda_1\lambda_2}(a,b)$).

\bibliographystyle{JHEP}
\bibliography{refs}

\end{document}